\documentclass[10pt]{article}
 
\usepackage{amsmath}
\usepackage{amsfonts}
\usepackage{array}
\usepackage{enumerate}
\usepackage{graphicx}
\usepackage{amssymb}
\usepackage{amsthm}
\usepackage[version=4]{mhchem}
\usepackage{xcolor}
\usepackage{chemarr}
\usepackage{mathtools}
\usepackage{empheq}
\usepackage[margin=1in]{geometry}
\usepackage{soul}
\usepackage[numbers,square,sort&compress]{natbib}
\newtheorem{proposition}{Proposition}

\newtheorem{definition}{Definition}
\newtheorem{remark}{Remark}
\newtheorem{example}{Example}
\newcommand{\norm}[2][\relax]{\ifx#1\relax \ensuremath{\left\Vert#2\right\Vert} \else \ensuremath{\left\Vert#2\right\Vert_{#1}}\fi}
\newcommand\smallO{
  \mathchoice
    {{\scriptstyle\mathcal{O}}}% \displaystyle
    {{\scriptstyle\mathcal{O}}}% \textstyle
    {{\scriptscriptstyle\mathcal{O}}}% \scriptstyle
    {\scalebox{.7}{$\scriptscriptstyle\mathcal{O}$}}%\scriptscriptstyle
  }
\begin{document}

\title{Timescale disparity and the reduction of the chemical master equation: A geometric approach via the linear noise approximation}

\author{Justin Eilertsen\\
            Mathematical Reviews\\ American Mathematical Society\\
            416 4th Street\\Ann Arbor, MI, 48103\\
            email: {\tt jse@ams.org}\\\\
        Wylie Stroberg\\
          Department of Mechanical Engineering \\ 
          Department of Biomedical Engineering \\
          University of Alberta, Edmonton, Alberta, Canada\\
          email: stroberg@ualberta.ca}

%\date{}

\maketitle
\begin{abstract}
The chemical master equation dictates the stochastic behavior of chemical reaction networks in small volumes that are well-mixed and homogeneous. Although the Gillespie algorithm is capable of generating exact realizations of the master equation, it is often computationally expensive, especially when the network exhibits disparate reaction rates and timescales. Consequently, model reduction techniques are frequently employed to reduce the computational demands of the stochastic simulation algorithm while preserving accurate timecourse statistics. Often, the reduction of the master equation is facilitated through the adaptation of deterministic techniques from singular perturbation theory, resulting in homologous -- yet fundamentally heuristic -- reduced master equations. Such heuristic reductions are often accurate when applied to linear reaction networks but inaccurate when applied to nonlinear reaction networks. The failure of heuristic reductions is not fully understood, and it remains unclear when and why the heuristic reduction of the master equation will succeed. In this work, we take a significant step towards bridging this divide by proving that every first-order reaction network admits an accurate heuristic reduction of its associated master equation provided a specific geometric criterion holds. We also discuss the implications of this result as it pertains to nonlinear reaction networks. Specifically, we explain, through the lens of geometric perturbation theory, why the heuristically reduced CME of the nonlinear Michaelis-Menten reaction network loses precision when substrate concentrations are moderate.
\end{abstract}

%%%%%%%%%%
\section{Introduction: Motivation and Problem Statement}

Chemical reaction networks are typically modeled with an ordinary differential equation (ODE) system based on the law of mass action. However, mass action ODE models are generally nonlinear, rendering their analysis much more difficult and thereby obfuscating the origin of certain kinetic relationships.

One way to lessen the complexity of a reaction network is to look for (or derive) an approximation that limits the number of dependent variables without compromising the qualitative features of the network's timecourse dynamics. Such reduced models are attainable when the rates of the elementary reactions that comprise the network are disparate. In such scenarios, the network will evolve on multiple fast and slow timescales. Mathematically, this implies that the eigenvalues of an appropriate Jacobian matrix will be widely separated and exhibit a {\it spectral gap}.

The existence of a spectral gap implies the existence of an invariant submanifold that influences the slow dynamics. Center manifold theory~\cite{RobertsBook}, as well as singular perturbation theory~\cite{Lapuz,Jones1995}, capitalize on the existence of invariant manifolds and provide a rigorous framework through which the flow on the invariant manifold can be approximated. The advantage offered by center manifold and singular perturbation techniques is that longtime dynamics are approximated by projecting the system onto the lower-dimensional and invariant submanifold, resulting in a reduction in the number of variables (coordinates). The resulting low-dimensional system is often more amenable to analysis than the mass action model and is sometimes even linear. 

As a didactic example, consider the reaction discussed in~\citet{Janssen1989}
\begin{align}
\ce{$\bar{X}$ & <=>[$\varepsilon k_1$][$k_{2}$] $\bar{Y}$ ->[$k_3$] $\bar{P}$}. \label{reac1}
\end{align}

The mass action system that describes the temporal evolution of $x,y$ and $p$ (the respective concentrations of $\bar{X},\bar{Y}$ and $P$) is 
\begin{subequations}\label{sys1}
\begin{align}
\dot{x} &= -\varepsilon k_1 x + k_2y\label{xdot}\\
\dot{y} &= \;\;\;\varepsilon k_1x - k_2y - k_3y,\label{ydot}\\
\dot{p}& = \;\;\;k_3y,\label{zdot}
\end{align}
\end{subequations}
where ``$\dot{\phantom{x}}$" denotes differentiation with respect to time, and $k_1,k_2$ and $k_3$ are rate constants. The parameter $\varepsilon$ is intended to reflect the notion that $k_1$ is {\it small}, i.e., $k_1\ll \min\{k_2,k_3\}$. Furthermore, we need not be concerned with \eqref{zdot}, as it decouples from \eqref{xdot} and \eqref{ydot}. Although \eqref{sys1} is linear and admits closed-form solutions, the Jacobian evaluated at the stationary point that coincides with the origin $(x,y)=(0,0)$ reveals a spectral gap, and therefore phase-plane trajectories will ultimately approach the origin in the direction of the slow eigenvector. In this example, the slow invariant submanifold {\it is} the slow eigenvector, and the flow on the slow eigenvector can be approximated via singular perturbation theory. Omitting the details at this point, the singular perturbation reduction of \eqref{sys1} is
\begin{subequations}
\begin{align}
\dot{x} &= -\cfrac{\varepsilon k_1k_3}{k_2+k_3}\cdot x\label{rxdot}=-\dot{p},\\
\dot{y} &=0.
\end{align}
\end{subequations}
Simply stated, the differential equation \eqref{rxdot} approximates the dynamics on the slow eigenvector to leading order in $\varepsilon$. While reducing a linear system to a lower-dimensional linear system via projection onto an invariant subspace may seem redundant, the real power of center manifold (and by extension singular perturbation) theory becomes apparent when the reaction network is nonlinear.

In deterministic systems, timescale disparity may be inherent due to preexisting thermodynamic constraints, or it can be manufactured by limiting the concentration of certain chemical species within an experimental assay~\cite{Segel1988,Segel1989}. But what about in vivo reactions that occur within cells? In extremely small volumes, or when molecular copy numbers are finite, the effect of intrinsic noise cannot be ignored, and therefore we must rely on stochastic models. In an ideal scenario, in which the chemical species of a network are well-mixed so that spatial heterogeneity is negligible, the chemical master equation is the appropriate model to employ. Thus, as we reduce the size of the chemical system, we move from continuous time/continuous space (mass action equations) to continuous time/discrete space (chemical master equations). The solution to the master equation gives the probability distribution, $Pr(\cdot,t)$, over the discrete state space as a function of time.  

The master equation associated with \eqref{sys1} is 
\begin{equation}\label{CME1}
\cfrac{\partial}{\partial t}Pr(n,m,q,t) = \left[\varepsilon k_1(E_n^{+1}E_m^{-1}-1)n + k_2(E_n^{-1}E_m^{+1}-1)m+ k_3(E_m^{+1}-1)m\right]Pr(n,m,q,t)
\end{equation}
where $n$ denotes the number of $\bar{X}$ molecules, $m$ the number of $\bar{Y}$ molecules, $q$ the number of $P$ molecules, and $E_{(\cdot)}^{\pm j}$ are step operators,
\begin{equation*}
E_n^{\pm j}f(n,m) = f(n\pm j,m), \quad E_m^{\pm j}f(n,m) = f(n,m\pm j).
\end{equation*}
If we were to solve \eqref{CME1}, we would obtain the probability of finding the system with ``$n$" $\bar{X}$ molecules, ``$m$" $\bar{Y}$ molecules and ``$q$'' $P$ molecules at time $t$. However, despite the fact that the network \eqref{sys1} is linear, obtaining the solution to the master equation \eqref{CME1} is intractable. Moreover, while the stochastic simulation or ``Gillespie" algorithm~\cite{GILLESPIECME} is capable of generating exact stochastic timecourse realization of CMEs, the algorithm can be computationally expensive when disparate timescales are present~\cite{TURNER200,Sanft}. This raises the question of whether the CME is reducible in a way that is homologous to deterministic reduction \eqref{rxdot}.

In this specific example, the answer is affirmative: the reduced master equation
\begin{equation}\label{rCME1}
\cfrac{\partial}{\partial t}Pr(n,t) = \cfrac{\varepsilon k_1k_3}{k_2+k_3}(E_n^{+1}-1)nPr(n,t)
\end{equation}
accurately approximates the stochastic timecourse dynamics of $X$ obtained from the complete master equation \eqref{CME1}, and its accuracy was established in the earlier work of~\citet{Janssen1989}; see {\sc figure} \ref{FIG1} for numerical confirmation.
\begin{figure}[bth!]
    \centering
    \includegraphics[scale=0.50]{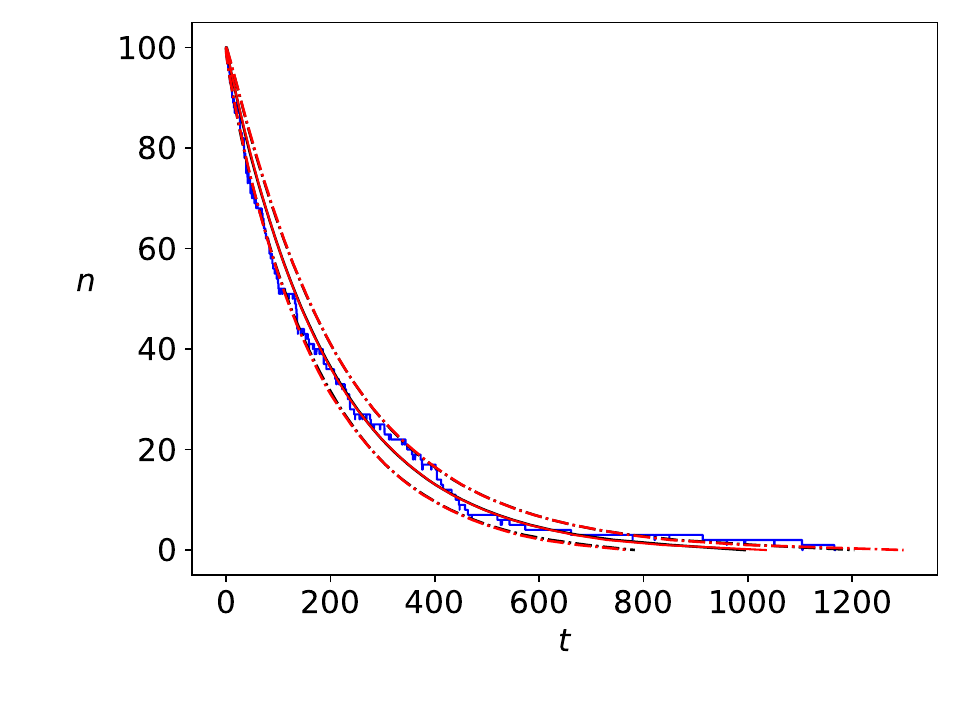}
    \includegraphics[scale=0.50]{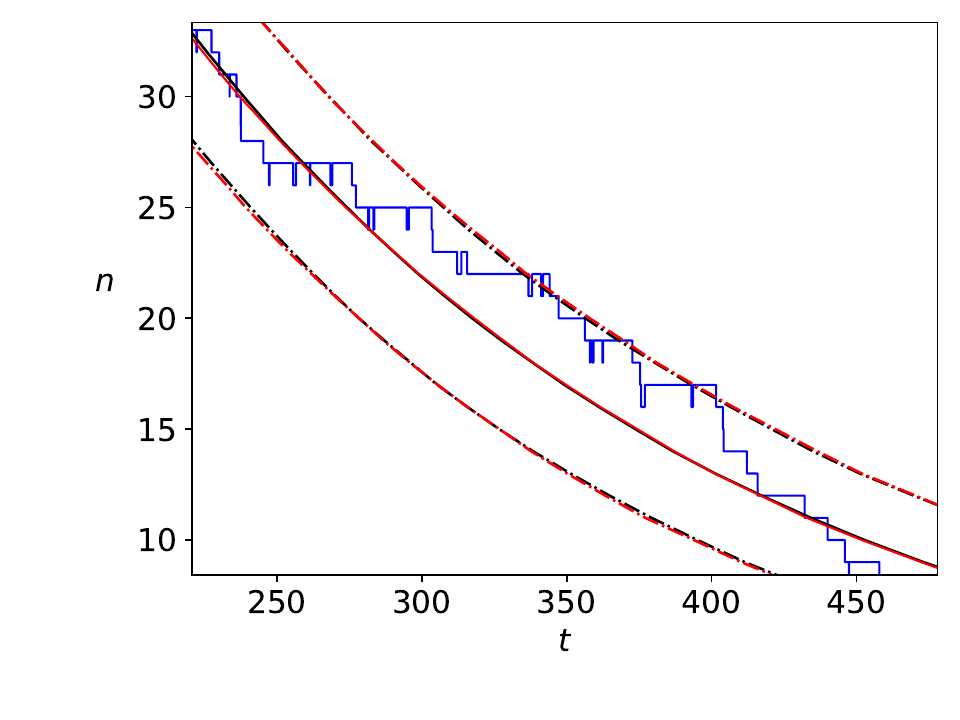}
    \caption{\textbf{The heuristically reduced CME \eqref{rCME1} provides an excellent approximation to the first two moments of $n$, the number of $\bar{X}$ molecules. } In all simulations $k_1=0.01,k_2=1.0$, and $k_3=1.0$ with $n_0=100$, $m_0=q_0=0$, where $n_0,m_0,q_0$ are the initial numbers of $\bar{X}$, $\bar{Y}$ and $P$ molecules. {\sc left}: The solid red curve is the mean computed from $1000$ simulations of the Gillespie algorithm applied to the reduced CME \eqref{rCME1}, and the dashed/dotted red curves are the mean $\pm$ one standard deviation. The solid black line is the mean of $\bar{X}$ computed from $1000$ simulations of the full CME \eqref{CME1} via the Gillespie algorithm, and the dashed/dotted lines are the mean $\pm$ one standard deviation. A single stochastic timecourse trajectory for $\bar{X}$ computed from the full CME via the Gillespie algorithm is presented in blue.  {\sc right}: A close-up of the left panel. Note that the approximated mean and standard deviation for $\bar{X}$ obtained from the reduced CME \eqref{rCME1} is virtually indistinguishable from those computed from the full CME \eqref{CME1}.}
    \label{FIG1}
\end{figure}

The interpretation of the reduced CME \eqref{rCME1} is as follows: given that there are ``$n$" $\bar{X}$ molecules at time $t$, the probability that a single $P$ molecule forms within the infinitesimal window, $[t,t+{\rm d}t)$ is
\begin{equation*}
Pr(n-1,t+{\rm d}t|n,t) =\cfrac{\varepsilon k_1k_3}{k_2+k_3} \;{\rm d}t,
\end{equation*}
and, consequently, we end up with a reduced reaction network,
\begin{align}
\ce{$\bar{X}$ ->[$\kappa$] $P$}, \quad \kappa := \cfrac{\varepsilon k_1k_3}{k_2+k_3}.\label{reac2}
\end{align}
Based on \eqref{reac2}, it is easy to speculate that
\begin{equation}
\cfrac{\partial}{\partial t}Pr(q,t) = \cfrac{\varepsilon k_1k_3}{k_2+k_3}(E_q^{-1}-1)qPr(q,t)
\end{equation}
also accurately estimates the first and second moments of $q$, the number of $P$ molecules. This speculation turns out to be true and is easy to verify numerically; see {\sc figure} \ref{FIG2}.

Observe that the reduced CME \eqref{rCME1} is the exact stochastic analog of the deterministic reduction \eqref{rxdot}. This type of reduction is generally referred to as a heuristic reduction because it can be very difficult, if not impossible, to rigorously derive, especially when the reaction network is nonlinear. However, the attractive advantage it provides is that {\it if} one has an a priori notion of when the heuristic reduction is accurate, then the reduced form of the CME is obtained directly from the analysis of the deterministic mass action equations, thereby circumnavigating the need to analyze the more complicated stochastic model. 

What is unclear from the literature is {\it when} and {\it why} heuristic reductions are reliable, as they can fail; see~\cite{GrimaBreakdown,Agarwal2012,Holehouse} for notable examples. Currently, there is no theorem that clearly outlines when we can expect the heuristic reduction of the CME to accurately and reliably approximate the first and second moments of the full CME. One can speculate, for example, that the accuracy of heuristic reductions is attributable to the linearity of the underlying reaction network. However, this does not explain the reported success of heuristic reductions applied to nonlinear reaction networks~\cite{Rao2003,Mastny2007,Tyson2008,Sanft}. Moreover, it is easy to construct a counterexample that (partially) debunks this hypothesis. Consider the reaction network \eqref{reac1} but this time with small $k_3$ instead of small $k_1$ (i.e. $k_3\mapsto \varepsilon k_3$ instead of $k_1 \mapsto \varepsilon k_1$). The reduced mass action equations are 
\begin{subequations}
\begin{align}
\dot{x} &= -\cfrac{\varepsilon k_3k_1}{k_1+k_2}\cdot x,\label{rxdot2}\\
\dot{y} &= -\cfrac{\varepsilon k_3k_1}{k_1+k_2}\cdot y. \label{rydot2}
\end{align}
\end{subequations}
but in this case the heuristic reduction of the CME,
\begin{equation}\label{rCME2}
\cfrac{\partial}{\partial t}Pr(n,t) = \cfrac{\varepsilon k_3k_1}{k_1+k_2}(E_n^{+1}-1)nPr(n,t)
\end{equation}
does {\it not} provide accurate stochastic timecourse approximations for $n$ (this is easily verified numerically, and we address this in the Appendix) obtained from the corresponding full CME
\begin{equation}\label{CME2}
\cfrac{\partial}{\partial t}Pr(n,m,q,t) = \left[k_1(E_n^{+1}E_m^{-1}-1)n + k_2(E_n^{-1}E_m^{+1}-1)m+ \varepsilon k_3(E_m^{+1}-1)m\right]Pr(n,m,q,t).
\end{equation}
On the other hand, it is straightforward to verify (numerically) that 
\begin{equation}\label{rCME3}
\cfrac{\partial}{\partial t}Pr(q,t) = \cfrac{\varepsilon k_1k_3}{k_1+k_2}(E_q^{-1}-1)qPr(q,t)
\end{equation}
provides an excellent approximation to the first two moments of $Pr(q,t)$; see {\sc figure} \ref{FIG2}.
\begin{figure}[htb!]
    \centering
    \includegraphics[scale=0.50]{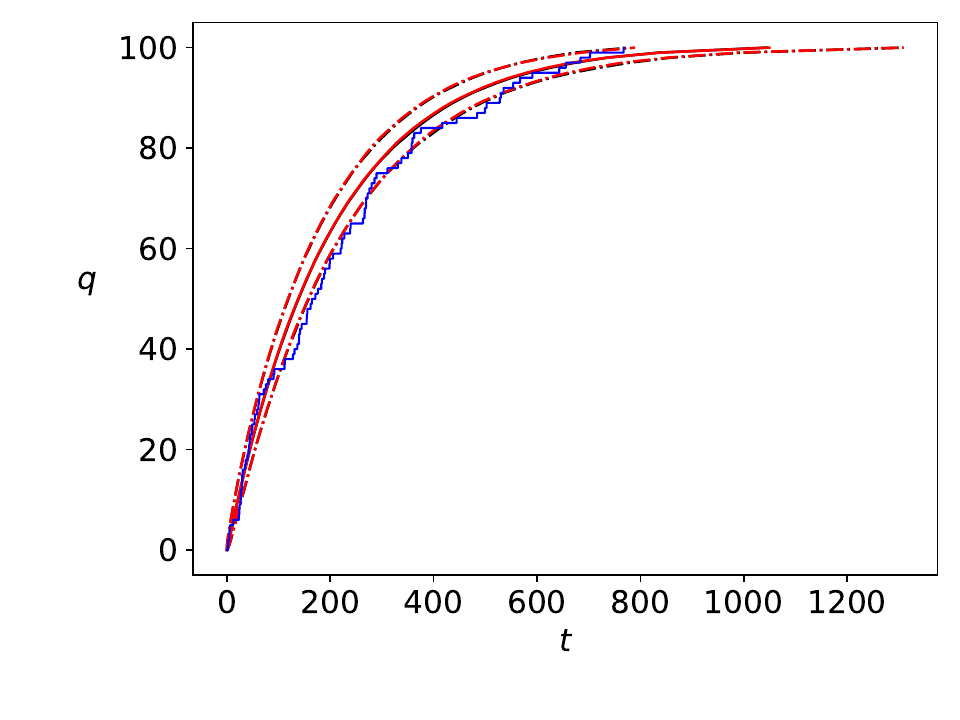}
    \includegraphics[scale=0.50]{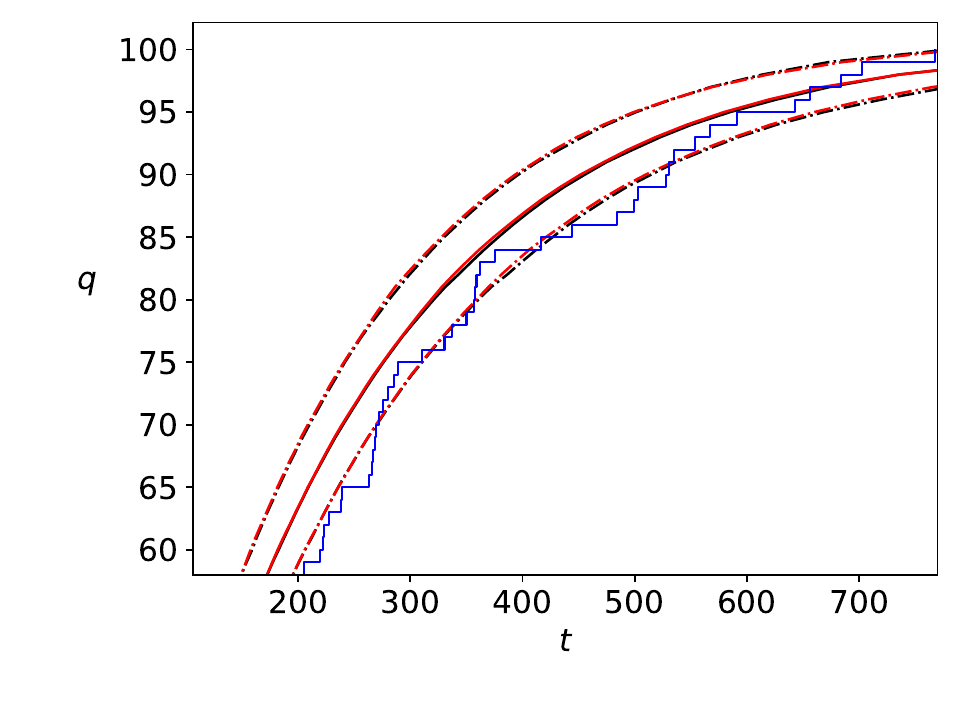}
    \caption{\textbf{The heuristically reduced CME \eqref{rCME3} provides an excellent approximation to the first two moments of $q$, the number of $P$ molecules. } In all simulations $k_1=1.0,k_2=1.0$, and $k_3=0.01$ with $n_0=100$, $m_0=0$ and $q_0=0$, where $n_0,m_0,q_0$ are the initial numbers of $\bar{X}$, $\bar{Y}$ and $P$ molecules. {\sc left}: The solid red curve is the mean computed from $1000$ simulations of the Gillespie algorithm applied to the reduced CME \eqref{rCME2}, and the dashed/dotted red curves are the mean $\pm$ one standard deviation. The solid black line is the mean of $q$ (the copy number of $P$ molecules) computed from $1000$ simulations of the full CME \eqref{CME2} via the Gillespie algorithm, and the dashed/dotted lines are the mean $\pm$ one standard deviation. A single stochastic timecourse trajectory for $P$ computed from the full CME via the Gillespie algorithm is presented in blue.  {\sc right}: A close-up of the left panel. Note that the approximated mean and standard deviation for $P$ obtained from the reduced CME \eqref{rCME3} is virtually indistinguishable from those computed from the full CME \eqref{CME2}.}
    \label{FIG2}
\end{figure}

Thus, even within the context of linear reaction networks, it is unclear when and why certain heuristic reductions hold. In this work, we take a significant step towards understanding the reliability of heuristic reductions, and address the principal question: Under what conditions does the heuristic reduction of the CME accurately approximate both the mean and the variance? Since the CME (even for first order reaction networks) is challenging to analyze directly, we rely on the linear noise approximation (LNA) to understand why heuristic reductions succeed or fail. The motivation behind the utilization of the LNA follows from the well-known fact that the LNA and the CME agree up to second-order moments for zeroth- and first-order reaction networks, as well as for some second-order reaction networks~\cite{GrimaSecondOrder}.

The outline is as follows: In Section \ref{section:2}, we briefly review geometric singular perturbation theory (GSPT), which is the mathematical foundation that justifies the reduced ODE models that are adapted to formulate the heuristically reduced CME. In Section \ref{section:3} we introduce the LNA, its relationship to the CME in terms of zeroth- and first-order reaction networks, and its reduction based on GSPT. In Section \ref{section:4} we prove that the heuristic reduction of the CME always holds for specific zeroth- and first-order reaction networks, as long as the diffusion generated by intrinsic noise is limited to the slow timescale in the linear noise regime. This is the core contribution of our work. In Section \ref{section:5}, we discuss the implications of this result in the context of nonlinear reaction networks, including the well-studied Michaelis-Menten reaction network~\cite{Kang2019,KangKim2017}.

\section{Geometric singular perturbation theory: Linear differential equations}\label{section:2}

Although geometric singular perturbation theory applies to both linear and nonlinear differential equations~\cite{Fenichel1979,HekGSPT}, our interest will be in the former. For our purposes, it will suffice to consider planar linear differential equations, although the results presented in this work extend to higher-dimensional linear equations. 

To motivate the use of singular perturbation methods, consider a linear differential equation of the form
\begin{equation}\label{linear}
\dot{x} = A_0x + \varepsilon A_1 x,
\end{equation}
where $x\in \mathbb{R}^2$ and the matrix $A:=A_0+\varepsilon A_1$ has two distinct, strictly real and negative eigenvalues,
$\lambda_{\pm}(\varepsilon)$, which are analytic in $\varepsilon$. More specifically, we assume that the eigenvalues admit convergent expansions in $\varepsilon$ of the form
\begin{equation}
\lambda_+=\varepsilon \lambda_+^{(1)} + \mathcal{O}(\varepsilon^2),\qquad \lambda_{-}=\lambda_-^{(0)} + \varepsilon \lambda_{-}^{(1)}+\mathcal{O}(\varepsilon^2).
\end{equation}

Although the complete system \eqref{linear} has a well-known solution
\begin{equation*}\
x(t) = e^{\displaystyle t A}x(0),
\end{equation*}
the assumption that $\lambda_-(\varepsilon)\ll \lambda_+(\varepsilon)<0$ ensures that the dynamics of the system \eqref{linear} unfold over two timescales, and this property can be exploited to produce accurate approximations. The slow timescale, $\tau=\varepsilon t$, is the timescale that accounts for changes in the direction parallel to the slow eigenspace: $\lambda_+(\varepsilon)t = \lambda_+^{(1)}\tau+\mathcal{O}(\varepsilon \tau)$. In contrast, the fast timescale, $t$, is the timescale on which changes in directions parallel to the fast eigenspace occur. Thus, \eqref{linear} is an example of a {\it two timescale} problem.

Due to the presence of disparate timescales, we restrict our interest to the long-time behavior of \eqref{linear}, which is determined by the dynamics on the invariant slow eigenspace associated with the eigenvalue $\lambda_+(\varepsilon)$ since all trajectories eventually approach the origin in the direction of the slow eigenspace. We can approximate the long-time dynamics as follows. Let $w(\varepsilon)$ denote the left eigenvector, $w^T(\varepsilon)A =\lambda_+(\varepsilon)w^T(\varepsilon)$. The unique projection matrix $P_{\varepsilon}:\mathbb{R}^2\to {\rm span}\{v_+(\varepsilon)\}$ is
\begin{equation}
P_{\varepsilon} = \cfrac{v_+(\varepsilon)w^T(\varepsilon)}{w^T(\varepsilon)v_+(\varepsilon)}, \quad Av_{\pm}(\varepsilon) = \lambda_{\pm}(\varepsilon)v_{\pm}(\varepsilon).
\end{equation}
The matrices $P_{\varepsilon}$ and $A$ share common eigenvectors and therefore commute
\begin{equation*}
\left[A,P_{\varepsilon}\right] = AP_{\varepsilon}-P_{\varepsilon}A=0.
\end{equation*}
More specifically, $A = \lambda_+(\varepsilon)P_{\varepsilon}+\lambda_-(\varepsilon)(I-P_{\varepsilon})$, and the long-time dynamics of \eqref{linear} is 
\begin{equation}\label{linred}
x(t) = e^{\displaystyle \lambda_+(\varepsilon)t}P_{\varepsilon}x(0)
\end{equation}
since the term $(I-P_{\varepsilon})x(0)$ decays exponentially to zero on the $\mathcal{O}(1)$ fast timescale, $t$. 

Of course, slow eigenspace projection is not the only way to resolve the multiscale dynamics of \eqref{linear}. Given that $A$ has distinct eigenvalues, special coordinates can be chosen to diagonalize $A$, thus {\it separating} the fast and slow dynamics. The slow {\it variable} is the variable $z$ for which
\begin{equation}\label{DIAG}
\begin{pmatrix}\dot{z}\\\dot{w}\end{pmatrix} = \begin{pmatrix}\lambda_+(\varepsilon) & 0\\ 0 & \lambda_-(\varepsilon)\end{pmatrix}\begin{pmatrix}z\\w\end{pmatrix}
\end{equation}
under the coordinate transformation $(z,w)^T=\mathcal{T}^{-1}\cdot (x_1,x_2)^T$, where the columns of $\mathcal{T}$ are the eigenvectors of $A$. Hence, a slow variable is any variable whose velocity vanishes as $\varepsilon \to 0$.\footnote{The more general definition of the slow variable and the {\it standard form} for nonlinear equations is presented in the Appendix.}

The aforementioned solution strategies have several shortcomings. First, slow eigenspace projection \eqref{linred} requires the direct computation of eigenvectors and eigenvalues, which can be exhaustive and lead to rather complicated expressions, even for planar systems. 
Second, if we separate the fast and slow dynamics by diagonalization \eqref{DIAG}, the new coordinates $(z,w)$ may not have a meaningful biochemical interpretation. This can be inconvenient given that our ultimate interest is in understanding when and how to reduce CMEs whose random variables represent integer-valued molecular copy numbers of specific chemical species.  
Third, and from a practical point of view, we are generally only interested in the leading-order approximation (in $\varepsilon)$ to solutions \eqref{linred} and \eqref{DIAG}. Although this is straightforward to compute
\begin{subequations}\label{reds1}
\begin{align}
z(\tau) &= e^{\displaystyle \lambda_+^{(1)}\tau}z(0),\\
x(\tau) &= e^{\displaystyle \lambda_+^{(1)}\tau}P_0x(0),
\end{align}
\end{subequations}
where $P_0$ is an appropriate zeroth-order approximation to $P_{\varepsilon}$ and $\tau = \varepsilon t$ is the slow timescale, it is desirable to compute such approximations without the need to deal with the eigenspectrum and eigenspaces of $A$ directly. 

Geometric singular perturbation theory provides a bridge from \eqref{linear} to \eqref{reds1} without the need to compute the eigenvectors and eigenvalues of $A$. To employ singular perturbation theory, we begin by setting $\varepsilon =0$ in \eqref{linear}, which generates the {\it layer problem}
\begin{equation}
\dot{x} = A_0x,
\end{equation}
where $A_0$ has a one-dimensional center subspace, $E^c$, corresponding to the eigenspace associated with the trivial eigenvalue, and a one-dimensional stable subspace, $E^s$, corresponding to the eigenspace associated with the nontrivial eigenvalue, $\lambda_-^{(0)}$. For the purposes of notation, let $v_+^{(0)}$ and $v_-^{(0)}$ satisfy
\begin{equation}
A_0v_+^{(0)}=0,\qquad A_0v_-^{(0)}= \lambda_-^{(0)}v_-^{(0)}, \quad \text{with}\;\;E^c = {\rm span}\{v_+^{(0)}\}\;\; \text{and} \;\;E^s={\rm span}\{v_-^{(0)}\}.
\end{equation}

\begin{definition}
Let $x=(x_1,x_2)$ and $w_0\in {\rm image} ^{\perp}A_0$ with $w_0^TA_0=0.$ The slow variable, $z$, is given by $z=\varphi(x_1,x_2)$ where
\begin{equation}\label{slowmap}
\varphi(x_1,x_2)=w_0^Tx.
\end{equation} 
\end{definition}

Any variable whose velocity does not vanish as $\varepsilon \to 0$ is called a {\it fast variable}. Note that the slow variable is identified with the first integral (conserved quantity) $w_0^Tx=const.$ of the layer problem. 

\begin{remark}\label{RM1}
The slow variable is not unique: any constant multiple of $z$ is also a slow variable. However, in the analysis that follows, we will define the slow variable uniquely in the following way. If $x_i$ is the parametric variable of interest (i.e., if we have parameterized $E^c$ as $x_2=x_2(x_1)$, then $x_1$ would be our parametric variable of interest), then we will define $\varphi(x)$ so that
\begin{equation}\label{21}
\cfrac{\partial \varphi}{\partial x_i}=1.
\end{equation}
This ensures that $x_i$ remains invariant with respect to the coordinate transformation that diagonalizes $A_0$; additional details are presented in the Appendix. 
\end{remark}

The center subspace $E^c$ is an example of a normally hyperbolic critical manifold $S_0$, which is the most fundamental (and important) feature of singularly perturbed differential equations.

\begin{definition}{\bf Normally hyperbolic and invariant submanifold}\footnote{This definition is somewhat restrictive and pertains {\it only} to the very special case of a submanifold of equilibrium points. Nevertheless, this rather narrow definition is sufficient for the purposes of this paper. See \citet{Eldering}, \citet{Fenichel1971}, and \citet{Wiggins} for a general and more technical treatment of the subject.} Consider the nonlinear ordinary differential equation
\begin{equation}
\dot{x} = h(x) + \varepsilon G(x,\varepsilon), \quad x\in \mathbb{R}^n.
\end{equation}
If the level set $S_0:=\{x\in \mathbb{R}^n: h(x)=0\}$ constitutes a $k$-dimensional submanifold of $\mathbb{R}^n$ (${\rm rank}\;Dh(x) = n-k \;\;\forall x\in S_0$), then $S_0$ is called the critical manifold and is normally hyperbolic if the spectrum of the Jacobian along $S_0$, $Dh(x)|_{x\in S_0}$, consists of a trivial eigenvalue with algebraic and geometric multiplicity $k$, and $n-k$ non-trivial eigenvalues that are strictly bounded away from the imaginary axis. Normal hyperbolicity ensures the splitting
\begin{equation*}
\mathbb{R}^n = \ker Dh(x) \oplus {\rm image} \;Dh(x)
\end{equation*}
holds for all $x\in S_0$, and asserts the existence of a unique projection matrix, $P_0:T_x\mathbb{R}^n \to T_xS_0\cong \ker Dh(x)$.
\end{definition}

Normally hyperbolic manifolds are structurally stable and persist -- along with their stable and unstable manifolds -- under smooth and small perturbations.\footnote{In this sense, they are the higher-dimensional analogs of hyperbolic equilibrium points.} Since the Jacobian $A_0$, along $E^c$, has a trivial eigenvalue with an algebraic and geometric multiplicity of one, and a single nontrivial eigenvalue, the manifold $E^c$ is normally hyperbolic. Turning on the perturbation by letting $0<\varepsilon$ results in the formation of a slow and invariant manifold, $S_{\varepsilon}$. For linear systems, the slow invariant manifold is simply the slow eigenspace of $A$.

Geometric singular perturbation theory procures the leading-order (in $\varepsilon$) approximation to \eqref{linred}, which is given by
\begin{equation}\label{Fenichel}
\dot{x} = \varepsilon P_0 A_1x|_{x\in E^c}, \qquad \text{with}\;\;P_0=\cfrac{v_+^{(0)}w_0^T}{w_0^Tv_+^{(0)}} = I -\cfrac{1}{\lambda_-^{(0)}}A_0,
\end{equation}
where $P_0:\mathbb{R}^2\to \ker A_0\equiv E^c$. If the center subspace is parameterizable by $x_1$, then its graph is given by
\begin{equation*}
E^c = \{(x_1,x_2)\in \mathbb{R}^2: x_2 = \ell  x_1\},
\end{equation*}
where $\ell$ is the slope. Therefore, in \eqref{Fenichel}, the restriction of $x$ to lie within $E^c$ yields
\begin{equation}\label{proj}
\dot{x} = \varepsilon P_0 A_1 x_1\begin{pmatrix}1\\ \cfrac{{\rm d}x_1}{{\rm d}x_2}\end{pmatrix} = \varepsilon P_0 A_1 x_1\begin{pmatrix}1\\ \ell\end{pmatrix} = \varepsilon P_0 A_1 x_1v_+^{(0)}
\end{equation}
where ${\rm d}x_1/{\rm d}x_2=\ell$ is the change in $x_1$ with respect to the change in $x_2$ along $E^c$. By the elementary perturbation theory of matrices (see \citet{greenbaum2019}), the projected system \eqref{proj} in terms of the parametric variable $x_1$ is
\begin{equation}\label{FR0}
\begin{pmatrix}x_1'\\x_2'\end{pmatrix} = \lambda_+^{(1)}x_1\begin{pmatrix}1\\\ell \end{pmatrix},
\end{equation}
where ``$\phantom{x}'$" denotes differentiation with respect to the slow timescale, $\tau$.

The reduced equation \eqref{FR0} is called the {\it Fenichel reduction} of \eqref{linear}, and it provides an accurate approximation of the dynamics in the slow eigenspace of $A$ that is valid as $\varepsilon \to 0$. In fact, the solution to \eqref{FR0} is the leading order approximation to \eqref{linred} in $\varepsilon$. Note that as long as $\ell \neq 0$, we can alternatively parameterize $E^c$ by $x_2$:
\begin{equation*}
\begin{pmatrix}x_1'\\x_2'\end{pmatrix} = \lambda_+^{(1)}x_2\begin{pmatrix}\ell^{-1}\\1\end{pmatrix}.
\end{equation*}

In biochemistry, Fenichel reductions of the form \eqref{FR0} are called quasi-steady-state approximations (QSSAs) and are commonly used in the analysis of mass action equations that model reaction networks~\cite{Segel1988,Segel1989,Schnell1997,Heineken1967,Wilhelm2000,Cracium}.  The basic idea in GSPT is that, since the stable and unstable manifolds of the critical manifold also persist under sufficiently smooth and small perturbations, the resulting normally hyperbolic invariant manifold (again, the slow eigenspace for linear problems) inherits the stability properties of the critical manifold (the center subspace, $E^c$). Moreover, normal hyperbolicity ensures the existence of a unique $P_0$, and thus the long-time dynamics can be approximated by projecting the perturbation onto $TE^c$, the tangent bundle of $E^c$, given by
\begin{equation*}
TE^c = \bigcup_{x\in E^c, v\in E^c} (x,v) = E^c \times E^c.
\end{equation*}
The construction of the projection matrix $P_0$ for general nonlinear systems is discussed in~\citet{Fenichel1979} and~\citet{Wechselberger2020}. Although the following is encoded in both~\cite{Fenichel1979,Wechselberger2020}, it will be useful to express $P_0$ in terms of derivatives.
\begin{proposition}
For a linear planar system, the projection matrix, $P_0$, has the following representation:
\begin{equation}\label{pmder}
P_0 = \begin{pmatrix}\cfrac{{\rm d}x_1}{{\rm d}z}\\\cfrac{{\rm d}x_2}{{\rm d}z}\end{pmatrix}\begin{pmatrix}\cfrac{\partial \varphi}{\partial x_1}& \cfrac{\partial \varphi}{\partial x_2}\end{pmatrix},
\end{equation}
where $z$ is the slow variable, $\varphi(x_1,x_2)$ is given by \eqref{slowmap} and ${\rm d}x_1/{\rm d}z, {\rm d}x_2/{\rm d}z$ are the slopes of the lines $x_1=\ell_1z$ and $x_2 = \ell_2 z$ that define $E^c$ in $(x_1,z)$ and $(x_2,z)$ coordinates.
\end{proposition}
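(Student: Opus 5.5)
We will prove the representation \eqref{pmder} by starting from the formula already recorded in \eqref{Fenichel},
\begin{equation*}
P_0=\cfrac{v_+^{(0)}w_0^T}{w_0^Tv_+^{(0)}},
\end{equation*}
and identifying each factor with a derivative. The row factor is immediate: since $\varphi(x_1,x_2)=w_0^Tx$ is linear, its gradient is $\left(\partial\varphi/\partial x_1,\;\partial\varphi/\partial x_2\right)=w_0^T$. It therefore remains to show that the column vector $v_+^{(0)}/(w_0^Tv_+^{(0)})$ equals $\left({\rm d}x_1/{\rm d}z,\;{\rm d}x_2/{\rm d}z\right)^T$.

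For the column factor we parameterize $E^c={\rm span}\{v_+^{(0)}\}$ linearly as $x=s\,v_+^{(0)}$ with $s\in\mathbb{R}$. Along $E^c$ the slow variable is $z=\varphi(s v_+^{(0)})=s\,w_0^Tv_+^{(0)}$. Hence
\begin{equation*}
x=\cfrac{z}{w_0^Tv_+^{(0)}}\,v_+^{(0)},
\end{equation*}
which exhibits $E^c$ as the lines $x_1=\ell_1 z$ and $x_2=\ell_2 z$ with $(\ell_1,\ell_2)^T=v_+^{(0)}/(w_0^Tv_+^{(0)})$. These slopes are precisely ${\rm d}x_1/{\rm d}z$ and ${\rm d}x_2/{\rm d}z$. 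Taking the outer product with $w_0^T$ gives \eqref{pmder}.

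The only genuine point to verify is that $w_0^Tv_+^{(0)}\neq 0$, which is what allows $z$ to serve as a coordinate on $E^c$. We will derive this from normal hyperbolicity. Because $0$ is a simple eigenvalue of $A_0$ (algebraic multiplicity one), the left and right null vectors cannot be orthogonal. Concretely, $w_0\perp{\rm image}\,A_0={\rm span}\{v_-^{(0)}\}$, so if $w_0$ were also orthogonal to $v_+^{(0)}$ it would vanish on all of $\mathbb{R}^2$, which is a contradiction. As a consistency check, we will confirm that the resulting matrix is idempotent, annihilates $v_-^{(0)}$, and fixes $v_+^{(0)}$. It therefore agrees with $I-A_0/\lambda_-^{(0)}$, and it is independent of the scaling of $w_0$ (cf. Remark~\ref{RM1}).
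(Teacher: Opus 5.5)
Your proof is correct, but it is organized differently from the paper's.

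\textbf{The paper's route.} It does not start from the rank-one formula in \eqref{Fenichel}. It takes the candidate matrix on the right of \eqref{pmder} and checks three things:
\begin{itemize}
\item idempotence, via the identity $\frac{{\rm d}x_1}{{\rm d}z}\frac{\partial\varphi}{\partial x_1}+\frac{{\rm d}x_2}{{\rm d}z}\frac{\partial\varphi}{\partial x_2}=1$;
\item that its kernel is $w_0^{\perp}={\rm image}\,A_0$;
\item that its image is $\ker A_0$, after parameterizing $E^c$ by $x_1$ (the $x_2$ case is declared equivalent).
\end{itemize}
It then invokes the fact that a projection is uniquely determined by its kernel and image.

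\textbf{Your route.} You read the representation off the explicit formula $v_+^{(0)}w_0^T/(w_0^Tv_+^{(0)})$ factor by factor. You recognize $w_0^T$ as $\nabla\varphi$. You recognize $v_+^{(0)}/(w_0^Tv_+^{(0)})$ as the slope vector obtained by reparameterizing $E^c$ by $z$ itself.

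\textbf{What each buys.} Your argument is more constructive: it produces the slopes rather than verifying a guessed candidate. It also avoids the case split on whether $E^c$ is a graph over $x_1$ or $x_2$. Your justification of $w_0^Tv_+^{(0)}\neq0$ is cleaner: $w_0$ orthogonal to both basis vectors $v_\pm^{(0)}$ would force $w_0=0$. The paper only appeals to normal hyperbolicity and Figure~\ref{FIG3}.

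The paper's argument works directly from the defining property of $P_0$ as the projection onto $\ker A_0$ along ${\rm image}\,A_0$. It therefore does not presuppose the rank-one formula, which \eqref{Fenichel} states without proof. If you take that defining property as your starting point, your planned ``consistency check'' (idempotent, annihilates $v_-^{(0)}$, fixes $v_+^{(0)}$) is what closes the argument, and it essentially coincides with the paper's proof.
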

\begin{proof}
It is straightforward to verify $P_0^2=P_0$ using the identity
\begin{equation*}
\cfrac{{\rm d}x_1}{{\rm d}z}\cdot\cfrac{\partial \varphi}{\partial x_1}+\cfrac{{\rm d}x_2}{{\rm d}z}\cdot\cfrac{\partial \varphi}{\partial x_2} =1.
\end{equation*}
Projection matrices are uniquely defined by their kernel and image. The span of the vector $\begin{pmatrix}\cfrac{\partial \varphi}{\partial x_1}& \cfrac{\partial \varphi}{\partial x_2}\end{pmatrix}^T$ defines the kernel of $A_0^T$, which is the orthogonal complement to the image of $A_0$. Thus,
\begin{equation}\label{ker1}
\ker P_0 = {\rm image} \;A_0 = {\rm span}\{v_-^{(0)}\}.
\end{equation}
Next, suppose $E^c$ is parameterizable by $x_1$ in which case $v_+^{(0)}$ has the representation
\begin{equation*}
v_+^{(0)} = \begin{pmatrix}1\\\cfrac{{\rm d}x_2}{{\rm d}x_1}\end{pmatrix}.
\end{equation*}
By definition, ${\rm d}z/{\rm d}x_1= w_0^Tv_+^{(0)}.$ Moreover, normal hyperbolicity of $E^c$ ensures $w_0^Tv_+^{(0)}\neq 0$ (see {\sc figure} \ref{FIG3}), and therefore 
\begin{equation*}
(w_0^Tv_+^{(0)})^{-1}\begin{pmatrix}1\\\cfrac{{\rm d}x_2}{{\rm d}x_1}\end{pmatrix} = \begin{pmatrix}\cfrac{{\rm d}x_1}{{\rm d}z}\\\cfrac{{\rm d}x_2}{{\rm d}z}\end{pmatrix}
\end{equation*}
also belongs to the kernel of $A_0$. The equivalent argument holds when $E^c$ is parameterizable by $x_2$. Thus, {\rm image} of $P_0 = \ker A_0$.
\end{proof}
\begin{figure}[bth!]
    \centering
    \includegraphics[scale=1.0]{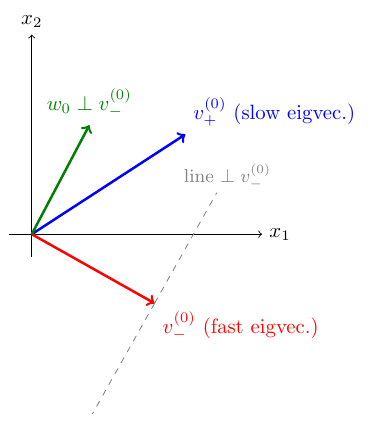}
    \caption{\textbf{The geometry of normal hyperbolicity:} $w_0$ is orthogonal to the fast eigenvector. Thus, if $w_0^Tv_+^{(0)}=0$, then $v_+^{(0)} \parallel v_-^{(0)}$. Consequently, the algebraic multiplicity of the trivial eigenvalue is $2$, but the geometric multiplicity is $1$. If $w_0^Tv_+^{(0)}=0$, then $E^c$ is not normally hyperbolic since by definition the geometric and algebraic multiplicities of the trivial eigenvalue must both equal $1$.}
    \label{FIG3}
\end{figure}

To summarize, geometric singular perturbation theory provides a methodology to obtain the leading order approximation of the dynamics on the slow eigenspace. As an approximation, the resulting expressions are often simpler than those of the exact solution. Moreover, the Fenichel reduction is constructed from the left and right eigenvectors of $A_0$, thus circumnavigating the need to deal directly with $A$. 

\section{The linear noise approximation}\label{section:3}

The linear noise approximation, originally derived by Van~\citet{VKX}, provides an important bridge between the deterministic mass action equations and the stochastic chemical master equation.\footnote{The rigorous mathematical justification of the LNA was established by~\citet{kurtz1978}.} In this section, we will introduce the LNA and examine its reduction using Fenichel theory. Although we take a coordinate-independent approach to the reduction of stochastic LNAs, geometric singular perturbation and center manifold theory have been rigorously applied to SDEs; see~\cite{Yin2004,Yin2005,BerglundGentzJDE,KUEHNstoch,Roberts1996,Roberts2013,KnoblochCM} for more technical analyses, and \cite{PAHLAJANI201196,Popovic} for analyses concerning stochastic reaction networks with disparate timescales.

In their most compact form, mass action equations can be written as
\begin{equation}\label{MAC}
\dot{x}_i = S_{ij}r_j(x),
\end{equation}
where $S_{ij}$ is a net stoichiometric matrix and $r_j({x})$ are the elementary reactions that comprise the network. For example, consider the following reaction network:
\begin{align}\label{GC}
 \ce{A <=>[$k_1$][$k_{-1}$] 2B},\quad \ce{B ->[$k_2$] C}
\end{align}
The mass action equations corresponding to \eqref{GC}
\begin{subequations}
\begin{align*}
\dot{a} &= -k_1a + k_{-1}b^2,\\
\dot{b} &= \;2k_1a -2k_{-1}b^2 -k_2b,
\end{align*}
\end{subequations}
which admits the alternative form
\begin{equation}
\begin{pmatrix}\dot{a}\\\dot{b}\end{pmatrix} = \begin{pmatrix}-1 & \;\;1 & 0\\ \;\;2 & -2 & 1\end{pmatrix}\begin{pmatrix}k_1a\\k_{-1}b^2\\k_2b\end{pmatrix}.
\end{equation}

In the linear noise regime, the first moment is the solution to the mass action system. Thus, the mean field concentrations are simply the macroscopic solutions, $x_i(t)$, to the mass action equations.
Near the system size limit, the concentrations will fluctuate randomly about the mean field concentration. These fluctuations, denoted by $X_i$, satisfy the linear diffusion process,
\begin{equation}\label{LNA}
{\rm d}{X} = D(S{r}({x}))X \;{\rm d}t + \Omega^{-1/2}B({x}){\rm d}{W}(t)
\end{equation}
where $X=(X_1,X_2,...X_n)^T$, $D$ denotes differentiation with respect to mean field concentrations, $x_i$, and ${\rm d}{W}$ denotes the column vector $({\rm d}W_1,{\rm d}W_2,...{\rm d}W_k)^T$ for a network comprised of ``$k$" elementary reactions and ``$n$" chemical species where $W_i(t)$ are standard Brownian motions:
\begin{subequations}
\begin{align}
\mathbb{E}\{W_i(t)\} &= 0,\\
\mathbb{E}\{W_i(t)W_i(s)\} &= \min\{t,s\}. 
\end{align}
\end{subequations}
The matrix $B$ is given by
\begin{equation}\label{BMatrix}
B({x}) = S\left({\rm diag}\left(\sqrt{{r}_j({x})}\right)\right).
\end{equation}

The second moment (covariance) is a matrix quantity,
\begin{equation}
C_{ij} = {\rm Cov}(X_i,X_j),
\end{equation}
which satisfies the Lyapunov matrix differential equation
\begin{equation}\label{lyapunov}
\dot{C} = J(x)C + CJ(x)^T + \Omega^{-1}B({x})B({x})^T, \quad J := D(S{r({x}})).
\end{equation}

\begin{remark}
For the purposes of the following analysis, it suffices to set $\Omega =1$ from this point on. However, it should be understood that when we introduce $E^c$ in the context of fluctuations $X_i$, this is the center subspace that emerges in both the singular limit $\varepsilon \to 0$ and the system size limit: $\Omega \to \infty$. Thus, derivatives of the form
\begin{equation*}
\cfrac{{\rm d}X_i}{{\rm d}Z}=\ell,
\end{equation*}
are to be interpreted in the singular and thermodynamic (system size) limits. Note that for linear systems ${\rm d}x_i/{\rm d}z={\rm d}X_i/{\rm d}Z$.
\end{remark}

Since the first and second moments satisfy ordinary differential equations, GSPT can be applied directly to produce reduced equations that approximate the long-time behavior of moments on the slow timescale $\tau$. Since we are interested in first-order reaction networks whose mass action equations are of the form \eqref{linear}, we are interested in reducing LNAs of the form
\begin{subequations}
\begin{align}
\dot{x} &= (A_0+\varepsilon A_1)x,\label{MF}\\
{\rm d}X &= (A_0+\varepsilon A_1)X \;{\rm d}t + B(x,\sqrt\varepsilon){\rm d}W(t).\label{OU}
\end{align}
\end{subequations}
For linear mass action networks, the reduction of the  SDE \eqref{OU} is given by
\begin{equation}
    {\rm d}X = \varepsilon \lambda_+^{(1)}X_1\begin{pmatrix}1 \\ \ell \end{pmatrix}{\rm d}t + P_0B(x_1,\sqrt{\varepsilon}){\rm d}W(t),
\end{equation}
where we have symbolically chosen to parameterize $E^c\times E^c$ by $x_1$ and $X_1$. The matrix $B(x,\sqrt{\varepsilon})$ is given by
\begin{equation}
    B(x,\sqrt{\varepsilon}) = S_0\left( {\rm diag} \sqrt{r_{1\leq j\leq p}(x)}\right) + \sqrt{\varepsilon}S_1\left( {\rm diag} \sqrt{r_{p< j\leq k}(x)}\right) =: B_0(x) + \sqrt{\varepsilon}B_1(x),
\end{equation}
where $S_0$ is the net stoichiometric matrix corresponding to the ``$p$" fast $\mathcal{O}(1)$ reactions, and $S_1$ is the net stoichiometric matrix corresponding to the ``$k-p$" slow $\mathcal{O}(\varepsilon)$ reactions. Since the columns of $S_0$ lie in the column space of $A_0$, $P_0S_0$ vanishes, and we are left with
\begin{equation}\label{rSDE}
{\rm d}X =  \lambda_+^{(1)}X_1\begin{pmatrix}1 \\ \ell\end{pmatrix}{\rm d}\tau + P_0B_1(x_1){\rm d}W(\tau),\quad \tau = \varepsilon t,
\end{equation}
where $\ell$ denotes the slope of $E^c$
\begin{equation*}
   X_2=\ell X_1, \quad \ell = \cfrac{{\rm d}x_2}{{\rm d}x_1},
\end{equation*}
the center subspace (critical manifold) of $A_0$.

The SDE \eqref{rSDE} corresponds to the Fenichel reduction of the covariance equation\footnote{The critical manifold associated with the covariance equation is defined by the set $\{M\in S_2(\mathbb{R}): \mathcal{L}_{A_0}(M) = -B_0(x)B_0(x)^T\}$, where $S_2(\mathbb{R})$ is the set of real symmetric $2\times 2$ matrices, however, due to the linearity we can just restrict $C\in \ker \mathcal{L}_{A_0}$ to obtain the Fenichel reduction. See the example presented in the Appendix for further details.},
\begin{equation}\label{Cprime}
C' = P_0\mathcal{L}_{A_1}(C)P_0^T + P_0B_1(x_1)B_1(x_1)^TP_0^T, \quad C\in \ker \mathcal{L}_{A_0}(\cdot)
\end{equation}
where again ``$\phantom{x}'$" denotes differentiation with respect to slow time, $\tau$, and $\mathcal{L}_A(\cdot)$ denotes the Lyapunov operator: $\mathcal{L}_A(\cdot)=A(\cdot)+(\cdot)A^T$.

Analyzing \eqref{Cprime} and \eqref{rSDE} directly is sufficient for most applications, but these projected forms can obfuscate certain dynamic relationships. For example, looking closely at \eqref{Cprime}, we obtain the following:
\begin{proposition}\label{prop2}
For a linear reaction network, the Fenichel reduction of the covariance matrix, $C$, is given by
\begin{equation}\label{FC}
C'= (2\lambda_+^{(1)}{\rm Var}(Z)(\tau) +|\lambda_+^{(1)}|z(\tau))\begin{pmatrix}\cfrac{{\rm d}x_1}{{\rm d}z}\cfrac{{\rm d}x_1}{{\rm d}z} & \cfrac{{\rm d}x_1}{{\rm d}z}\cfrac{{\rm d}x_2}{{\rm d}z} \\ \cfrac{{\rm d}x_1}{{\rm d}z}\cfrac{{\rm d}x_2}{{\rm d}z} & \cfrac{{\rm d}x_2}{{\rm d}z}\cfrac{{\rm d}x_2}{{\rm d}z}\end{pmatrix}
\end{equation}
where ${\rm Var}(Z(t))=\mathbb{E}(Z^2)-[\mathbb{E}(Z)]^2$, $Z$ is a diffusion process defined by
\begin{equation}
{\rm d}Z = \lambda_+^{(1)}Z\;{\rm d}\tau-\sqrt{|\lambda_+^{(1)}|z}\;{\rm d}W(\tau).
\end{equation}
The variance, ${\rm Var} (Z)$, satisfies the ordinary differential equation
\begin{equation}
{\rm Var}(Z)' = 2\lambda_+^{(1)}{\rm Var}(Z)+|\lambda_+^{(1)}|z,
\end{equation}
where again lowercase $z$ denotes the deterministic slow variable that satisfies the evolution equation $z'=\lambda_+^{(1)}z$.
\end{proposition}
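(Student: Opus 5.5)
The plan is to evaluate the projected Lyapunov equation \eqref{Cprime} term by term. I would use the derivative representation \eqref{pmder} of $P_0$ from the earlier proposition, together with an explicit description of $\ker\mathcal{L}_{A_0}$ on symmetric matrices. Write $v:=({\rm d}x_1/{\rm d}z,\,{\rm d}x_2/{\rm d}z)^T$ and $w_0:=(\partial\varphi/\partial x_1,\,\partial\varphi/\partial x_2)^T$. Then $P_0=vw_0^T$, $A_0v=0$, $w_0^TA_0=0$, and $w_0^Tv=1$; the last identity is the one used to prove $P_0^2=P_0$.

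\textbf{Step 1 (the kernel).} I would show that $C\in\ker\mathcal{L}_{A_0}$ forces $C=c\,vv^T$ for a scalar $c(\tau)$. Diagonalize $A_0={\rm diag}(0,\lambda_-^{(0)})$ in the basis $\{v,v_-^{(0)}\}$. In these coordinates the Lyapunov operator acts entrywise on $\tilde C$ by multiplication by $\mu_i+\mu_j\in\{0,\lambda_-^{(0)},2\lambda_-^{(0)}\}$. Since $\lambda_-^{(0)}\neq 0$, only the $(1,1)$ entry survives, so $C=c\,vv^T$. Because $w_0^Tv=1$, we can identify $c=w_0^TCw_0$, which is the variance of $Z=w_0^TX$. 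This already produces the rank-one matrix appearing in \eqref{FC}.

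\textbf{Step 2 (the drift term).} Next I would compute
\[
P_0\mathcal{L}_{A_1}(C)P_0^T=c\,v\,(w_0^TA_1v)(w_0^Tv)v^T+c\,v\,(w_0^Tv)(v^TA_1^Tw_0)v^T=2c\,(w_0^TA_1v)\,vv^T.
\]
First-order eigenvalue perturbation theory (the same fact invoked to pass from \eqref{proj} to \eqref{FR0}) gives $w_0^TA_1v=\lambda_+^{(1)}$ under the normalization $w_0^Tv=1$. The drift term is therefore $2\lambda_+^{(1)}{\rm Var}(Z)\,vv^T$.

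\textbf{Step 3 (the noise term, the main obstacle).} Here
\[
P_0B_1B_1^TP_0^T=\Big(\sum_{j>p}(w_0^Ts_j)^2\,r_j(x)\Big)\,vv^T,
\]
where $s_j$ are the columns of $S_1$ and $x=z\,v$ is restricted to $E^c$. I must show this coefficient equals $|\lambda_+^{(1)}|z$. Applying $w_0^T$ to the slow mean-field equation $z'=w_0^TA_1x$ gives $\sum_j(w_0^Ts_j)\,r_j(zv)=\lambda_+^{(1)}z$. It therefore suffices to show that, for a first-order network, each slow reaction satisfies $w_0^Ts_j\in\{0,-1\}$ under the normalization of Remark \ref{RM1}. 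Then $(w_0^Ts_j)^2=-w_0^Ts_j$, and the sum becomes $-\lambda_+^{(1)}z=|\lambda_+^{(1)}|z$, using $\lambda_+^{(1)}<0$.

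I would try to establish this by exploiting two facts. First, $w_0$ annihilates every fast stoichiometric column, since those columns span ${\rm image}\,A_0$. Second, first-order reactions have stoichiometric vectors whose entries lie in $\{-1,0,1\}$, and each such vector is a degradation $-e_i$, a production, or a conversion $e_j-e_i$. Combining these should pin down $w_0$ so that the slow reactions that change $z$ act as a unit removal. This is the step where the argument may only hold under an additional hypothesis on the network (e.g.\ that slow reactions are effectively degradations of the slow variable). If so, I would make that hypothesis explicit rather than claim full generality.

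\textbf{Step 4 (the variance of $Z$).} Finally, the ODE for ${\rm Var}(Z)$ follows from It\^o's formula applied to the scalar Ornstein--Uhlenbeck process ${\rm d}Z=\lambda_+^{(1)}Z\,{\rm d}\tau-\sqrt{|\lambda_+^{(1)}|z}\,{\rm d}W$. It also follows directly by applying $w_0^T(\cdot)w_0$ to \eqref{FC}. Either route matches the coefficient $c$ found in Steps 1--3 and completes the identification in \eqref{FC}.
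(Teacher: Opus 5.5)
Your Steps 1, 2 and 4 are correct and match the paper's computation: write $P_0=vw_0^T$ via \eqref{pmder}, project $C$ onto the rank-one kernel $vv^T$, and use $w_0^TA_1v=\lambda_+^{(1)}$. The gap is Step 3, and it is more than an unproven lemma: the quantity you set out to evaluate is the wrong one. You assume $w_0$ annihilates every fast stoichiometric column, so that only $B_1$ survives, as in \eqref{Cprime}. That premise fails in Example 1 (small $k_1$), which is exactly where the paper applies Propositions~\ref{prop2} and~\ref{prop3}.

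There $w_0=(1,\,k_2/(k_2+k_3))^T$, and the fast columns $(1,-1)^T$ and $(0,-1)^T$ span $\mathbb{R}^2$ rather than ${\rm image}\,A_0$. The only slow reaction, $\bar{X}\to\bar{Y}$, has $w_0^Ts_1=-k_3/(k_2+k_3)\notin\{0,-1\}$. Your coefficient would therefore be $k_1k_3^2x/(k_2+k_3)^2$. This is smaller than $|\lambda_+^{(1)}|z=k_1k_3x/(k_2+k_3)$ by the factor $k_3/(k_2+k_3)$, a sub-Poissonian answer that contradicts Example 1 and Figure~\ref{FIG1}. Retreating to a hypothesis such as ``slow reactions are degradations of $z$'' would simply exclude this case.

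What is missing is the fast noise evaluated on $S_\varepsilon$ rather than on $E^c$. The propensities $k_2y$ and $k_3y$ vanish on $E^c$ but are $\mathcal{O}(\varepsilon)$ on $S_\varepsilon$, where $y=\varepsilon k_1x/(k_2+k_3)+\mathcal{O}(\varepsilon^2)$; this is the mechanism behind Proposition~\ref{prop6}. Because $w_0^Ts_j\neq 0$ for these reactions, $\varepsilon^{-1}P_0B_0B_0^TP_0^T$ is $\mathcal{O}(1)$ on the slow timescale. It contributes $k_1k_2k_3x/(k_2+k_3)^2$, which restores exactly $k_1k_3x/(k_2+k_3)$.

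To close Step 3, work with the full diffusion matrix on $S_\varepsilon$ and use the monomolecular structure (conversions $X_i\to X_j$ and degradations only). Some such restriction is genuinely needed: a slow birth $X\to 2X$ alongside a slow death adds noise $(b+d)x$ while the drift only gives $|\lambda_+^{(1)}|=d-b$. Write each stoichiometric vector as $e_j-e_i$, assigning the degraded state weight $0$. The identity $(w_j-w_i)^2=(w_j^2-w_i^2)-2w_i(w_j-w_i)$ then gives
\begin{equation*}
\sum_{\text{all } j}(w_0^Ts_j)^2r_j(x)=u^TAx-2\sum_i w_ix_i\,(w_0^TA)_i,\qquad u_i:=w_i^2 .
\end{equation*}
On $S_\varepsilon$ we have $Ax=\varepsilon\lambda_+^{(1)}x+\mathcal{O}(\varepsilon^2)$, $w_0^TA=\varepsilon w_0^TA_1$, and $x=zv+\mathcal{O}(\varepsilon)$.

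In the planar monomolecular case, $\det A_0=0$ forces either $w_0=(1,1)^T$ or $E^c$ to be a coordinate axis. So with $\varphi$ normalized on the variable that parameterizes $E^c$ (Remark~\ref{RM1}), $w_i=1$ wherever $v_i\neq 0$. Hence $u^Tv=w_0^Tv=1$ and $\sum_i w_iv_i(w_0^TA_1)_i=w_0^TA_1v=\lambda_+^{(1)}$. The right-hand side is therefore $\varepsilon(\lambda_+^{(1)}-2\lambda_+^{(1)})z+\mathcal{O}(\varepsilon^2)=\varepsilon|\lambda_+^{(1)}|z+\mathcal{O}(\varepsilon^2)$, which after rescaling to $\tau$ is exactly the noise coefficient in \eqref{FC}. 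Your condition $w_0^Ts_j\in\{0,-1\}$ is only the special case in which the fast columns lie in ${\rm image}\,A_0$, such as the small-$k_3$ network treated in the Appendix.
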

\begin{proof}
This is an elementary computation involving \eqref{pmder}: differentiate $P_0CP_0^T$ with respect to time and keep only the leading order terms in $\varepsilon$. See the Appendix for additional details.
\end{proof}

The geometric information contained in Proposition \ref{prop2} should be emphasized. What Proposition \ref{prop2} ultimately says is the following: Any variable $x_i$ of a first-order reaction network whose mass action equations are singularly perturbed admits a reduced LNA on the slow timescale, $\tau$, of the form
\begin{equation}\label{VF}
{\rm d}X_i = \lambda_+^{(1)}X_i\;{\rm d}\tau - \sqrt{|\lambda_+^{(1)}|x_i\left(\cfrac{{\rm d}x_i}{{\rm d}z}\right)}\;{\rm d}W(\tau).
\end{equation}
Most importantly, we have the following from \eqref{VF}:
\begin{proposition}\label{prop3}
Let $(x_1,x_2)$ denote the concentrations of chemical species that comprise a first-order reaction network that is singularly perturbed and with a leading order Jacobian, $A_0$, with an eigenspectrum consisting of a single trivial eigenvalue and one eigenvalue that is strictly negative and real. If ${\rm d}x_i/{\rm d}z=1$, where $z$ is the slow variable given by $z=\varphi(x_1,x_2)$ with $\partial_{x_1}\varphi =1$, then the corresponding CME of the network admits a reduction in the form of a Poisson death process
\begin{equation}\label{DP}
\cfrac{\partial}{\partial t}Pr(n,t) = \varepsilon |\lambda_+^{(1)}|(E_n^{-1}-1)nPr(n,t)
\end{equation}
where $n$ denotes molecular copy numbers of the chemical species associated with $x_i$.
\end{proposition}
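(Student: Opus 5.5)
The plan is to match moments. For first-order networks the LNA reproduces the first two moments of the CME exactly, so it suffices to show that the reduced LNA for $X_i$ from \eqref{VF} has the same mean and variance as the LNA of a Poisson death process with rate $\varepsilon|\lambda_+^{(1)}|$. The death process is itself first order, so its LNA is exact up to second moments as well.

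The argument then runs in the following steps.

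\textbf{Step 1 (LNA of the death process).} Take the reaction $\emptyset \leftarrow X$ with propensity $\kappa n$, where $\kappa=\varepsilon|\lambda_+^{(1)}|$. Its mass action equation is $\dot x=-\kappa x$. Its LNA is ${\rm d}X=-\kappa X\,{\rm d}t+\sqrt{\kappa x}\,{\rm d}W$, written $-\sqrt{\cdot}$ to match the sign convention used earlier; the sign does not affect second moments. Passing to the slow time $\tau=\varepsilon t$ and using $\lambda_+^{(1)}<0$ gives
\begin{equation*}
{\rm d}X=\lambda_+^{(1)}X\,{\rm d}\tau-\sqrt{|\lambda_+^{(1)}|x}\,{\rm d}W(\tau).
\end{equation*}
The variance of this process satisfies $V'=2\lambda_+^{(1)}V+|\lambda_+^{(1)}|x$.

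\textbf{Step 2 (reduced LNA of $x_i$).} Here I invoke Proposition \ref{prop2}. The $(i,i)$ entry of \eqref{FC} gives
\begin{equation*}
{\rm Var}(X_i)'=\bigl(2\lambda_+^{(1)}{\rm Var}(Z)+|\lambda_+^{(1)}|z\bigr)\Bigl(\tfrac{{\rm d}x_i}{{\rm d}z}\Bigr)^2.
\end{equation*}
On $E^c$ we have $x_i=({\rm d}x_i/{\rm d}z)\,z$ and ${\rm Var}(X_i)=({\rm d}x_i/{\rm d}z)^2{\rm Var}(Z)$; this is how \eqref{VF} arises. Setting ${\rm d}x_i/{\rm d}z=1$ gives $x_i=z$ and ${\rm Var}(X_i)={\rm Var}(Z)$. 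The equation then becomes
\begin{equation*}
{\rm Var}(X_i)'=2\lambda_+^{(1)}{\rm Var}(X_i)+|\lambda_+^{(1)}|x_i,
\end{equation*}
and the mean obeys $x_i'=\lambda_+^{(1)}x_i$. Both coincide with Step 1. The normalization $\partial_{x_1}\varphi=1$ from Remark \ref{RM1} is what pins down $z$, and hence the meaning of ${\rm d}x_i/{\rm d}z=1$.

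\textbf{Step 3 (lift to the CME).} Take the initial data of the death process to be the projected data $P_0x(0)$, with zero initial variance for deterministic initial copy numbers. The two linear ODE systems then have identical coefficients and identical initial data, so their solutions agree. The death-process CME \eqref{DP} has exactly these first and second moments, because its LNA is exact for first-order propensities. Since the full CME's moments agree with its LNA, and Fenichel theory makes that LNA $\mathcal{O}(\varepsilon)$-close to the reduced one on the slow timescale, the reduced CME \eqref{DP} matches the mean and variance of $n$ to leading order.

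The main obstacle is the justification in Step 3. We must pass from agreement of the first two moments to the claim that \eqref{DP} is ``the'' reduction of the CME. We must also control the initial layer, where the fast fluctuations are not yet on $E^c$. I would handle the second point by appealing to the attractivity of the persistent slow manifold for both the mean equation and the covariance equation \eqref{Cprime}. I would state the result at the level of moments, in the sense used throughout the paper.
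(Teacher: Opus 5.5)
Your proposal is correct and follows the paper's own argument: with ${\rm d}x_i/{\rm d}z=1$, the reduced LNA from Proposition \ref{prop2} coincides with the system-size expansion of the death process \eqref{DP}. Beyond that, you only spell out what the paper leaves implicit, namely the LNA computation for \eqref{DP} and the lift to CME moments via the agreement of LNA and CME up to second order for first-order networks.

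One small caution on Step 3: zero initial variance is the correct initial datum for the reduced problem only when the initial state lies on $S_0$. Otherwise the fast reactions generate nonzero variance during the initial layer, and that slow initial value has to be obtained by integrating the layer problem rather than from attractivity of the slow manifold.
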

\begin{proof}
Since ${\rm d}x_i/{\rm d}z=1$, the reduced LNA for $x_i$ is
\begin{subequations}\label{LLN}
\begin{align}
x_i' &= \lambda_+^{(1)}x_i,\\
{\rm d}X_i &= \lambda_+^{(1)}X_i\;{\rm d}\tau - \sqrt{|\lambda_+^{(1)}|x_i}\;{\rm d}W(\tau)
\end{align}
\end{subequations}
by Proposition \ref{prop2}. The system size expansion of \eqref{DP} (see \cite{VKX} for details) is exactly \eqref{LLN}, and the assertion follows. 
\end{proof}

Proposition \ref{prop3} reveals that the slow variable, $z$, is an organizing agent in the sense that {\it all} reduced LNAs (for each $x_i)$ can be derived from the LNA of the slow variable, provided that the slopes (${\rm d}x_i/{\rm d}z$) can be easily calculated. Therefore, first order singularly perturbed two-dimensional reaction networks that depend on a small parameter $\varepsilon$, admit a reduced CME whose propensity function is the stochastic analog of the deterministic reduction as long as the parametric variable of interest $x_1$, satisfies
\begin{equation}\label{GEOM}
\cfrac{{\rm d}z}{{\rm d}x_1} = \cfrac{\partial \varphi(x_1,x_2)}{\partial x_1} + \cfrac{{\rm d}x_2}{{\rm d}x_1} \cdot\cfrac{\partial \varphi(x_1,x_2)}{\partial x_2} =  1+\cfrac{{\rm d}x_2}{{\rm d}x_1} \cdot\cfrac{\partial \varphi(x_1,x_2)}{\partial x_2}=w_0^Tv_+^{(0)}=1.
\end{equation}
When \eqref{GEOM} holds, then $Pr(n,t)$, the probability that there are ``$n$" molecules of species $x_i$ at time $\tau$, is well approximated by
\begin{equation}
Pr(n,\tau) = \begin{pmatrix}N\\n\end{pmatrix}e^{-\displaystyle n\lambda_+^{(1)}\tau}\left(1-e^{-\displaystyle \lambda_+^{(1)}\tau}\right)^{(N-n)}
\end{equation}
where $\tau$ is the slow timescale, $\tau=\varepsilon t$, $N$ is the total number of molecules corresponding to species $x_1$, and $\varepsilon \lambda_+^{(1)}$ is the first-order approximation to the slow eigenvalue, $\lambda_+(\varepsilon)$. Looking carefully at \eqref{GEOM}, it is clear that the second term must vanish
\begin{equation}\label{vanished}
\cfrac{{\rm d}x_2}{{\rm d}x_1} \cdot\cfrac{\partial \varphi(x_1,x_2)}{\partial x_2} =0
\end{equation}
if a reduction of the CME based on $x_1$ is possible. If ${\rm d}x_2/{\rm d}x_1=0$, then $E^c$ corresponds to the $x_1$ coordinate axis. On the other hand, if $\partial_{x_2}\varphi(x_1,x_2)=0$, then $x_1$ {\it is} the slow variable: $z=x_1$. If both hold, then $A_0$ is diagonal
\begin{equation*}
A_0 = \begin{pmatrix}0&0\\0&\lambda_-^{(0)}\end{pmatrix},
\end{equation*}
and the projection matrix, $P_0$, is symmetric.

An important consequence of \eqref{vanished} is the following:
\begin{proposition}\label{prop6}
Suppose 
\begin{equation*}
\cfrac{{\rm d}x_2}{{\rm d}x_1} \cdot\cfrac{\partial \varphi(x)}{\partial x_2} =0,\quad \text{with} \;\;\cfrac{\partial \varphi(x)}{\partial x_1}=1
\end{equation*}
for a first order reaction network. Then, the diffusion of $X_1$ is limited to the slow timescale, $\tau$, whenever $x=(x_1,x_2)$ is restricted to evolve on the slow manifold.
\end{proposition}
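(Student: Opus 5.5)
The plan is to read ``diffusion of $X_1$ is limited to the slow timescale'' as a statement about the $(1,1)$ entry of the diffusion matrix in \eqref{LNA}. Write
\begin{equation*}
BB^T=B_0B_0^T+\varepsilon B_1B_1^T,\qquad (B_0B_0^T)_{11}=\sum_{j\le p}(S_0)_{1j}^2\,r_j(x).
\end{equation*}
The goal is to show that this quantity is $\mathcal{O}(\varepsilon)$ whenever $x\in S_\varepsilon$. Then the only $\mathcal{O}(1)$ contribution to the noise in $X_1$ disappears, and the remaining diffusion acts on $\tau=\varepsilon t$. The hypothesis $\frac{{\rm d}x_2}{{\rm d}x_1}\cdot\partial_{x_2}\varphi=0$ splits into two cases: $\partial_{x_2}\varphi=0$, or ${\rm d}x_2/{\rm d}x_1=0$. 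I would treat each separately.

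\textbf{Case 1: $\partial_{x_2}\varphi=0$.} With the normalization $\partial_{x_1}\varphi=1$ we get $w_0=(1,0)^T$, so $x_1$ is itself the slow variable. As observed before \eqref{rSDE}, each column of $S_0$ lies in ${\rm image}\,A_0$. Since $w_0\perp{\rm image}\,A_0$, this gives $w_0^TS_0=0$, i.e.\ $(S_0)_{1j}=0$ for every fast reaction. Hence $(B_0B_0^T)_{11}\equiv0$ identically, not merely on $S_\varepsilon$. The diffusion of $X_1$ is then exactly $\varepsilon(B_1B_1^T)_{11}$, which is purely slow.

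\textbf{Case 2: ${\rm d}x_2/{\rm d}x_1=0$.} Here $E^c$ is the $x_1$-axis, so $v_+^{(0)}=(1,0)^T$ and $A_0e_1=0$. I would use the sign structure of a first-order mass-action network whose drift is homogeneous linear, $(A_0+\varepsilon A_1)x$, so that there are no zeroth-order inflows.
\begin{itemize}
\item A fast reaction whose propensity depends on $x_1$ must consume $X_1$. Its contribution to $(A_0)_{11}$ is therefore $-k_j<0$.
\item No fast reaction can contribute positively to $(A_0)_{11}$.
\item Consequently $(A_0)_{11}=0$ forces every fast propensity with $(S_0)_{1j}\neq0$ to depend on $x_2$ alone, i.e.\ $r_j=k_jx_2$.
\end{itemize}
Next I would use that the slow manifold is the slow eigenspace, with $v_+(\varepsilon)=v_+^{(0)}+\mathcal{O}(\varepsilon)$ by analytic perturbation of a simple eigenvalue (or by Fenichel persistence of $E^c$). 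This gives $x_2=\mathcal{O}(\varepsilon)x_1$ on $S_\varepsilon$. Therefore $(B_0B_0^T)_{11}=\sum_j(S_0)_{1j}^2k_jx_2=\mathcal{O}(\varepsilon)$ there, and the diffusion of $X_1$ is again $\mathcal{O}(\varepsilon)$.

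The main obstacle is the sign argument in Case 2. It has to be stated carefully so that it uses genuinely that the network is first order and mass-action, i.e.\ that off-diagonal Jacobian contributions come only from conversions and diagonal contributions only from consumption. Once that is in place, both cases reduce to bookkeeping with $S_0$ and the $\mathcal{O}(\varepsilon)$ closeness of $S_\varepsilon$ to $E^c$.
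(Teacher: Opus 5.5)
Your proof is correct and is essentially the paper's. You use the same split into $\partial_{x_2}\varphi=0$ (so $X_1=Z$) and ${\rm d}x_2/{\rm d}x_1=0$ (all fast propensities of the form $k_jx_2$, with $x_2=\mathcal{O}(\varepsilon)$ on $S_\varepsilon$). Your sign argument supplies the justification the paper merely asserts, and in fact shows that all of $B_0$, not only its first row, is $\mathcal{O}(\sqrt{\varepsilon})$ on $S_\varepsilon$. One repair is needed in Case 1. The blanket claim that the columns of $S_0$ lie in ${\rm image}\,A_0$ is false in general: fast $X_1\to X_2$ and $X_1\to\emptyset$ give ${\rm image}\,S_0=\mathbb{R}^2$ but ${\rm rank}\,A_0=1$. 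Instead, derive $(S_0)_{1j}=0$ from the zero first row of $A_0$ by the same sign reasoning: $(A_0)_{11}=0$ excludes fast reactions consuming $X_1$, and $(A_0)_{12}=0$ excludes fast conversions $X_2\to X_1$.
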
\label{prop5}
\begin{proof}
If $\partial_{x_2}\varphi(x)=0$, then $x_1$ {\it is} the slow variable and therefore, by linearity, $x_1=z$ and $X_1=Z$. By definition of the slow variable, both the drift and diffusion of $Z$ occur solely on the slow timescale.
On the other hand, if $\partial_{x_2}\varphi(x)\neq 0$ but ${\rm d}x_2/{\rm d}x_1=0$, then $E^c$ is the $x_1$-axis and the layer problem admits the form
\begin{equation*}
\dot{x} = A_0x \equiv S_0r(x_2),\;\;\text{with}\;\;r_i(x_2)=k_ix_2
\end{equation*}
since the critical manifold coincides with $x_2=0$. Hence, all fast ($\mathcal{O}(1)$) reactions have the form $r_jx_2$, and all slow ($\mathcal{O}(\varepsilon)$) reactions have the form $\varepsilon r_ix_1$. The asymptotic expansion of the slow eigenspace has the form
\begin{equation}\label{Aexpan}
x_2 = \varepsilon \xi_1(x_1)+ \varepsilon^2 \xi_2(x_1) + \mathcal{O}(\varepsilon^3),
\end{equation}
where $\xi_i(x_1)$ are linear in $x_1$. When the expression \eqref{Aexpan} is substituted in $B(x)$ as defined by \eqref{BMatrix}, every non-zero term in $B(x)$ is $\mathcal{O}(\sqrt{\varepsilon})$, and the assertion follows.
\end{proof}

\begin{remark}
Proposition \ref{prop5} is somewhat counterintuitive. It seems reasonable that if the initial conditions of the mass action system are selected to lie within the slow eigenspace so that
\begin{equation*}
\frac{{\rm d}x}{{\rm d}t} \sim \mathcal{O}(\varepsilon),\;\;t\geq 0
\end{equation*}
then the fluctuations, $X$, about the mean field, $x$, will also be ``slow" in the sense that the variance, ${\rm Var}(X)$, should remain effectively frozen (constant) over fast timescales:
\begin{equation*}
\cfrac{{\rm d}}{{\rm d}t}{\rm Var}(X)\sim \mathcal{O}(\varepsilon), \;\;t\geq 0.
\end{equation*}
However, this will not be the case unless \eqref{vanished} is satisfied. When the critical manifold does not correspond to a coordinate axis, the asymptotic expansion of the slow eigenspace is of the form
\begin{equation}\label{Aexpan2}
x_2 = \xi_0(x_1) + \varepsilon \xi_1(x_1)+ \varepsilon^2 \xi_2(x_1) + \mathcal{O}(\varepsilon^3),
\end{equation}
and therefore the leading order term is $\mathcal{O}(1)$ instead of $\mathcal{O}(\varepsilon)$ as it is in \eqref{Aexpan}. Consequently, the substitution of \eqref{Aexpan2} in $B(x)$ will result in terms that are $\mathcal{O}(1)$ and $\mathcal{O}(\sqrt{\varepsilon})$; thus, diffusion will occur on both fast and slow timescales, even though the initial conditions assigned to the mass action system lie in -- or sufficiently close to -- the slow eigenspace. A concrete example of this statement is presented in the Appendix, and is also thoroughly discussed in~\citet{JEWS2026}.
\end{remark}

\section{First order reaction networks and the heuristic reduction of the CME}\label{section:4}

In this section, we invoke Proposition \ref{prop3} and revisit the reductions introduced in Section 1. Specifically, we will work out the slow timescale reductions of the LNA from the linear reaction network \eqref{reac1} for both small $k_1$ and small $k_3$, and demonstrate why the heuristically-reduced CME is accurate for the former case but not for the latter. It will be helpful to recall It\^{o}'s formula, taken directly from~\citet{kuehn2015}:
\begin{definition}\label{Ito}
{\bf{It\^{o}'s Formula}}: Let $z$ satisfy the diffusion process
\begin{equation*}
{\rm d}z = a(z)\;{\rm d}t + \sigma A(z)\;{\rm d}W(t)
\end{equation*}
and suppose $v=u(z,t)$. Then, $v$ satisfies
\begin{equation*}
{\rm d}v= \left[\cfrac{\partial u}{\partial t}+\cfrac{\partial u}{\partial z}a(z) + \cfrac{1}{2}\cfrac{\partial^2u}{\partial z^2}A(z)^2\right]{\rm d}t + \cfrac{\partial u}{\partial z}A(z){\rm d}W(t).
\end{equation*}
\end{definition}
The examples presented in this section are not only consistent with It\^{o}'s formula, but are in some sense interpretable as applications of It\^{o}'s formula. In essence, Propositions 1-3 highlight the following: for linear reaction networks, reductions on the slow time scale, $\tau$, are determined by linear maps $x_i=\ell_i z$ and $X_i=\ell_i Z$, where $z$ and $Z$ are the respective slow variables. If the slow variable satisfies the diffusion process,
\begin{equation*}
{\rm d}Z= \lambda Z\;{\rm d}\tau -\sqrt{|\lambda| z}\;{\rm d}W(\tau),\quad \lambda < 0
\end{equation*}
then, in accordance with It\^{o}'s formula, the slow drift and diffusion of $X_i$ satisfy
\begin{equation*}
{\rm d}X= \ell\lambda Z\;{\rm d}\tau -\sqrt{\ell^2|\lambda| z}\;{\rm d}W(\tau) = \lambda X\;{\rm d}\tau -\sqrt{\ell|\lambda| x}\;{\rm d}W(\tau),
\end{equation*}
since $x=\ell z$ and $X=\ell Z$. The significance of this observation is that a Poisson process is defined by a master equation of the form
\begin{equation}\label{DP1}
\cfrac{\partial}{\partial t}Pr(n,t) = \varepsilon |\lambda|(E_n^{-1}-1)nPr(n,t), \quad \lambda <0
\end{equation}
has an associated LNA given by 
\begin{subequations}\label{RF}
\begin{align}
x' &=  \lambda x,\\
{\rm d}X&= \lambda X\;{\rm d}\tau -\sqrt{|\lambda| x}\;{\rm d}W(\tau)
\end{align}
\end{subequations}
where $\lambda <0$. Thus, our objective will be to show that the Fenichel reduction of the LNA does not assume the form \eqref{RF} when $\ell \neq 1$, and therefore the heuristically reduced CME will not approximate the variance of $X$.

\begin{example}
Consider \eqref{reac1} with small $k_1$. The mass action equations are 
\begin{subequations}\label{sys1RV}
\begin{align}
\dot{x} &= -\varepsilon k_1 x + k_2y,\label{xdotRV}\\
\dot{y} &= \;\;\;\varepsilon k_1x - k_2y - k_3y.\label{ydotRV}
\end{align}
\end{subequations}
The Jacobian of the layer problem is
\begin{equation}\label{layer1}
A_0=\begin{pmatrix}
0 & \;\;\;k_2\\0 & -k_2-k_3
\end{pmatrix}, \quad \text{with}\;\; v_+^{(0)} = \begin{pmatrix}1\\0\end{pmatrix}, \quad w_0=\begin{pmatrix}1\\ \cfrac{k_2}{k_2+k_3}\end{pmatrix}.
\end{equation}
In this case, the center subspace $E^c$ is the $x$-axis and is therefore not parameterizable by $y$. Thus, we will look for a reduced equation for $x$ (hence, $x$ is the parametric variable of interest). From \eqref{layer1} we have the following:
\begin{equation}\label{EIGS1}
    \cfrac{{\rm d}x}{{\rm d}z} = (w_0^Tv_+^{(0)})^{-1} =1, \qquad \varepsilon \lambda_+^{(1)} = \cfrac{w_0^TA_1v_+^{(0)}}{w_0^Tv_+^{(0)}}= -\cfrac{\varepsilon k_1 k_3}{k_2+k_3}.
\end{equation}
Thus, from \eqref{EIGS1} and Proposition \ref{prop3}, the CME of the reaction network \eqref{reac1} with small $k_1$ admits the reduced CME given by \eqref{rCME1}. 
\end{example}

\begin{example}
We now consider the reaction network \eqref{reac1} with small $k_3$. The mass action equations in this case are
\begin{subequations}\label{sys1RV1}
\begin{align}
\dot{x} &= -k_1 x + k_2y,\label{xdotRV1}\\
\dot{y} &= \;\;\;k_1x - k_2y - \varepsilon k_3y,\label{ydotRV1}
\end{align}
\end{subequations}
and by inspection we have:
\begin{equation*}
A_0 = \begin{pmatrix}-k_1 & \;\;\;k_2\\\;\;\;k_1& -k_2\end{pmatrix}.
\end{equation*}
If we choose to parameterize $E^c$ by $x$, then we can express $v_+^{(0)}$ as
\begin{equation}\label{vec}
v_+^{(0)}= \begin{pmatrix}1 \\ \cfrac{k_1}{k_2}\end{pmatrix} \quad \text{with}\;\;w_0=\begin{pmatrix}1\\1\end{pmatrix}.
\end{equation}
From \eqref{vec}, we obtain 
\begin{equation}\label{58}
\cfrac{{\rm d}x}{{\rm d}z}= \cfrac{k_2}{k_1+k_2},\qquad \varepsilon \lambda_+^{(1)} = -\cfrac{\varepsilon k_3 k_1}{k_1+k_2},
\end{equation}
and therefore the projected (reduced) LNA of $x$ on the slow timescale is
\begin{subequations}\label{LN00}
\begin{align}
x' &= -\cfrac{k_3k_1}{k_1+k_2}\cdot x,\label{H1}\\
{\rm d}X &= -\cfrac{k_3k_1}{k_1+k_2}\cdot X\;{\rm d}\tau- \sqrt{\cfrac{k_3k_1}{k_1+k_2}\cdot\textcolor{red}{\left(\cfrac{k_2}{k_1+k_2}\right)}\cdot x}\;{\rm d}W(\tau)\label{LN1}
\end{align}
\end{subequations}
and therefore the heuristic CME based on \eqref{H1} given by \eqref{rCME2} will fail to accurately approximate the variance. The same argument holds for $y$, for which we have
\begin{subequations}\label{LN0}
\begin{align}
y' &= -\cfrac{k_3k_1}{k_1+k_2}\cdot y,\label{H1y}\\
{\rm d}Y &= -\cfrac{k_3k_1}{k_1+k_2}\cdot Y\;{\rm d}\tau- \sqrt{\cfrac{k_3k_1}{k_1+k_2}\cdot\textcolor{red}{\left(\cfrac{k_1}{k_1+k_2}\right)}\cdot y}\;{\rm d}W(\tau).\label{LN2}
\end{align}
\end{subequations}
However, from \eqref{vec} it is clear that the slow variable is the sum $z=x+y$, which is dynamically equivalent to $p$, the concentration of $P$. Thus, the reduced CME \eqref{rCME3} accurately approximates the stochastic formation of $P$ molecules (again, see {\sc figure} \ref{FIG2} for numerical confirmation). 
\end{example}

A complete analysis (over both fast and slow timescales) of Example 2 is presented in the Appendix. Propositions 1--3, as well as examples 1--2, reveal that the heuristically reduced CME based on the deterministic reductions is accurate for linear (first order) reaction networks, as long as ${\rm d}x_i/{\rm d}z=1$, where $x_i$ is the parametric variable (concentration) of interest. It should be noted that Proposition \ref{prop2} is consistent with It\^{o}'s formula \ref{Ito}: For first order reaction networks, the maps $v=u(z,t)$ are linear and have no explicit time dependence: $v=\ell z$. In the nonlinear regime (i.e., when the reaction network is nonlinear), It\^{o}'s formula ensures nonlinear coordinate transformations will generate additional (noise-induced) drift terms, which appear due to biases introduced by the curvature of the slow manifold as well as the curvature of the flow field; see~\citet{Parsons2017}, as well as~\citet{Katz}. The takeaway from this is that a heuristically reduced CME is likely to be inaccurate whenever the coordinate transformations between variables of interest are nonlinear. We will explore this idea further in Section \ref{section:5}. 

\section{Nonlinear reaction networks}\label{section:5}

The implication of our linear analysis raises two obvious questions concerning the heuristic reduction of the master equation when the reaction network is nonlinear. First, are heuristically reduced CMEs accurate near mean field stationary points where the mass action equations can be approximated linearly? Secondly, are such reductions accurate when the coordinate transformation to the slow variable is linear despite the nonlinearity of the reaction mechanism? We will examine the answers to these questions in this section by analyzing the nonlinear Michaelis-Menten reaction network, as well as \eqref{GC}, which was previously introduced in Section \ref{section:3}.

\subsection{The Michaelis-Menten reaction network and the stochastic QSSA}
The irreversible Michaelis-Menten reaction network
\begin{align}\label{mm1}
 \ce{S + E <=>[$k_1$][$k_{-1}$] C ->[$k_2$] E + P},
\end{align}
describes the catalysis of a substrate, $S$, into a product, $P$ by an enzyme, $E$. When the concentration of enzyme is small, the network is modeled by the following nonlinear mass action system
\begin{subequations}\label{mmMASS}
\begin{align}
\dot{s} &= -k_1(\varepsilon e_0-c)s + k_{-1}c,\\
\dot{c} &= \;\;k_1(\varepsilon e_0-c)s -k_{-1}c-k_2c,
\end{align}
\end{subequations}
where $s$ and $c$ are the concentrations, $S$, and the complex, $C$, and $e_0$ is the total enzyme (bound or unbound), $E$, which is a conserved quantity. 

The mass action system \eqref{mmMASS} is a singularly perturbed differential equation. On the slow timescale, the Fenichel reduction of \eqref{mmMASS} is the well-studied quasi-steady-state approximation (QSSA),
\begin{subequations}\label{QSSA}
\begin{align}
s' &= -\cfrac{e_0k_2s}{s+K_M},\qquad K_M:=\cfrac{k_{-1}+k_2}{k_1}\\
c'&=0,
\end{align}
\end{subequations}
is valid whenever the total enzyme concentration is much lower than the Michaelis constant: $e_0\ll K_M$; see~\cite{Goeke2012,Goeke2013,Goeke2015} for rigorous derivations of QSSA's for enzymatic reaction networks. 

In~\cite{Rao2003}, Rao and Arkin introduced what is now termed the {\it stochastic} QSSA 
\begin{equation}\label{stochMM}
\cfrac{\partial}{\partial t}Pr(n,t) = (E_n^{+1}-1)\cfrac{e_0k_2n}{n+K_M}Pr(n,t)
\end{equation}
where again, $e_0$ is the total enzyme concentration and $n$ is the number of substrate (S) molecules. Based on scaling analysis, Rao and Arkin argued that the reduction of the CME based on the QSSA \eqref{mmMASS} is valid under the same conditions that support the validity of the deterministic QSSA. However, several years later~\citet{Sanft} reported that the stochastic QSSA \eqref{stochMM} appeared to occasionally {\it overestimate} the variance of $n$ throughout the stochastic timecourse. The observation published by~\citet{Sanft} was based on qualitative assessments of simulations. Moreover, \citet{Thomas2011} reported that the stochastic QSSA \eqref{stochMM}, when applied to an open Michaelis-Menten reaction network with a constant influx of substrate, overestimated the steady-state variance. In followup studies, \citet{Thomas2012,ThomasPO} introduced the {\it slow scale} LNA, and found that for substrate, $s$, the mean field fluctuations, $X_s$, satisfy
\begin{equation}\label{ssLNA}
{\rm d}X_s = -\cfrac{\partial}{\partial s}\left(\cfrac{k_2e_0s}{s+K_M}\right)X_s\;{\rm d}\tau -\sqrt{\cfrac{k_2e_0s}{s+K_M}\left(1-\cfrac{2K\cdot s}{(s+K_M)^2}\right)}\;{\rm d}W(\tau),\quad K:=k_2/k_1
\end{equation}
which is noticeably different from the LNA corresponding to \eqref{stochMM}, given by \eqref{SLNA}:
\begin{equation}\label{SLNA}
{\rm d}X_s = -\cfrac{\partial}{\partial s}\left(\cfrac{k_2e_0s}{s+K_M}\right)X_s\;{\rm d}\tau -\sqrt{\cfrac{k_2e_0s}{s+K_M}}\;{\rm d}W(\tau).
\end{equation}
Comparing \eqref{ssLNA} with \eqref{SLNA}, it is clear that the heuristically reduced CME \eqref{stochMM} only provides an accurate steady-state substrate variance estimate whenever 
\begin{equation}
\eta(s) :=\cfrac{2K\cdot s}{(s+K_M)^2} \ll 1.
\end{equation}

The function $\eta(s)$ is maximal when $s=K_M$ and approaches zero at very high or very low substrate concentrations:
\begin{equation*}
    \lim_{s\to +\infty}\eta(s)=0,\qquad \lim_{s\to 0+} \eta(s) =0.
\end{equation*}
Consequently, when $e_0\ll K_M$, ~\citet{Thomas2012} found that the stochastic QSSA introduced by~\citet{Rao2003} is only accurate at extremely high or low substrate concentrations unless $k_2\ll k_{-1}$.

Although the combined research of~\citet{Sanft} and~\citet{Thomas2011} established that the stochastic QSSA \eqref{stochMM} is not accurate under the same conditions that ensure the accuracy of the deterministic QSSA \eqref{QSSA}, neither work provided any indication as to {\it why} this occurs. Our analysis of linear reaction networks provides an explanation. To start, we begin by defining the slow variable ``$z$," in accordance with Fenichel theory discussed in~\cite{Noethen2011,Wechselberger2020}. The layer problem is formulated by setting $\varepsilon=0$ in \eqref{mmMASS}:
\begin{subequations}\label{mmL}
\begin{align}
\dot{s} &= \;\;\;k_1cs + k_{-1}c,\\
\dot{c} &= -k_1cs -k_{-1}c-k_2c.
\end{align}
\end{subequations}
The critical manifold is the $s$-axis: $S_0:=\{(s,c)\in \mathbb{R}^2:c=0\}$. Each point, $(s_b,0)\in S_0$ is called a base point, and for each base point there exists an invariant\footnote{Invariant in this context means invariant with respect to the layer problem.} fast fiber, $\mathcal{F}_0(\cdot,\cdot)$\footnote{The union of fast fiber defines the stable manifold of $S_0$: $W^s(S_0):=\bigcup_{(s_b,0)\in S_0}\mathcal{F}_0(s_b,0)$.}
\begin{equation}
\mathcal{F}_0(s_b,0):=\{(s,c)\in \mathbb{R}^2: s+c+K\cdot\ln\left(\cfrac{s+K_S}{s_b+K_S}\right) = s_b\}, \quad K_S :=k_{-1}/k_1,\quad K:=k_2/k_1
\end{equation}
The map $z=\zeta(s,c)=s+c+K\cdot\ln(s+K_S)$ is nonlinear, but  we can identify phase-plane regions where the fibers (and therefore $z$) are linear functions of $s$ and $c$. At very high and very low substrate concentrations, these fibers are asymptotically linear, and correspond to the regions in the phase-plane where the heuristically reduced CME \eqref{stochMM} is highly accurate. To show this, let us start with the latter case and first consider small $s$. In this region, it holds that
\begin{multline}
s+c+K\cdot\ln (s+K_S) = s+c+K\cdot \ln (K_S) + K\cdot \ln (1+s/K_S) \\= s_b + K\ln (K_S) + K\cdot \ln (1+s_b/K_S),
\end{multline}
which reduces to
\begin{equation*}
s+c + K\cdot \ln (1+s/K_S) = s_b +  K\cdot \ln (1+s_b/K_S).
\end{equation*}
If $s,s_b\ll K_S$, then $\ln (1+s_b/K_S)\sim s_b/K_S$ and $\ln (1+s/K_S)\sim s/K_S.$ This leaves us with
\begin{equation*}
s+c+ \cfrac{k_{-1}}{k_2}s = s_b + \cfrac{k_{-1}}{k_2}s_b,
\end{equation*}
and therefore at very low substrate concentrations we have
\begin{equation}\label{smallS}
\mathcal{F}(s_b,0) \sim \bigg\{(s,c)\in\mathbb{R}^2:s+\left(\cfrac{k_{-1}}{k_2+k_{-1}}\right)c = s_b\bigg\},
\end{equation}
which is linear in $s$ and $c$.

Next, consider large substrate concentrations. By similar arguments (namely, we assume $s,s_b \gg K_S$) we have
\begin{equation*}
s+c + K\cdot\ln(S+K_S) \sim s+c+K\cdot\ln (s) \sim s_b + K\cdot \ln(s_b).
\end{equation*}
It follows that because
\begin{equation*}
\lim_{s\to \infty}\cfrac{\ln(s^K)}{s}=0 \implies K\cdot \ln(s) \sim \smallO(s)
\end{equation*}
the logarithmic terms are negligible in comparison to the linear term, $s$. Thus, at very high substrate concentrations we obtain
\begin{equation}\label{Sfib}
\mathcal{F}(s_b,0) \sim \{(s,c)\in \mathbb{R}^2: s+c=s_b\}.
\end{equation}
which is also linear in $s$ and $c$. A consequence of \eqref{Sfib} is that, to leading order, there is no product formation on the fast timescale when $s$ is large.

Next, we illustrate {\it why} the heuristic CME \eqref{stochMM} prevails at very high and very low substrate concentrations. Beginning with the latter, the Michaelis-Menten mass action equations are well-approximated by their linearization about $(s,c)=(0,0)$ when the substrate concentration is extremely small compared to the Michaelis constant, $K_M$,~\cite{FirstOrder}:
\begin{subequations}
\begin{align}
\dot{s} &= -k_1e_0s +k_{-1}c,\\
\dot{c} &= \;\;\;k_1e_0s -k_{-1}c-k_2c.
\end{align}
\end{subequations}
When $e_0$ (or $k_1$) is taken as the singular perturbation parameter (we will consider $e_0\mapsto \varepsilon e_0$), we obtain the following {\it linear} singularly perturbed system
\begin{subequations}\label{linQSS}
\begin{align}
\dot{s} &= -\varepsilon {k}_1e_0s +k_{-1}c,\\
\dot{c} &= \;\;\;\varepsilon {k}_1e_0s -k_{-1}c-k_2c.
\end{align}
\end{subequations}
The form of \eqref{linQSS} is identical to \eqref{sys1} with
\begin{equation*}
A_0 = \begin{pmatrix}0&k_{-1}\\0 & -k_{-1}-k_2\end{pmatrix}, \quad v_+^{(0)}=\begin{pmatrix}1\\0\end{pmatrix}, \quad w_0=\begin{pmatrix}1\\ \gamma\end{pmatrix}, \quad \gamma:=\cfrac{k_{-1}}{k_{-1}+k_2}
\end{equation*}
Although the slow variable in this case is $z=s+\gamma c$, it is clear that, on the critical manifold $c=0$, we have ${\rm d}s/{\rm d}z=1$. Moreover, the first order approximation to $\lambda_+(\varepsilon)$ is
\begin{equation*}
\varepsilon \lambda_+^{(1)} = \varepsilon \cfrac{w_0^TA_1v_+^{(0)}}{w_0^Tv_+^{(0)}} =-\varepsilon \cfrac{k_2e_0}{K_M}.
\end{equation*}
Thus, the reduced LNA for substrate (at low concentration) is
\begin{subequations}
\begin{align}
s' &= -\cfrac{k_2e_0}{K_M}\cdot s = z'\\
{\rm d}X_s &= -\cfrac{k_2e_0}{K_M}\cdot X_s\;{\rm d}\tau -\sqrt{\cfrac{k_2e_0}{K_M}\cdot s}\;{\rm d}W(\tau) = {\rm d}Z,
\end{align}
\end{subequations}
and therefore, by Proposition \ref{prop3}, the reduced CME,
\begin{equation}\label{lstochMM}
\cfrac{\partial}{\partial t}Pr(n,t) = \cfrac{k_2e_0}{K_M}(E_n^{+1}-1)nPr(n,t),
\end{equation}
will accurately estimate the first two moments of $n$, which explains why the heuristic reduction \eqref{stochMM} retains precision when the substrate concentration is low. The reduced CME \eqref{lstochMM} is interpretable as a special case of \eqref{stochMM}.

\begin{remark}
From the perspective of singular perturbation parameters, it can be argued that the validity of the heuristically reduced CME \eqref{stochMM} holds in the limit as $k_1\to 0$ while keeping all other parameters fixed and constant; see~\citet{Goeke2015} for a precise definition of singular perturbations and Tikhonov-Fenichel parameter values (TFPVs). 
\end{remark}

Next we consider large $s$. The perturbed equations for small $e_0\mapsto \varepsilon e_0$ are
\begin{subequations}\label{SPA}
\begin{align}
\dot{s} &=-k_1(\varepsilon e_0-c)s + k_{-1}c,\\
\dot{c} &= \;\;\;k_1(\varepsilon e_0-c)s -k_{-1}c-k_2c.
\end{align}
\end{subequations}
and the critical manifold is {\it still}
\begin{equation*}
S_0 = \{(s,c)\in \mathbb{R}^2: c=0\}.
\end{equation*}
However, nothing about {\it large $s$} is overtly encoded in \eqref{SPA}. The projection matrix, $P_0$, is given by
\begin{equation*}
P_0 = \begin{pmatrix}1 & \cfrac{s+K_S}{s+K_M}\\0&0\end{pmatrix},
\end{equation*}
and therefore the Fenichel reduction is given by
\begin{equation}
\begin{pmatrix}s'\\c'\end{pmatrix} = \begin{pmatrix}1 & \cfrac{s+K_S}{s+K_M}\\0&0\end{pmatrix}\begin{pmatrix}-k_1e_0s\\\;\;\;k_1e_0s\end{pmatrix} = -\cfrac{k_2e_0s}{s+K_M}\begin{pmatrix}1\\0\end{pmatrix}
\end{equation}
which is of course the standard QSSA of the Michaelis-Menten reaction mechanism~\cite{Schnell1997}. But, how do we account for large $s$? The answer lies in the form of $P_0$, which can be expanded in terms of $\varepsilon_1 = K_S/s$ and $\varepsilon_2=K_M/s$:
\begin{equation}\label{Pexp}
P_0 = \begin{pmatrix}1 & \cfrac{s+K_S}{s+K_M}\\0&0\end{pmatrix}=
\begin{pmatrix}1&1\\0&0\end{pmatrix} + \begin{pmatrix}0&\varepsilon_1-\varepsilon_2\\0&0\end{pmatrix} + \mathcal{O}(\varepsilon_i\varepsilon_j), \quad\text{with}\;\; \varepsilon_1-\varepsilon_2 =-K/s.
\end{equation}
\begin{remark}
Note that the zeroth-order approximation to $P_0$ is
\begin{equation*}
P_0 = \begin{pmatrix}1\\0\end{pmatrix}\begin{pmatrix}1&1\end{pmatrix} + \mathcal{O}(\varepsilon_1,\varepsilon_2),
\end{equation*}
which is the outer product of the vector $(1\;\;0)^T$ that is tangent to $S_0$, with $(1\;\;1)$. The row vector $(1\;\;1)$ is the derivative of the map $z=\varphi(s,c)=s+c$,
\begin{equation*}
D\varphi(s,c)=(1\;\;1)
\end{equation*}
that coincides with the linear fiber \eqref{Sfib}. Thus, again we see that to leading order in $\varepsilon,\varepsilon_1,\varepsilon_2$, the fast fiber is linear in $s$ and $c$ and the slow variable is $z=s+c$.
\end{remark}
Next, if we then compute the Fenichel reduction up to leading order in $\varepsilon,\varepsilon_1$ and $\varepsilon_2$ we have
\begin{equation}
\begin{pmatrix}
s'\\c'
\end{pmatrix} = -k_2e_0\begin{pmatrix}1\\0
\end{pmatrix} + \mathcal{O}(\varepsilon_i\varepsilon_j).
\end{equation}

To complete the analysis, we now proceed to reduce the SDE that characterizes the fluctuations, $X_s,X_c$, about the mean field
\begin{multline}\label{ssL}
\begin{pmatrix}
{\rm d}X_s\\{\rm d}X_c
\end{pmatrix} = \begin{pmatrix}\;\;\;k_1c & \;\;\;k_1(s+K_S)\\-k_1c & -k_1(s+K_M)
\end{pmatrix}\begin{pmatrix}
X_s\\X_c \end{pmatrix}{\rm d}t+ \begin{pmatrix}-k_1e_0\\\;\;\;k_1e_0 
\end{pmatrix}X_s\;{\rm d}\tau \\+ \begin{pmatrix}-\sqrt{k_1(\varepsilon e_0-c)s}& \sqrt{k_{-1}c}&0\\\sqrt{k_1(\varepsilon e_0-c)s}&-\sqrt{k_{-1}c}&-\sqrt{k_2c}\end{pmatrix}{\rm d}{\bf W}(t)
\end{multline}
where ${\rm d}{\bf{W}}:=({\rm d}W_1,\;\;{\rm d}W_2\;\;{\rm d}W_3)^T$. Substitution of $c=\varepsilon e_0s/(s+K_M)$ into \eqref{ssL}, which is the first-order approximation to the slow manifold, $S_{\varepsilon}$, yields (to leading order in $\varepsilon_1$ and $\varepsilon_2$)\footnote{The details of this simplification can be found in the Appendix.}
\begin{equation}\label{redSS}
\begin{pmatrix}
{\rm d}X_s\\{\rm d}X_c
\end{pmatrix} = \begin{pmatrix}0 & \;\;\;k_1s\\0& -k_1s
\end{pmatrix}\begin{pmatrix}
X_s\\X_c \end{pmatrix}{\rm d}t +\sqrt{\varepsilon}\begin{pmatrix}0& \sqrt{k_{-1}e_0}&0\\0&-\sqrt{k_{-1}e_0}&-\sqrt{k_2e_0}\end{pmatrix}{\rm d}{\bf W}(t).
\end{equation}
Setting $\varepsilon =0$ reveals the critical manifold
\begin{equation*}
S_0 :=\{(X_s,X_c)\in \mathbb{R}^2: X_c=0\},
\end{equation*}
and the projection matrix in this case is (again, to leading order)
\begin{equation*}
P_0 = \begin{pmatrix}1&1\\0&0\end{pmatrix}.
\end{equation*}
Projecting \eqref{redSS} onto the center subspace of the drift matrix yields
\begin{subequations}
\begin{align}
s' &= -k_2e_0 = z',\\
{\rm d}X_s &= -\sqrt{k_2e_0}\;{\rm d}W(t) = {\rm d}Z.
\end{align}
\end{subequations}
Since $p=-z$, the reduced CME for product formation is
\begin{equation}\label{PCME}
\cfrac{\partial}{\partial t}Pr(n,t) =  v_{\infty}(E_n^{-1}-1)Pr(n,t), \quad v_{\infty}:=k_2e_0,
\end{equation}
which has a well-known solution\footnote{If the units here look suspicious, recall that we set $\Omega=1$, so to be consistent one would include the system size and write $v_{\infty}\mapsto \Omega v_{\infty}$ in \eqref{PCME} and \eqref{DIST}.}
\begin{equation}\label{DIST}
Pr(n,t) = \cfrac{(v_{\infty}t)^n}{n!}e^{\displaystyle -v_{\infty}t}.
\end{equation}

Both \eqref{PCME} and \eqref{lstochMM} can be seen as special cases of \eqref{stochMM}; see~\cite{JSEssLNA} for an additional discussion. It is important to note the conditions that ensure the accuracy and precision of the heuristic CME \eqref{stochMM} as $e_0\to 0$. First, when the mass action equations are restricted to lie near $S_{\varepsilon}$, all diffusion occurs on the slow timescale, $\tau$ (again, see \citet{JEWS2026} for additional details). Second, the fibers that foliate $W^s(S_0)$ are, for all intents and purposes, linear in the regions where the heuristic CME is valid. Third, ${\rm d}s/{\rm d}z=1$ holds along the critical manifold in regions where the concentration of $s$ is very large or very small. This is not true in the intermediate region where $s\propto K_M$. In fact, constructing a diffeomorphism (a smooth, invertible coordinate transformation) from the first integral~\cite{Noethen2011}
\begin{equation*}
s+c+K\cdot \ln (s+K_S) = const. :=z
\end{equation*}
reveals that the temporal behavior of the slow variable ``$z$" depends nonlinearly on $s$
\begin{equation*}
z' = -\cfrac{v_{\infty}s}{s+K_S}.
\end{equation*}
On the slow timescale we have
\begin{equation}
s' = z' \cdot \cfrac{{\rm d}s}{{\rm d}z} = z'\cdot \cfrac{s+K_S}{s+K_M } = -\cfrac{v_{\infty}s}{s+K_M},
\end{equation}
which is the standard QSSA. Hence, the mean field (average) dynamics for $z$ and $s$ differ since ${\rm d}s\neq {\rm d}z$ in phase plane regions where $s\propto K_M$. 

At intermediate substrate concentrations on the order of $K_M$, additional requirements are needed to ensure the accuracy and precision of the heuristic CME \eqref{stochMM}. Specifically, it is necessary to ensure $\eta(s) \ll 1$, as $e_0\to 0$ alone does not induce small $\eta(s)$. If we think of $\eta(\cdot)$ not only as a function of $s$ but also as a function of the rate constants, $k_i$, then we can state the following:
\begin{proposition}
Let $e_0\mapsto \varepsilon e_0$ and $k_i\mapsto \varepsilon ^{\pm 1}k_i$. If 
\begin{equation}
\lim_{\varepsilon \to 0} \eta(s,\varepsilon) =0
\end{equation}
where $k_i$ is a rate constant, then the fibers $\mathcal{F}_0(\cdot)$ that foliate $W^s(S_0)$, are linear in $s$ and $c$. 
\end{proposition}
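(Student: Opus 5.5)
The plan is to work directly with the explicit first integral of the layer problem \eqref{mmL},
\begin{equation*}
\zeta(s,c)=s+c+K\ln(s+K_S),\qquad K=\frac{k_2}{k_1},\quad K_S=\frac{k_{-1}}{k_1},
\end{equation*}
whose level sets are the fast fibers $\mathcal{F}_0(s_b,0)$. The fibers are straight lines in the $(s,c)$-plane exactly when the nonlinear term $K\ln(s+K_S)$ either becomes negligible relative to $s+c$, or becomes affine in $s$ over the relevant range. I would show that $\eta(s,\varepsilon)\to 0$ forces one of these two alternatives. Along the way, the slope ${\rm d}s/{\rm d}z=(s+K_S)/(s+K_M)$ converges to a constant, so the projection matrix $P_0$ becomes constant and the fibers become parallel lines.

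\textbf{Step 1: reduce the hypothesis to conditions on ratios.} Write
\begin{equation*}
\eta=\frac{2Ks}{(s+K_M)^2},\qquad K_M=K_S+K .
\end{equation*}
Since $(s+K_M)^2\ge \max\{s,K_M\}\cdot (s+K_M)$, the condition $\eta\to 0$ under the scalings $k_i\mapsto\varepsilon^{\pm1}k_i$ means that, for $s$ in the range of interest, either
\begin{enumerate}[(a)]
\item $K/(s+K_M)\to 0$, or
\item $s/(s+K_M)\to 0$ with $K/(s+K_M)$ bounded.
\end{enumerate}
More precisely, I would note that
\begin{equation*}
\eta=2\cdot\frac{K}{s+K_M}\cdot\frac{s}{s+K_M},
\end{equation*}
and both factors lie in $[0,1]$. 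Hence $\eta\to0$ is equivalent to the product of these two ratios tending to zero. I would then split into the cases where the first ratio vanishes and where the second ratio vanishes.

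\textbf{Step 2: fiber geometry in each case.} Compute the fiber slope by differentiating $\zeta$:
\begin{equation*}
\frac{{\rm d}c}{{\rm d}s}\bigg|_{\mathcal{F}_0}=-1-\frac{K}{s+K_S}.
\end{equation*}
A fiber is linear iff this slope is asymptotically independent of $s$ along it.
\begin{itemize}
\item In case (a), $K/(s+K_S)\to 0$ (using $K_S\le K_M$ together with the bound on the first factor). The slope tends to $-1$ and we recover the linear fibers $s+c=s_b$ of \eqref{Sfib}. Either $k_2\ll k_{-1}$ or $s$ is large; both are covered uniformly.
\item In case (b), $s\ll K_M$. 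I would further split according to whether $s\ll K_S$. If $s\ll K_S$, expanding $\ln(1+s/K_S)$ as in the derivation of \eqref{smallS} gives the constant slope $-1-K/K_S=-K_M/K_S$, i.e.\ the linear fibers \eqref{smallS}. If instead $s\gtrsim K_S$ while $s\ll K_M$, then $K\gg s$, and this sub-case would need separate treatment.
\end{itemize}

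\textbf{Step 3 (main obstacle): the mixed sub-case, and uniformity.} In the sub-case $K_S\lesssim s\ll K$, the slope $-1-K/(s+K_S)$ is large and genuinely varies with $s$, so it is not obvious that the fiber is linear. The first thing I would try is to rescale $c$ by $K$: with $\tilde c=c/K$, the fiber equation becomes
\begin{equation*}
\tilde c+\ln(s+K_S)=\text{const}+O(s/K).
\end{equation*}
I expect the honest resolution to be one of two things. Either the statement should be read within the fast-fiber segment that the dynamics actually visits, namely $c\le \varepsilon e_0$, which is small under $e_0\mapsto\varepsilon e_0$. On that segment $s$ varies only by $O(\varepsilon e_0(s+K_S)/K)$, and the logarithm can be linearized about $s_b$, giving a line to leading order. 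Or, alternatively, one restricts attention to scalings where $K_S/K_M$ is bounded below.

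I would adopt the first reading: linearize $\ln(s+K_S)$ about the base point $s_b$ over the $O(\varepsilon)$-width segment of the fiber. This yields linearity in every case. I would then check that this linearization is uniform in how the $k_i$ are scaled, which is where I expect the argument to require the most care.
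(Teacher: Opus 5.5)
\textbf{Gap in Step 3.} The resolution you adopt in Step~3 does not prove the proposition. Restricting to the segment $0\le c\le\varepsilon e_0$ and linearizing $\ln(s+K_S)$ about $s_b$ is a tangent-line approximation. It holds for \emph{any} smooth fiber over an $O(\varepsilon)$-length piece, so it never uses $\eta\to0$. The same argument would therefore ``prove'' linearity when only $e_0$ is small and $s\sim K_M$ (with $K$ comparable to $K_S$). It would also ``prove'' it under $k_{-1}\mapsto\varepsilon k_{-1}$, where $\eta\to 2Ks/(s+K)^2\neq0$ and the limiting fibers $s+c+K\ln s=\mathrm{const}$ are visibly curved. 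These are exactly the regimes in which the paper stresses that the true fibers are curved and only the linearized fibers at $S_0$ are straight. The proposition concerns the limiting fibers themselves, as curves in the $(s,c)$-plane, becoming straight lines. A local linearization that ignores the hypothesis cannot establish that.

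\textbf{The mixed sub-case is not an obstacle.} The sub-case $K_S\lesssim s\ll K$ is exactly the paper's row $k_2\mapsto\varepsilon^{-1}k_2$, where the fibers converge to the vertical lines $\{s=s_b\}$. Two things misled you:
\begin{itemize}
\item You looked at ${\rm d}c/{\rm d}s$, which blows up; the geometry is controlled by the inverse slope.
\item You rescaled $\tilde c=c/K$. That is a zoom by a diverging factor, and it manufactures curvature absent in the original coordinates (the pitfall the paper attributes to degenerate scaling).
\end{itemize}
Divide the first integral by $K$ instead: $\zeta/K=\ln(s+K_S)+(s+c)/K$. Under $K\mapsto K/\varepsilon$ we have $(s+c)/K\to0$ on bounded sets, so the level sets tend to $\ln(s+K_S)=\ln(s_b+K_S)$, i.e.\ $s=s_b$. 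Equivalently, along every fiber ${\rm d}s/{\rm d}c=-(s+K_S)/(s+K_M)\to0$ uniformly. This is just your own opening remark that ${\rm d}s/{\rm d}z$ tends to a constant, here $0$.

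\textbf{Comparison with the paper.} With this repair, your split reproduces the paper's proof, which enumerates the single-constant scalings and reads off the limit of $\zeta$ (Table~1):
\begin{itemize}
\item Case (a) covers $k_1\mapsto\varepsilon^{-1}k_1$, $k_2\mapsto\varepsilon k_2$ and $k_{-1}\mapsto\varepsilon^{-1}k_{-1}$, with fibers $s+c=s_b$.
\item Case (b) with $s\ll K_S$ covers $k_1\mapsto\varepsilon k_1$.
\item The mixed sub-case covers $k_2\mapsto\varepsilon^{-1}k_2$.
\item $k_{-1}\mapsto\varepsilon k_{-1}$ is excluded because $\eta\not\to0$.
\end{itemize}
Your route has the merit of not depending on the particular scalings, and it can be made uniform. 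The fiber direction at $(s,c)$ is $(-g(s),1)$ with $g=(s+K_S)/(s+K_M)$, independent of $c$, and $\eta/2=s\,g'(s)$. So $\eta\to0$ says that the fiber direction changes negligibly as $s$ varies by a relative amount of order one.

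\textbf{A small slip in case (a).} The inequality $K_S\le K_M$ bounds $K/(s+K_S)$ from \emph{below}, not above. The correct step is $K/(s+K_S)=r/(1-r)$ with $r=K/(s+K_M)$.
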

\begin{proof}
The term $\eta(s)$ vanishes in the following limits (see {\sc table} $1$):
\begin{table}[htbp]
\centering
\label{tab:my_table}
\begin{tabular}{|c||c||c|}
\hline
\textbf{Singular perturbation parameters} & \textbf{$\lim_{\varepsilon \to0} \eta(s,\varepsilon)$} & \textbf{$\mathcal{F}_0(\cdot)$} \\ \hline \hline
$e_0,k_1\mapsto \varepsilon e_0,\varepsilon k_1$     & 0    & $\{(s,c)\in \mathbb{R}^2: s+\frac{k_{-1}}{k_{-1}+k_2}c = s_b\}$     \\ \hline
$e_0,k_1\mapsto \varepsilon e_0,\varepsilon^{-1} k_1$    & 0    & $\{(s,c)\in \mathbb{R}^2: s+c = s_b\}$    \\ \hline
$e_0,k_2\mapsto \varepsilon e_0,\varepsilon k_2$   & 0    & $\{(s,c)\in \mathbb{R}^2: s+c = s_b\}$    \\ \hline
$e_0,k_2\mapsto \varepsilon e_0,\varepsilon^{-1} k_2$     & 0    & $\{(s,c)\in \mathbb{R}^2: s = s_b\}$    \\ \hline
$e_0,k_{-1}\mapsto \varepsilon e_0,\varepsilon^{-1} k_{-1}$     & 0    & $\{(s,c)\in \mathbb{R}^2: s+c = s_b\}$  \\ \hline
\end{tabular}
\caption{When the fibers that foliate $W^s(S_0)$ are linear, the term $\eta(s)$ becomes asymptotically negligible and the heuristically reduced CME \eqref{stochMM} is valid.}
\end{table}\label{Table1}
\end{proof}

Thus, in a colloquial sense, the term $\eta(s,k_i)$ provides a notion of the linearity of the fast fibers. Again, the linearity of the fibers is critical to the success of heuristically reduced CMEs. This is very apparent in the chemical Langevin equation (CLE) regime, where the SDE of the reaction network is driven by multiplicative noise and where nonlinear fibers can introduce additional drift terms~\cite{Katz,Parsons2017}.

The major takeaway from our analysis of the Michaelis-Menten network lies within the following observation: In regions where the stochastic QSSA \eqref{stochMM} is valid, the fast fibers (to leading order in $\varepsilon,\varepsilon_i$, etc..) align with the linearized fiber defined by the image of the Jacobian along $S_0$. Consider the region in which $e_0$ is small and $s\ll K_S,K$. The Jacobian along the critical manifold is, to leading order in $e_0$ and $s/K,s/K_S$,
\begin{equation*}
\begin{pmatrix}0 & \;\;k_{-1}\\0&-k_{-1}-k_2\end{pmatrix},\quad w_0 :=\begin{pmatrix}1\\\gamma\end{pmatrix}
\end{equation*}
and thus the fast fibers \eqref{smallS} align with the image of the Jacobian along $S_0$; see {\sc figure} \ref{FIG4}.
\begin{figure}[bth!]
    \centering
    \includegraphics[scale=0.85]{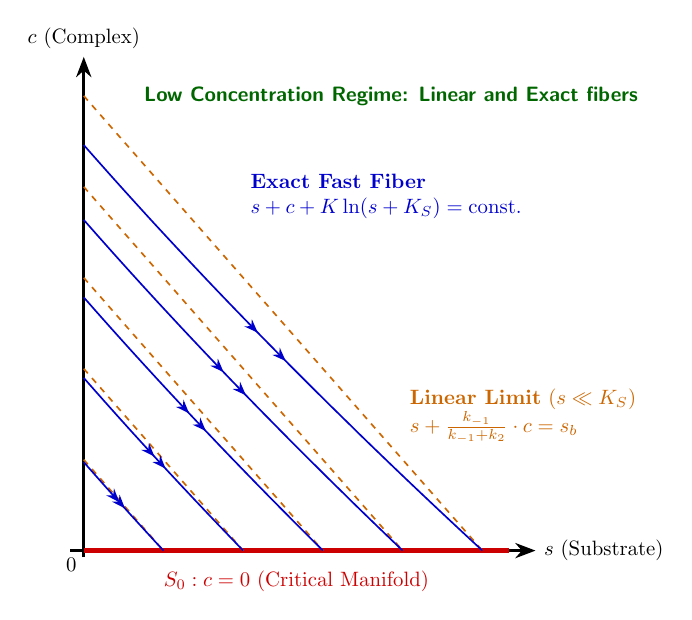}
    \caption{\textbf{When $s$ is very low in concentration, the fast fibers that foliate the critical manifold begin to align with the linearized fibers computed from the Jacobian along the critical manifold.} Towards the right of the phase-plane, the concentration of $s$ is proportional to $K_M$, and the linearized fibers only approximate the exact fiber near the critical manifold. Towards the origin, the linearized fibers start to align with the exact fiber as the concentration of $s$ dips far below $K_M$.}
    \label{FIG4}
\end{figure}

The same holds for large values of $s$ (i.e., when $s\gg K_M$). The Jacobian along $S_0$ is (again, to leading order in $e_0,\varepsilon_1,\varepsilon_2$) 
\begin{equation*}
\begin{pmatrix}0 & \;\;\;k_1s\\0&-k_1s\end{pmatrix},\quad w_0 :=\begin{pmatrix}1\\1\end{pmatrix}
\end{equation*}
and thus, the fast fibers \eqref{Sfib} align with the image of the Jacobian evaluated along the critical manifold; see {\sc figure} \ref{FIG5}.
\begin{figure}[bth!]
    \centering
    \includegraphics[scale=.85]{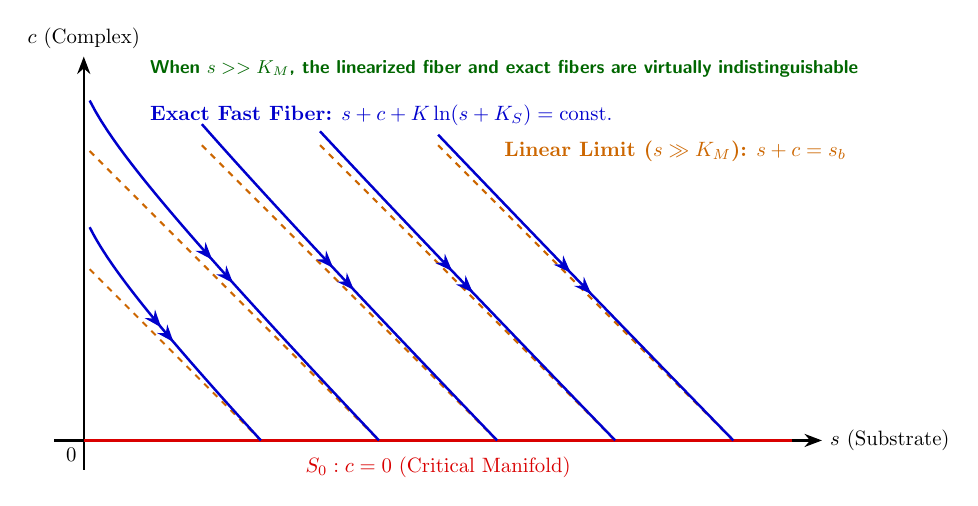}
    \caption{\textbf{When $s$ is very high in concentration the fast fibers that foliate the critical manifold begin to align with the linearized fibers computed from the Jacobian along the critical manifold.} At the left of the phase-plane, concentrations of $s$ are proportional to $K_M$, and we see that the linearized fibers only approximate the exact near the critical manifold. As the concentration of $s$ exceeds $K_M$, the exact fibers begin to align with the linear limit: $s+c=s_0$. }
    \label{FIG5}
\end{figure}

\subsection{Linearization, system size expansion, and bimolecular reactions}

Consider the reaction network \eqref{GC} with small $k_1$. Linearization about $(a,b)=(0,0)$ yields
\begin{subequations}\label{GCMF}
\begin{align}
\dot{a} &= -\varepsilon k_1a, \\
\dot{b} &=\;2\varepsilon k_1a -k_2 b,
\end{align}
\end{subequations}
while fluctuations in the mean field \eqref{GCMF} satisfy
\begin{equation}
\begin{pmatrix}{\rm d} X_{a}\\{\rm d} X_{b}\end{pmatrix} = \begin{pmatrix}-\varepsilon k_1 & 0\\ 2\varepsilon k_1 &  -k_2\end{pmatrix}\begin{pmatrix}X_{a}\\X_{b}\end{pmatrix}{\rm d}t + \begin{pmatrix}-\sqrt{\varepsilon k_1 a} & 0\\ 2\sqrt{\varepsilon k_1a}& -\sqrt{k_2b}\end{pmatrix}\begin{pmatrix}{\rm d}W_1(t)\\{\rm d}W_2(t)\end{pmatrix}.
\end{equation}
Linearization automatically brings the LNA into the standard form. Moreover, the dynamics of $a$ and $X_{ a}$ do not depend on $b$ or $X_{b}$. The reduced equations for $a$ and $X_{a}$ are
\begin{subequations}\label{formR}
\begin{align}
a' &= -k_1a,\\
{\rm d}X_{a} &= -k_1X_{ a}{\rm d}\tau - \sqrt{k_1a}\;{\rm d}W(\tau),
\end{align}
\end{subequations}
which accurately approximates the first and second moments of the full LNA when $k_1$ is small and $(a,b)$ is close to the origin; see {\sc figure} \eqref{FIG6}, left panel.

The form of the reduced system \eqref{formR} suggests that the CME
\begin{equation}\label{GCstoch}
\cfrac{\partial}{\partial t}Pr(n,t) =  \varepsilon k_1(E_n^{+1}-1)nPr(n,t)
\end{equation}
which corresponds to the reduced reaction mechanism,
\begin{align}\label{GC0}
 \ce{A ->[$k_1$] 2C}
\end{align}
might be accurate and precise when $n$, the number of $A$ molecules, is comparatively small. The problem is that the linearization neglects terms that are $\mathcal{O}(b^2)$ near the origin. This is permissible in scenarios where the system size is extremely  {\it large}, since the nonlinear terms do not alter the qualitative dynamics near the hyperbolic origin. 

On the other hand, the second order reactions of nonlinear networks (such as homodimerization reactions) may not be negligible when the system size is small and the state space is discrete. For example, let $m$ denote the number of $B$ molecules. The propensity of the elementary reaction, ``$R_2$"
\begin{align}\label{GC2}
  R_2 := \ce{2B ->[$k_{-1}$] A}
\end{align}
is $\frac{1}{2}k_{-1}m(m-1)$, which is {\it not} negligible when the size of the system is small. Since \eqref{GC2} cannot be ignored, the reduced CME implicitly suggests that every time $R_1$ occurs, the reactions $R_2$ and $R_3$ in succession occur with probability $1$, where the elementary reactions $R_1$ and $R_3$ are 
\begin{subequations}
\begin{align*}
R_1&:= \ce{A ->[$\varepsilon k_{1}$] 2B},\\
R_3&:= \ce{B ->[$k_{2}$] C}.
\end{align*}
\end{subequations}
The question then becomes, under what conditions, if any, does \eqref{GC0} apply? Suppose a reaction occurs when $t=t_0$. The probability that $R_1$ occurs is
\begin{subequations}
\begin{align}
Pr(R_1)&= \begin{cases}
\mathcal{O}(\varepsilon), \quad n,m \geq 1,\\
\\
0,\qquad\;\; n=0.
       \end{cases}\\
Pr(R_1)&= \begin{cases}
1, \quad \;\;\quad n \geq 1,\;\;m=0\\
\\
0,\qquad\;\; n=0.
       \end{cases}
\end{align}
\end{subequations}
The probability that $R_3$ occurs is
\begin{equation}\label{prob1}
    Pr(R_3)=\begin{cases}\cfrac{k_2m}{\frac{1}{2}k_{-1}m(m-1)+k_2m + \varepsilon k_1n} = \cfrac{k_2}{\frac{1}{2}k_{-1}(m-1)+k_2} - \mathcal{O}(\varepsilon),\quad m \geq 2,\;n\geq 0,\\
    \\
    1 - \mathcal{O}(\varepsilon),\qquad m=1,\;\;n\geq 0,\\
    \\
    0,\qquad \qquad \;\quad m=0,\;\;n\geq 0.
    \end{cases}
\end{equation}
Note that if $k_{-1}\sim \mathcal{O}(\varepsilon)$ then we have 
\begin{equation}\label{prob2}
    Pr(R_3)=\begin{cases} 1 - \mathcal{O}(\varepsilon) \;\;\qquad m \geq 2,\;\;n\geq 0,\\
    \\
    1 - \mathcal{O}(\varepsilon),\qquad m=1,\;\;n \geq 0,\\
    \\
    0,\qquad \qquad \;\quad m=0,\;\;n\geq 0.
    \end{cases}
\end{equation} 
Moreover, defining $k_{-1}\mapsto \varepsilon k_{-1}$ leaves us with
\begin{equation}
Pr(R_2)= \begin{cases}
\mathcal{O}(\varepsilon), \quad m \geq 2,\\
\\
0,\qquad\;\; m\leq 1.
       \end{cases}
\end{equation}

Although its derivation is based on heuristics, the distribution \eqref{prob2} illustrates the following: if $k_1\sim \mathcal{O}(\varepsilon)$ {\it and} $k_{-1} \sim \mathcal{O}(\varepsilon)$, then nearly every reaction that occurs will be $R_3$ as long as $m\neq 0$. Moreover, based on this heuristic analysis, one can speculate that the reduced CME \eqref{GCstoch} is accurate as long as $k_1$ and $k_{-1}$ are $\mathcal{O}(\varepsilon)$, a claim that is easily verified numerically; see {\sc figure} \ref{FIG6}.

If $k_{-1}$ is not $\mathcal{O}(\varepsilon)$, then it is still possible to construct a reduced CME, but it will differ from \eqref{GCstoch}. Thus, while the heuristically reduced CME \eqref{GCstoch} is only valid when $k_1$ and $k_{-1}$ are $\mathcal{O}(\varepsilon)$, it is possible to construct a highly accurate reduced CME, but one must analyze the CME directly; see~\citet{Mastny2007} for details. 

\begin{figure}[bth!]
    \centering
    \includegraphics[scale=0.50]{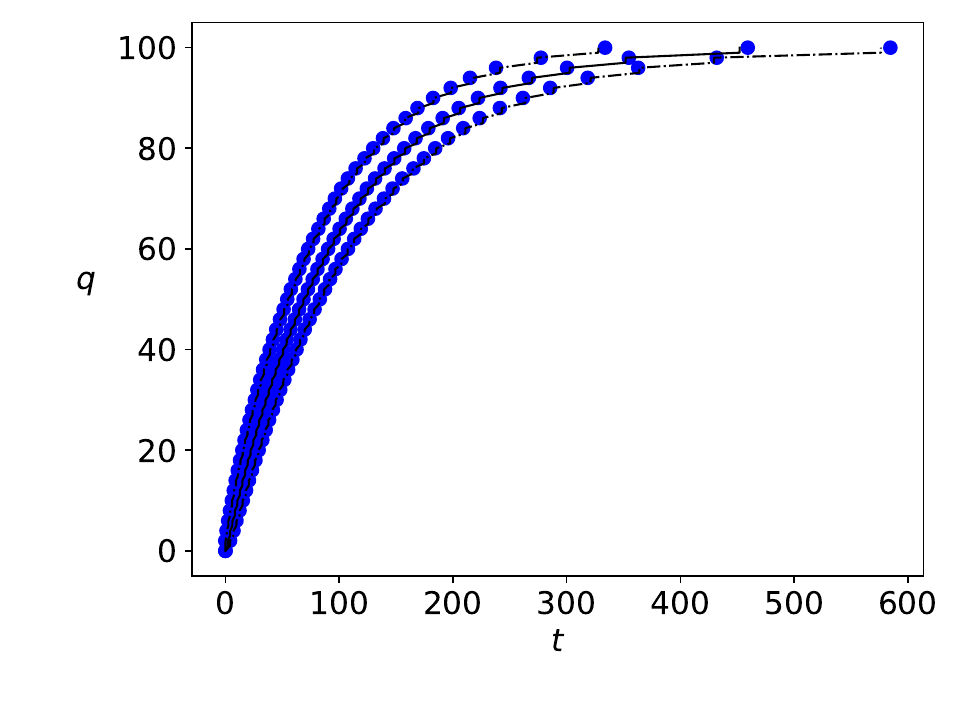}
    \includegraphics[scale=0.50]{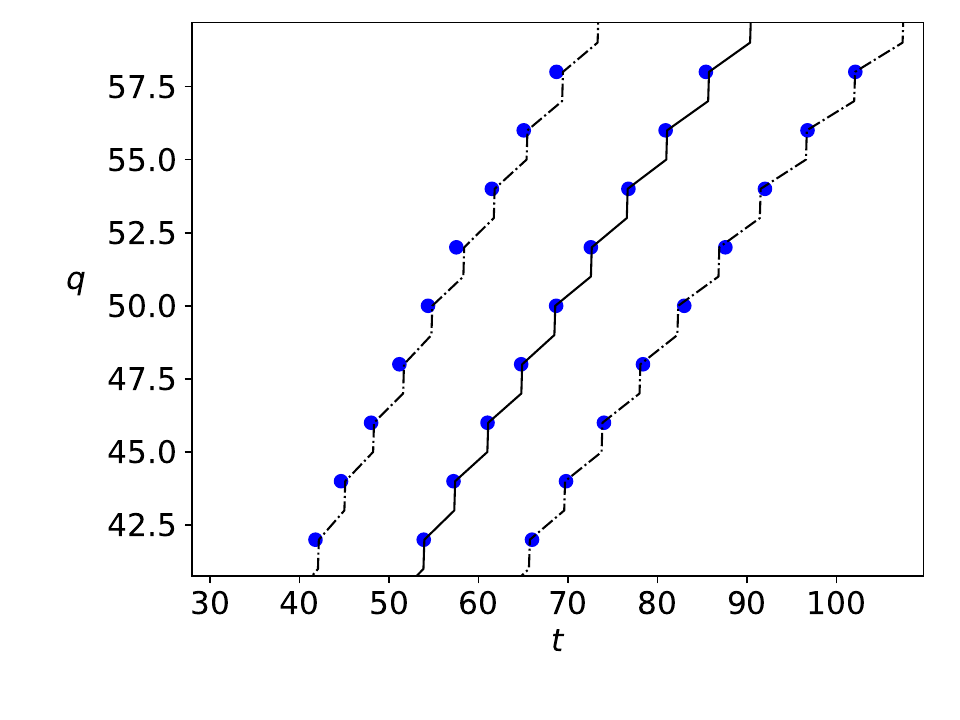}
    \caption{\textbf{The heuristically reduced CME \eqref{GCstoch} provides an excellent approximation to the first two moments of $q$, the number of $C$ molecules. } In this figure, the solid black line is the mean obtained from $10000$ simulations of the Gillespie algorithm applied to the full CME; the dotted black lines are $\pm$ one standard deviation. The blue circles demarcate the formation of $C$ obtained from the reduced CME \eqref{GCstoch}. Note that the full CME generates a single $C$ molecule each time the third elementary reaction occurs, but the reduced CME \eqref{GCstoch} generates {\it two} product molecules each time a reaction occurs. {\sc left panel}: In this simulation, $n(0)=50,m(0)=0$ and $q(0)=0$ with (in arbitrary units) $k_1=k_{-1}=0.01$ and $k_3=10.0$. {\sc right panel:} A closeup of the left panel.}
    \label{FIG6}
\end{figure}

If we revisit the complete, nonlinear mass action system with  $k_1\mapsto \varepsilon k_1$ and $k_{-1}\mapsto \varepsilon k_{-1}$,
\begin{subequations}\label{Hdim}
\begin{align}
\dot{a} &= -\varepsilon k_1a + \varepsilon k_{-1}b^2,\\
\dot{b} &= \;2\varepsilon k_1a -2\varepsilon k_{-1}b^2 -k_2b,
\end{align}
\end{subequations}
we see that $a$ is indeed the slow variable ($a\equiv z$), the critical manifold is the $a$-axis, and that the fibers are linear: $\mathcal{F}_0(a_b,0):=\{(a,b)\in \mathbb{R}^2: a=a_b,b=b\}$. Thus, one expects that the heuristically reduced CME will prevail in this scenario, which it does (again, see {\sc figure} \eqref{FIG6}).

In contrast, with only small $k_1$, the singularly-perturbed mass action system is
\begin{subequations}\label{ABfull}
\begin{align}
\dot{a} &= -\varepsilon k_1a + k_{-1}b^2,\\
\dot{b} &= \;2\varepsilon k_1a -2k_{-1}b^2 -k_2b,
\end{align}
\end{subequations}
and the critical manifold is still the $a$-axis: $S_0:=\{(a,b)\in \mathbb{R}^2: b=0\}.$ However, the fibers that foliate $W^s(S_0)$ are nonlinear in this case:
\begin{equation}\label{ABFib}
\mathcal{F}_0(a_b,0) = \bigg\{(a,b)\in \mathbb{R}^2: \zeta(a,b)=a+\frac{1}{2}b-\frac{1}{4K_r}\ln \left(2K_rb+1\right)=a_b\bigg\}, \quad W^s(S_0) = \bigcup_{(a_b,0)\in S_0}\mathcal{F}_0(a_b,0),
\end{equation}
where $K_r:=k_{-1}/k_{2}$. Although these fibers are asymptotically linear near the origin, the homodimerization reaction ($R_2$) in the low volume limit cannot be ignored.\footnote{By comparison, recall that the bimolecular binding reaction is not overtly eliminated in the linearized Michaelis-Menten network, it is simply replaced by the first-order reaction: $k_1(e_0-c)s\to k_1e_0s$.}

The real takeaway from this example is the following: In the deterministic regime, the Fenichel reduction of the fully nonlinear mass action equations for $a$ is the same, regardless of whether $k_{-1}$ is $\mathcal{O}(\varepsilon)$ or $\mathcal{O}(1)$:
\begin{equation*}
a' = -k_1a.
\end{equation*}
Moreover, if we drop one rung on the thermodynamic ladder and compute the Fenichel reduction for $X_a$, we find once again that the reduction is the same, regardless of how $k_{-1}$ scales:
\begin{equation*}
{\rm d}X_a = -k_1X_a\;{\rm d}\tau-\sqrt{k_1a}\;{\rm d}W(\tau).
\end{equation*}
Thus, we see {\it no} difference in the {\it leading order} terms in the reduction of $a$ at the deterministic or linear noise regimes when $k_1\mapsto \varepsilon k_1$, irrespective of whether or not $k_{-1}\sim \mathcal{O}(\varepsilon)$ or $k_{-1}\sim\mathcal{O}(1)$. Thus, as long as $k_1$ is sufficiently small, the first two moments of $a$ are well-approximated by \eqref{moments}; see {\sc figure} \ref{FIG7}.
\begin{subequations}\label{moments}
\begin{align}
a' &= -k_1a,\label{M1}\\
{\rm Var}(X_a)' &= -2k_1 {\rm Var}(X_a) + k_1a.\label{M2}
\end{align}
\end{subequations}

\begin{figure}[htb!]
    \centering
    \includegraphics[scale=0.50]{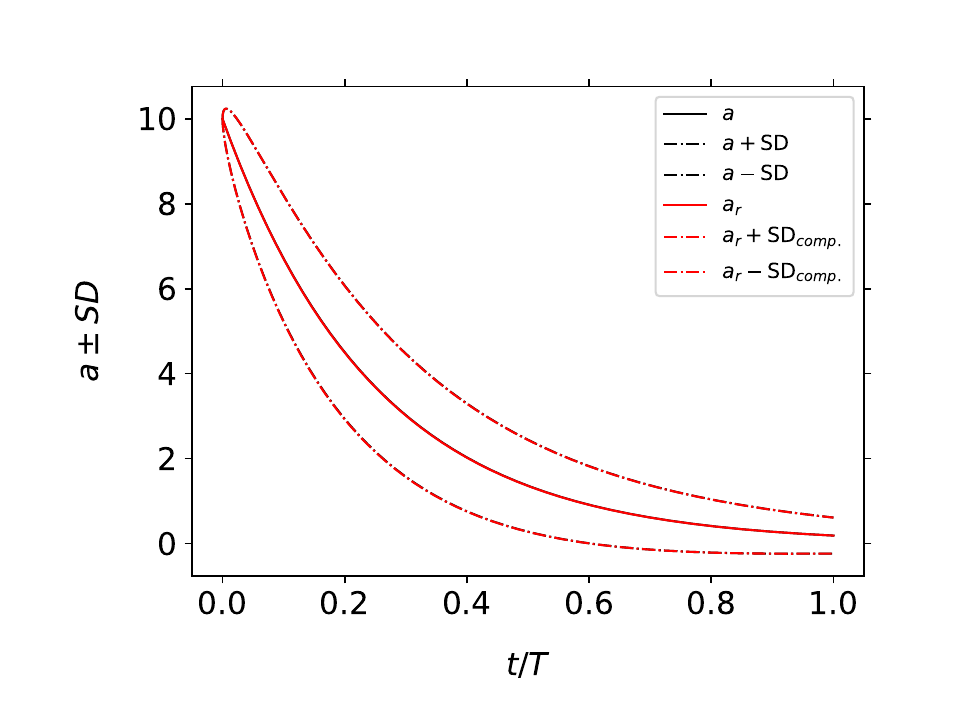}
    \includegraphics[scale=0.50]{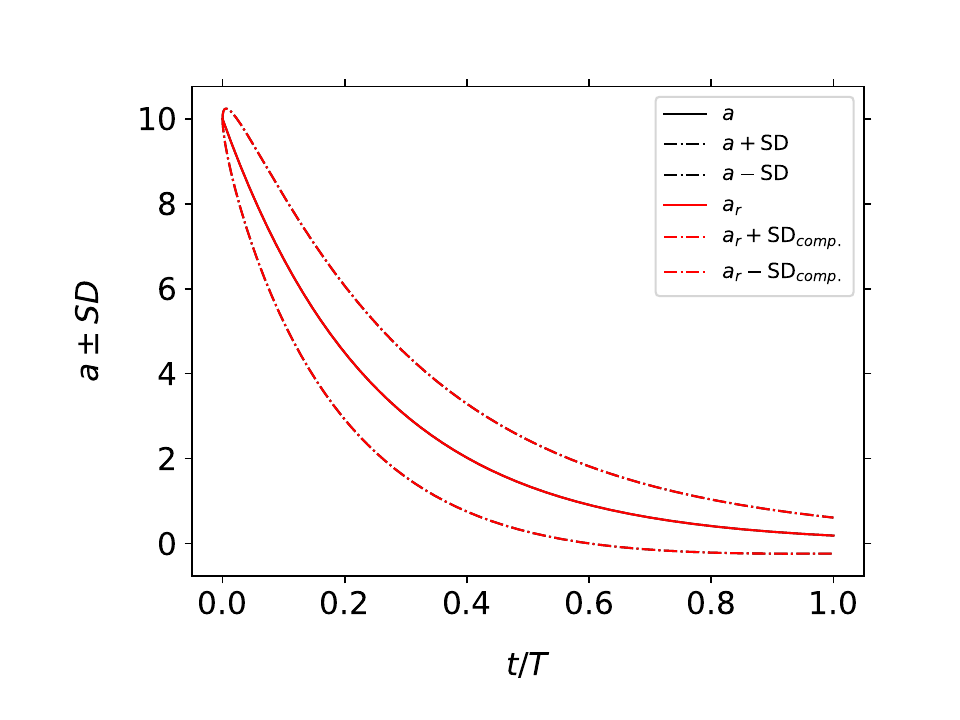}
    \caption{\textbf{The reduced moment equations \eqref{moments} accurately approximate the mean $\mathbb{E}(a)$ and variance of ${\rm Var}(X_a)$ as $k_1 \to 0$.} In this figure, the solid black line is the mean of $a$ obtained by numerically solving the mass action equations; the dashed/dotted black lines are $\pm$ one standard deviation obtained from the numerical solution to full covariance equation. The solid red line is the numerical solution to the reduced equation for $\mathbb{E}(a)$ given by \eqref{M1} (this solution is by denoted $a_r$ in the legend). The dashed/dotted red curves are $\pm$ one standard deviation obtained from \eqref{M2}. For aesthetic purposes time is scaled by $t/T$, where $T$ is time interval of the simulation. {\sc left panel}: In this simulation, $a(0)=10,b(0)=0$ and (in arbitrary units) $k_2=k_{-1}=10.0$ and $k_1=0.001$. {\sc right panel:} In this simulation, $a(0)=10,b(0)=0$ and (in arbitrary units) $k_2=10.0$ and $k_{-1}=k_1=0.001$. Observe that not only are the reduced equations highly accurate, but moreover the magnitude of $k_{-1}$ has no real impact on the dynamics as long as $k_{1}$ is sufficiently small.}
    \label{FIG7}
\end{figure}
Thus, for a large enough system size, the magnitude of $k_{-1}$ has little impact on the long-time dynamics as $k_1 \to 0$. This happens for several reasons, but three primary reasons are as follows. First, the critical manifold $S_0$ is the same in both cases and corresponds to $b=0$ (the $a$-axis) and therefore the magnitude of $b$ in the vicinity of the slow manifold is $\mathcal{O}(\varepsilon)$. Thus, quadratic terms such as $b^2$ are $\mathcal{O}(\varepsilon^2)$ near $S_0$ and therefore are asymptotically negligible. Second, the linearized fibers of the layer problem associated with \eqref{ABfull} coincide with the fibers of the layer problem associated with \eqref{Hdim}; see {\sc figure} \ref{FIG8}. 
\begin{figure}[bth!]
    \centering
    \includegraphics[scale=0.85]{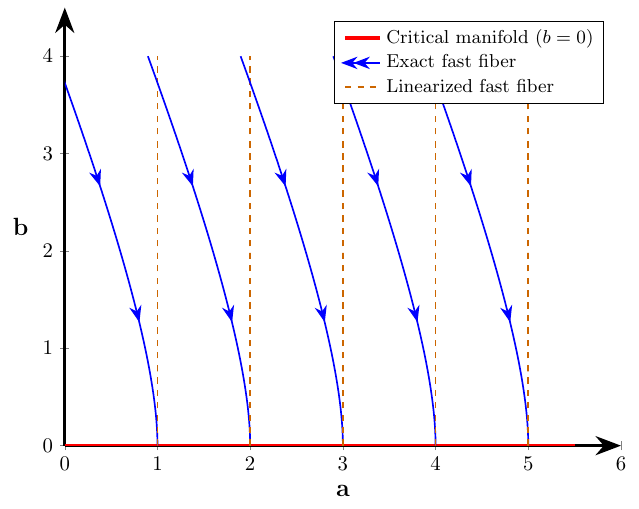}
    \caption{\textbf{The linearized fast fibers are given by $a=a_b$ when $(k_1,k_{-1})\mapsto (\varepsilon k_1,k_{-1})$ but, unlike the Michaelis-Menten system at high and low substrate concentrations, the exact fast fiber does not coincide with the linearized fibers}. In this figure, the critical manifold is the $a$-axis, with several base points demarcated by black circles. The fast fibers (from \eqref{ABFib}) associated with each base point are the blue curves. The orange vertical lines are the tangent lines to the fibers (linearized fast fibers) at each base point.}
    \label{FIG8}
\end{figure}
Third, the zeroth- and first-order terms in the asymptotic expansion of the slow manifold, $S_{\varepsilon}$, are identical
\begin{equation}
{\rm graph}(S_{\varepsilon}) =\{(a,b)\in \mathbb{R}^2_{\geq 0}: b= 2\varepsilon k_1a/k_2+\mathcal{O}(\varepsilon^2)\}
\end{equation}
and in this case diffusion is limited to the slow timescale; again, see Proposition \ref{prop5}.

This example highlights the fact that there can be no difference (at least up to leading order) between the Fenichel reductions for different perturbations of the mass action system and their associated LNA. However, as the system size is reduced, the changes become significant. Homodimerization reactions and other second-order reactions are examples of elementary reactions that can occasionally be ignored (in the asymptotic sense) when the system size is adequately large but, as the system size shrinks, these same reactions may significantly influence the stochastic timecourse dynamics.

\subsection{Discussion}

Our intent with this work was not only to determine the conditions that determine when heuristic reductions of CME retain accuracy and precision but also to underscore the fact that the breakdown of heuristic reductions can often be traced back to geometric factors present in the deterministic analysis. The heuristic reduction of the CME for nonlinear reaction networks is a difficult endeavor, and it is unclear whether or not there exists a consistent theory that allows for an a priori determination of when reduction -- in any form -- is possible. The linear analysis of stochastic reaction networks provides geometric insight as to why certain heuristically reduced CMEs often fail to approximate the moments of the unreduced (complete) CME. What our linear analysis strongly suggests, and what is implied by similar studies on the topic~\cite{Parsons2017}, is that the fast fibers that foliate the stable manifold of the critical manifold {\it must}, at the very least, be linear. 
Furthermore, by Proposition \ref{prop5}, the variance of fluctuations must be $\mathcal{O}(\varepsilon)$ on the fast time scale when the mass action equations are restricted to evolve on (or sufficiently close to) the slow manifold. Mathematically, we have the following commonalities:
\begin{enumerate}
\item The parametric variable of interest, $x_i$, should evolve with the same velocity as $z$ (the slow variable) on the slow timescale, $\tau$: ${\rm d}x/{\rm d}z=1$.\\
\item The slow variable must be representable as an integer combination of molecular counts at low volume: the quantity $s+c+ K\cdot \ln (s+K_S)$ is not. 
\end{enumerate}

For the Michaelis-Menten reaction network, small $e_0$ ensures that the critical manifold is the $s$-coordinate axis and therefore diffusion is limited to the slow timescale (see Proposition \ref{prop5} and~\cite{JEWS2026} for additional details), but the fibers that foliate $W^s(S_0)$ are nonlinear. Some of the confusion surrounding the nature of the fast fibers that foliate $W^s(S_0)$ of the singularly perturbed Michaelis-Menten network originates from the use of degenerate scaling, $\widehat{c}=c/\varepsilon$, which is heavily employed in deterministic analyses~\cite{Heineken1967,Segel1988,Segel1989,Borghans1996}. With degenerate scaling, one obtains the rescaled mass action system
\begin{subequations}\label{mscale}
\begin{align}
\dot{s} &= -\varepsilon k_1(e_0-\widehat{c})s + \varepsilon k_{-1}\widehat{c},\label{SD}\\
\dot{\widehat{c}} &= k_1(e_0-\widehat{c})s -k_{-1} \widehat{c}-k_2\widehat{c},\label{CD}
\end{align}
\end{subequations}
which suggests that the critical manifold, $S_0$, is the $\widehat{c}$-nullcline and that the fibers that comprise the stable manifold are {\it linear}: $\mathcal{F}(p)=\{(s,\widehat{c})\in \mathbb{R}^2: s=s_b\}$.

Although degenerate scaling suggests that $s$ is a slow variable, several problems emerge. First, as~\citet{Lax2020} point out, the map $(s,c,\varepsilon)\mapsto (s,\widehat{c},\varepsilon)$ is only a diffeomorphism if $\varepsilon >0$, and therefore -- mathematically speaking -- the substrate concentration does not qualify as the first integral of the layer problem. Second, as~\citet{OthmerI} mention, there are no {\it fast} and {\it slow} species, only fast and slow reactions. However, the degenerate scaling obfuscates this. For example, one cannot determine which elementary reactions are slow and fast from \eqref{mscale}, since the dissociative reaction is slow ($\mathcal{O}(\varepsilon)$) in \eqref{SD}, but fast ($\mathcal{O}(1)$) in \eqref{CD}. Third, the degenerate scaling conceals the fact that the actual fibers are curved in regions where $s\propto K_M$.

Finally, we point out that our analysis focused primarily on uncovering conditions that ensure the accuracy and precision of the heuristic CME when parameterization is with respect to the concentration of a {\it single} chemical species. Our motivation behind this choice is based entirely on practice: in most applications, one is interested in approximating the stochastic timecourse of a single species (such as substrate or product for enzymatic reactions). However, it is often possible to generate an accurate heuristic reduction of the CME\footnote{This concept was discussed in~\citet{Thomas2012}, as well as in~\cite{KangKim2017,kang2013}.} when the slow variable is a linear combination of species' concentrations in the sense that {\it stoichiometry is chemically interpretable} when passed to low volume environments
\begin{equation*}
z=w_0^Tx, \quad x\in \mathbb{R}^n, \quad w_0 \in\mathbb{Z}^n.
\end{equation*}
Linear integer combinations of species (such as the total substrate concentration~\cite{KIM2014,KIM2015,Kim2020,ganguly2025asymptotic,Burrage2008,Herath} or linear first integrals of the layer problem) have been shown to yield successful reductions, but a rigorous investigation of this concept in a more general context is left for future work.

\newpage
%%%%%%%%%%%%%%%%%%%%%%%%%%%%
\section*{Appendix}
This appendix consists of three subsections. In the first subsection we provide some of the basic definitions from geometric singular perturbation theory; in particular we define the {\it slow variable} in a more general sense as well as the {\it standard form} of a singularly perturbed differential equation.

In the second subsection, we do a complete multiscale analysis of the linear reaction \eqref{reac1} for small $k_3$, and address why the heuristically reduced CME for $n$, the molecular copy number of species $\bar{X}$, fails to approximate the variance of $n$. 

In the third subsection, we provide the details behind the reduction of the LNA \eqref{redSS}. Specifically, we illustrate (asymptotically) why high substrate concentration leads to a reduced LNA that describes a pure diffusion process.

\subsection*{Singular perturbations, slow variables, and the standard form}

Let $x\in \mathbb{R}^n$ and $y\in \mathbb{R}^m$. An ordinary differential equation of the form
\begin{subequations}\label{DN}
\begin{align}
\dot{x} &= f(x,y,\varepsilon),\quad f: \mathbb{R}^n\times \mathbb{R}^m \times [0,\varepsilon_0)\to \mathbb{R}^n,\\
\dot{y} &=g(x,y,\varepsilon),\quad  g: \mathbb{R}^n\times \mathbb{R}^m \times [0,\varepsilon_0)\to \mathbb{R}^m,
\end{align}
\end{subequations}
is singularly perturbed if the level set
\begin{equation*}
S_0 = \{(x,y)\in \mathbb{R}^n\times \mathbb{R}^m: F(x,y,0)=0\}, \quad F(x,y,\varepsilon):=\begin{pmatrix}f(x,y,\varepsilon)\\g(x,y,\varepsilon)\end{pmatrix}
\end{equation*}
is a nonempty set comprised of an infinite number of equilibrium points; for our purposes, we will only consider cases where $S_0$ is a $k$-dimensional submanifold of $\mathbb{R}^{n+m}$ with $0<k<n+m$. The set $S_0$, which is called the {\it critical manifold}, is normally hyperbolic if the eigenspectrum Jacobian, $DF(x,y)$, along $S_0$, is comprised of a zero (trivial) eigenvalue with geometric and algebraic multiplicity $k$, and $n+m-k$ eigenvalues that are strictly bounded away from the imaginary axis. If the $n+m-k$ nontrivial eigenvalues have negative real parts, then $S_0$ is attracting. Normal hyperbolicity implies that $S_0$ is structurally stable, and therefore turning on the perturbation results in the existence of a normally hyperbolic, invariant, and attracting slow manifold, $S_{\varepsilon}$.

The long-time dynamics of \eqref{DN} can be approximated by approximating the flow {\it on} $S_{\varepsilon}$. To do this, we expand $F(x,y,\varepsilon)$ in a Taylor series around $\varepsilon=0$
\begin{equation}
F(x,y,\varepsilon) = F(x,y,0) + \varepsilon DF(x,y,0) + \mathcal{O}(\varepsilon^2),
\end{equation}
and then project the vector field onto the tangent bundle of $S_0$,
\begin{equation}
\begin{pmatrix}\dot{x}\\\dot{y}\end{pmatrix} = \varepsilon P_0DF(x,y,0)|_{(x,y)\in S_0}+\mathcal{O}(\varepsilon^2), \quad P_0:\mathbb{R}^{n+m}\to \ker DF(x,y,0) \cong T_{(x,y)}S_0.
\end{equation}
It is common to define the slow timescale, $\tau =\varepsilon t$, in which case we have
\begin{equation}\label{SF}
\begin{pmatrix}x'\\y'\end{pmatrix} = P_0DF(x,y,0)|_{(x,y)\in S_0}
\end{equation}
where ``$\phantom{x}'$" denotes differentiation with respect to $\tau$. 

The flow on $S_{\varepsilon}$ converges to the flow associated with \eqref{SF} as $\varepsilon \to 0$, but the emphasis should be on the word {\it on}. In most cases, it is necessary to equip \eqref{SF} with a modified initial condition; see~\citet{Wechselberger2020} for details. For a planar system, $(m=n=1)$ with a $1$-dimensional critical submanifold, the modified initial condition is found by finding the fast fiber, $\mathcal{F}_0$, that contains the initial condition $(x_0,y_0)$. The initial condition assigned to \eqref{SF} is simply the base point $p_b\in S_0$, of the fast fiber that contains $(x_0,y_0)$.

For low dimensional systems (consider, for example $(z,u)\in \mathbb{R}^2$), it is tractable to transform (this is guaranteed at least locally) the perturbed differential equation into the {\it standard form}:
\begin{subequations}
\begin{align}
\dot{z} &= \varepsilon h(z,u,\varepsilon),\\
\dot{u} &= \;w(z,u,\varepsilon),
\end{align}
\end{subequations}
in which case the critical manifold $S_0$ is 
\begin{equation*}
S_0 :=\{(z,u)\in \mathbb{R}^2 : w(z,y,0)=0\},
\end{equation*}
and $z$ is called the {\it slow variable} since its velocity vanishes completely when $\varepsilon =0$. As long as $\partial_uw(z,u,0)\neq 0$ for all $(z,u)\in S_0$, the implicit function theorem ensures the existence of a smooth function, $u=\mu(z)$, such that $w(z,\mu(z))=0$. The dynamics on $S_{\varepsilon}$ is approximately
\begin{subequations}
\begin{align}
z' &= h(z,\mu(z),0),\label{autZ}\\
u' &= \mu'(z)h(z,\mu(z),0), \qquad \mu'(z) \equiv \cfrac{{\rm d}\mu}{{\rm d}z}.
\end{align}
\end{subequations}
Furthermore, the fiber containing the initial condition $(z(0),u(0))$ intersects the critical manifold at a point $(z(0),\cdot)$ since $z$ is constant along the fast fiber. Thus, we can assign \eqref{autZ} with $z(0)$. 

For two-dimensional planar systems, $\dot{x} = A(\varepsilon)x$ it is not necessary to diagonalize $A(0)$ to identify the slow variable. In the standard form, one will often have
\begin{equation}
\begin{pmatrix}\dot{x}\\\dot{y}\end{pmatrix} = \begin{pmatrix}\varepsilon a_{11}& \varepsilon a_{12}\\ a_{21} & -a_{22}\end{pmatrix}\begin{pmatrix}x\\y\end{pmatrix}
\end{equation}
and thus $x$ is a slow variable since its velocity vanishes with $\varepsilon =0$ and hence the first integral of the layer problem is simply $x$:
\begin{equation*}
x = (1\;\;0)\begin{pmatrix}x\\y\end{pmatrix}.
\end{equation*}
If we were to perform a coordinate transformation, $\mathcal{T}(x,y)$ to diagonalize $A(0)$, we have
\begin{equation*}
\mathcal{T}: \begin{pmatrix}x\\y\end{pmatrix}\mapsto \begin{pmatrix}x\\ y-\cfrac{a_{21}}{a_{22}}x\end{pmatrix}
\end{equation*}
and therefore $x$ is invariant with respect to diagonalization. The condition \eqref{21} reflects this invariance. 

\subsection*{A multiscale analysis via matched asymptotics}
In this subsection, we revisit the claim that the reduced CME,
\begin{equation}\label{RED5}
\cfrac{\partial}{\partial t}Pr(n,t) = \cfrac{\varepsilon k_3k_1}{k_1+k_2}(E_n^{+1}-1)nPr(n,t)
\end{equation}
adapted from the singular perturbation reduction of
\begin{subequations}
\begin{align*}
\dot{x} &= -k_1 x + k_2y,\\
\dot{y} &= \;\;\;k_1x - k_2y - \varepsilon k_3y,
\end{align*}
\end{subequations}
will not accurately approximate the timecourse variance of $n$, the molecular copy numbers of species ``$\bar{X}$" from \eqref{reac1}. Since the network is linear, the first two moments of the LNA and the full CME will agree, and therefore we can simply analyze the LNA to support the claim. Let us start with the layer problem of the mass action equations,
\begin{subequations}
\begin{align}
\dot{x} &= -k_1 x + k_2y,\\
\dot{y} &= \;\;\;k_1x - k_2y,
\end{align}
\end{subequations}
from which we see that the critical manifold (center subspace) is given by
\begin{equation*}
S_0 :=\{(x,y)\in \mathbb{R}^2: k_1x=k_2y\},\qquad A_0 = \begin{pmatrix}-k_1&\;\;k_2\\\;\;k_1&-k_2\end{pmatrix}.
\end{equation*}
The Jacobian, $A_0$, consists of a single trivial eigenvalue and one negative and nontrivial eigenvalue, $\lambda_-^{(0)}=-(k_1+k_2),$ and therefore $S_0$ is normally hyperbolic and attracting. The projection matrix, $P_0$, is 
\begin{equation*}
P_0 = I-\cfrac{1}{\lambda_-^{(0)}}A_0 = \cfrac{1}{k_1+k_2}\begin{pmatrix}k_2 & k_2\\k_1& k_1\end{pmatrix},
\end{equation*}
and therefore the reduced problem on the slow timescale is
\begin{equation}\label{redD}
\begin{pmatrix}x'\\y'\end{pmatrix} = -\cfrac{1}{k_1+k_2}\begin{pmatrix}k_2 & k_2\\k_1& k_1\end{pmatrix}\begin{pmatrix}0\\k_3y\end{pmatrix}\bigg|_{y=k_1x/k_2} = -\cfrac{k_1k_3x}{k_1+k_2}\begin{pmatrix}1\\ \cfrac{k_1}{k_2}\end{pmatrix}.
\end{equation}

It is important to understand that the reduced equation \eqref{redD} approximates the dynamics {\it on} the slow manifold (slow eigenspace); it does not account for any transient dynamics in the approach to the slow manifold. To approximate the fast dynamics, we solve the layer problem
\begin{equation}
\begin{pmatrix}
\dot{x}\\\dot{y} 
\end{pmatrix} = e^{\displaystyle tA_0}\begin{pmatrix}
x(0)\\y(0) 
\end{pmatrix}.
\end{equation}
The limiting behavior of the layer problem coincides with the initial condition to be supplied to \eqref{redD}:
\begin{equation*}
\lim_{t\to \infty} e^{\displaystyle tA_0}\begin{pmatrix} x(0)\\y(0) 
\end{pmatrix} = P_0\begin{pmatrix} x(0)\\y(0) 
\end{pmatrix} =x(0)\begin{pmatrix}1\\ \cfrac{k_1}{k_2}\end{pmatrix}.
\end{equation*}
The composite approximation is constructed by weaving together the layer and reduced problems
\begin{equation}
\begin{pmatrix}x\\y\end{pmatrix}(t,\tau) = e^{\displaystyle \lambda_-^{(0)}t}\begin{pmatrix}x(0)\\y(0)\end{pmatrix} + x(0)e^{\displaystyle \lambda_+^{(1)}\tau}\begin{pmatrix}1\\ \cfrac{k_1}{k_2}\end{pmatrix}- x(0)\begin{pmatrix}1\\ \cfrac{k_1}{k_2}\end{pmatrix}
\end{equation}
and approximates the exact solution across both the fast and slow timescales. Note that if we prescribe the initial condition to lie on $S_0$, then the solution reduces to
\begin{equation}
\begin{pmatrix}x\\y\end{pmatrix}(\tau) = x(0)e^{\displaystyle \lambda_+^{(1)}\tau}\begin{pmatrix}1\\ \cfrac{k_1}{k_2}\end{pmatrix},
\end{equation}
and depends only on the slow time. 

Now let us move on to the second moment, the variance, which obeys the matrix differential equation,
\begin{equation}\label{Clayer}
\dot{C} = \mathcal{L}_{A_0}(C)+\varepsilon \mathcal{L}_{A_1}(C) + Q_0(x,y) + \varepsilon Q_1(x,y)
\end{equation}
The layer problem is again obtained by setting $\varepsilon =0$ in \eqref{Clayer}. This results in
\begin{multline}
\cfrac{{\rm d}}{{\rm d}t}\begin{pmatrix}{\rm Var}(X) & {\rm Cov}(X,Y)\\{\rm Cov}(X,Y)& {\rm Var}(Y)\end{pmatrix} =\\  \begin{pmatrix}-k_1&\;\;k_2\\\;\;k_1&-k_2\end{pmatrix}\begin{pmatrix}{\rm Var}(X) & {\rm Cov}(X,Y)\\{\rm Cov}(X,Y)& {\rm Var}(X)\end{pmatrix}+\begin{pmatrix}{\rm Var}(X) & {\rm Cov}(X,Y)\\{\rm Cov}(X,Y)& {\rm Var}(X)\end{pmatrix}\begin{pmatrix}-k_1&\;\;k_1\\\;\;k_2&-k_2\end{pmatrix} +\\ (k_1x+k_2y)\begin{pmatrix}\;\;\;1&-1\\-1&\;\;\;1\end{pmatrix}.
\end{multline}
To simplify things, we will assume that the mass action equations are equipped with an initial condition that lies on $S_0$, so we will take $y(0)=k_1x(0)/k_2$. Moreover, we will assume that the initial state of the system is known with exact certainty; these assumptions leave us with the following:
\begin{equation}
Q_0(x,y) = 2k_1x(0)\begin{pmatrix}\;\;\;1&-1\\-1&\;\;\;1\end{pmatrix}, \qquad \quad \begin{pmatrix}{\rm Var}(X) & {\rm Cov}(X,Y)\\{\rm Cov}(X,Y)& {\rm Var}(Y)\end{pmatrix} (0)= \begin{pmatrix}0&0\\0&0\end{pmatrix}.
\end{equation}
The critical manifold for this problem, $\bar{S}_0$, is defined by the set
\begin{equation*}
\bar{S}_0 :=\{M \in S_2(\mathbb{R}): \mathcal{L}_{A_0}(M)=-Q_0(x,y)\},
\end{equation*}
where again $S_2(\mathbb{R})$ denotes the set of real, symmetric $2\times 2$ matrices, and is expressed parametrically as
\begin{equation*}
\bar{S}_0 := \cfrac{k_1x+k_2y}{2(k_1+k_2)}\begin{pmatrix}\;\;\;1&-1\\-1&\;\;\;1\end{pmatrix}+\alpha\begin{pmatrix}1&\cfrac{k_1}{k_2}\\\cfrac{k_1}{k_2}&\cfrac{k_1^2}{k_2^2}\end{pmatrix},\quad \alpha\in \mathbb{R}.
\end{equation*}
\begin{remark}
Observe that $\bar{S}_0$ is not simply $\ker \mathcal{L}_{A_0}$. This is a direct result of $Q_0(x,y)$ being nontrivial; again, see~\citet{JEWS2026}. If diffusion is limited to the slow timescale only, then $Q_0(x,y)$ is trivial and $\bar{S}_0$ will be the subspace $\bar{S}_0=\ker \mathcal{L}_{A_0}(\cdot)$.
\end{remark}

Because we have fixed $(x,y)(0)$ to lie on $S_0$, the solution to the layer problem is straightforward to compute
\begin{equation}
C(t) = \left(1-e^{\displaystyle 2\lambda_-^{(0)}t}\right)\cfrac{k_1x(0)}{k_1+k_2}\begin{pmatrix}\;\;\;1&-1\\-1&\;\;\;1\end{pmatrix},\quad \text{with} \quad \lim_{t\to \infty}C(t) = \cfrac{k_1x(0)}{k_1+k_2}\begin{pmatrix}\;\;\;1&-1\\-1&\;\;\;1\end{pmatrix}.
\end{equation}

Now let us move on to the reduced problem and let $\varepsilon >0$. On the slow timescale, the reduced equation for $x$ is
\begin{equation}
x(\tau) = x(0)e^{\displaystyle \lambda_+^{(1)}\tau}.
\end{equation}
Since the initial condition $x(0)$ decays to zero on the slow timescale, we have to account for this by updating the solution to the layer problem since it depends explicitly on $x$:
\begin{equation}\label{106}
C(t,\tau) = \left(1-e^{\displaystyle 2\lambda_-^{(0)}t}\right)\cfrac{k_1x(0)e^{\displaystyle \lambda_+^{(1)}\tau}}{k_1+k_2}\begin{pmatrix}\;\;\;1&-1\\-1&\;\;\;1\end{pmatrix}.
\end{equation}
We choose to parameterize the entire critical manifold, $S_0 \times \bar{S}_0$ by $x$ and ${\rm Var}(X)$. Thus, we will let $y=k_1x/k_2$ and set $\alpha = {\rm Var}(X)$. To compute a reduction for $C$ that depends parametrically on $x$ and ${\rm Var}(X)$, we substitute the ansatz
\begin{equation*}
C(x,{\rm var}(X))= \cfrac{k_1x}{k_1+k_2}\begin{pmatrix}\;\;\;1&-1\\-1&\;\;\;1\end{pmatrix}+{\rm Var}(X)\begin{pmatrix}1&\cfrac{k_1}{k_2}\\\cfrac{k_1}{k_2}&\cfrac{k_1^2}{k_2^2}\end{pmatrix}
\end{equation*}
into the right hand side of \eqref{Clayer} and then multiply by $P_0$ from the left and $P_0^T$ from the right. This leaves us with
\begin{equation}\label{expC}
C'(x,{\rm Var}(X)) = 2\lambda_+^{(1)}{\rm Var}(X)\begin{pmatrix}1&\cfrac{k_1}{k_2}\\\cfrac{k_1}{k_2}&\cfrac{k_1^2}{k_2^2}\end{pmatrix} + \cfrac{k_1k_3x}{k_1+k_2}\begin{pmatrix}\cfrac{k_2}{k_1+k_2} & \cfrac{k_1}{k_1+k_2} \\\cfrac{k_1}{k_1+k_2} & \cfrac{k_1^2}{k_2(k_1+k_2)}\end{pmatrix}.
\end{equation}

Let us now put \eqref{expC} into the form \eqref{FC}. From \eqref{58} we have 
\begin{equation*}
   \begin{pmatrix}\cfrac{k_2}{k_1+k_2} & \cfrac{k_1}{k_1+k_2} \\\cfrac{k_1}{k_1+k_2} & \cfrac{k_1^2}{k_2(k_1+k_2)}\end{pmatrix} = \begin{pmatrix}\cfrac{{\rm d}x}{{\rm d}z} & \cfrac{{\rm d}y}{{\rm d}z} \\\cfrac{{\rm d}y}{{\rm d}z} & \cfrac{{\rm d}y}{{\rm d}z}\cfrac{{\rm d}y}{{\rm d}x}\end{pmatrix}. 
\end{equation*}
Next, by elementary calculus we have
\begin{equation*}
-\cfrac{{\rm d}x}{{\rm d}t} = \cfrac{k_1k_3x}{k_1+k_2}= -\cfrac{{\rm d}z}{{\rm d}t}\cfrac{{\rm d}x}{{\rm d}z},
\end{equation*}
and therefore
\begin{equation}
\cfrac{k_1k_3x}{k_1 +k_2}\begin{pmatrix}\cfrac{{\rm d}x}{{\rm d}z} & \cfrac{{\rm d}y}{{\rm d}z} \\\cfrac{{\rm d}y}{{\rm d}z} & \cfrac{{\rm d}y}{{\rm d}z}\cfrac{{\rm d}y}{{\rm d}x}\end{pmatrix}  = |\lambda_+^{(0)}|z\begin{pmatrix}\cfrac{{\rm d}x}{{\rm d}z}\cfrac{{\rm d}x}{{\rm d}z} & \cfrac{{\rm d}y}{{\rm d}z}\cfrac{{\rm d}x}{{\rm d}z} \\\cfrac{{\rm d}y}{{\rm d}z}\cfrac{{\rm d}x}{{\rm d}z} & \cfrac{{\rm d}y}{{\rm d}z}\cfrac{{\rm d}y}{{\rm d}z}\end{pmatrix}.
\end{equation}
With $\varepsilon = 0$, we have the following LNA for $X$ and $Y$ (the fluctuations around $x$ and $y$, respectively),
\begin{equation*}
\begin{pmatrix}{\rm d}X\\{\rm d}Y\end{pmatrix} = \begin{pmatrix}-k_1& \;\;k_2\\\;\;\;k_1 & -k_2\end{pmatrix}\begin{pmatrix}X\\Y\end{pmatrix} + \begin{pmatrix}-\sqrt{k_1x}& \sqrt{k_2y}\\\sqrt{k_1x}&-\sqrt{k_2y}\end{pmatrix}\begin{pmatrix}{\rm d}W_1(t)\\{\rm d}W_2(t)\end{pmatrix}.
\end{equation*}
The sum ${\rm d}X+{\rm d}Y=0$, and therefore the slow variable is $Z=X+Y$, where
\begin{equation}
{\rm Var}(Z) = {\rm Var}(X)+{\rm Var}(Y) + 2{\rm Cov}(X,Y).
\end{equation}
However, on $\bar{S}_0$ we have
\begin{equation*}
{\rm Var}(Y) = \cfrac{k_1^2}{k_2^2}{\rm Var}(X) = \left(\cfrac{{\rm d}y}{{\rm d}x}\right)^2{\rm Var}(X), \quad {\rm Cov}(X,Y) = \cfrac{k_1}{k_2}{\rm Var}(X) = \cfrac{{\rm d}y}{{\rm d}x}{\rm Var}(X),
\end{equation*}
and therefore we can reparameterize ${\rm Var}(X)$ in terms of ${\rm Var}(Z)$ as follows:
\begin{equation}
{\rm Var}(Z) = \left(1+\cfrac{{\rm d}y}{{\rm d}x}\right)^2{\rm Var}(X) =\left(\cfrac{k_2+k_1}{k_2}\right)^2{\rm Var}(X) = \left(\cfrac{{\rm d}z}{{\rm d}x}\right)^2{\rm Var}(X).
\end{equation}
Finally, we have 
\begin{multline}
2\lambda_+^{(1)}{\rm Var}(X)\begin{pmatrix}1&\cfrac{k_1}{k_2}\\\cfrac{k_1}{k_2}&\cfrac{k_1^2}{k_2^2}\end{pmatrix} = 2\lambda_+^{(1)}\left(\cfrac{{\rm d}x}{{\rm d}z}\right)^2{\rm Var}(Z)\begin{pmatrix}1 & \cfrac{{\rm d}y}{{\rm d}x}\\ \cfrac{{\rm d}y}{{\rm d}x}& \left(\cfrac{{\rm d}y}{{\rm d}x}\right)^2\end{pmatrix} \\= 2\lambda_+^{(1)}{\rm Var}(Z)\begin{pmatrix}\cfrac{{\rm d}x}{{\rm d}z}\cfrac{{\rm d}x}{{\rm d}z} & \cfrac{{\rm d}y}{{\rm d}z}\cfrac{{\rm d}x}{{\rm d}z} \\\cfrac{{\rm d}y}{{\rm d}z}\cfrac{{\rm d}x}{{\rm d}z} & \cfrac{{\rm d}y}{{\rm d}z}\cfrac{{\rm d}y}{{\rm d}z}\end{pmatrix},
\end{multline}
and therefore the Fenichel reduction of $C$ is formally
\begin{equation}\label{FRF}
C'({\rm Var}(Z),z) = \lambda_+^{(1)}\left[2{\rm Var}(Z) -z\right]\begin{pmatrix}\cfrac{{\rm d}x}{{\rm d}z}\cfrac{{\rm d}x}{{\rm d}z} & \cfrac{{\rm d}y}{{\rm d}z}\cfrac{{\rm d}x}{{\rm d}z} \\\cfrac{{\rm d}y}{{\rm d}z}\cfrac{{\rm d}x}{{\rm d}z} & \cfrac{{\rm d}y}{{\rm d}z}\cfrac{{\rm d}y}{{\rm d}z}\end{pmatrix},\quad \text{with}\;\;z(\tau) = z(0)e^{\displaystyle \lambda_+^{(1)}\tau}
\end{equation}
the solution t which is
\begin{equation}\label{113}
C(\tau) = \left[{\rm Var}(Z)(0)e^{\displaystyle 2\lambda_+^{(1)}\tau} + z(0)\left(e^{\displaystyle \lambda_+^{(1)}\tau}-e^{\displaystyle 2\lambda_+^{(1)}\tau}\right)\right]\begin{pmatrix}\cfrac{{\rm d}x}{{\rm d}z}\cfrac{{\rm d}x}{{\rm d}z} & \cfrac{{\rm d}y}{{\rm d}z}\cfrac{{\rm d}x}{{\rm d}z} \\\cfrac{{\rm d}y}{{\rm d}z}\cfrac{{\rm d}x}{{\rm d}z} & \cfrac{{\rm d}y}{{\rm d}z}\cfrac{{\rm d}y}{{\rm d}z}\end{pmatrix}.
\end{equation}
To construct the composite approximation, $C_{\varepsilon}(t,\tau)$, we simply sum \eqref{113} and \eqref{106}, which reduces to 
\begin{equation}\label{CompCS}
C_{\varepsilon}(t,\tau)=\left(1-e^{\displaystyle 2\lambda_-^{(0)}t}\right)\cfrac{k_1x(0)e^{\displaystyle \lambda_+^{(1)}\tau}}{k_1+k_2}\begin{pmatrix}\;\;\;1&-1\\-1&\;\;\;1\end{pmatrix} + \left[ z(0)\left(e^{\displaystyle \lambda_+^{(1)}\tau}-e^{\displaystyle 2\lambda_+^{(1)}\tau}\right)\right]\begin{pmatrix}\cfrac{{\rm d}x}{{\rm d}z}\cfrac{{\rm d}x}{{\rm d}z} & \cfrac{{\rm d}y}{{\rm d}z}\cfrac{{\rm d}x}{{\rm d}z} \\\cfrac{{\rm d}y}{{\rm d}z}\cfrac{{\rm d}x}{{\rm d}z} & \cfrac{{\rm d}y}{{\rm d}z}\cfrac{{\rm d}y}{{\rm d}z}\end{pmatrix}
\end{equation}
under the assumption that the initial covariance is zero.
Observe that the derivative of the solution \eqref{CompCS} with respect to time is not $\mathcal{O}(\varepsilon)$, despite the fact that \eqref{CompCS} was constructed on the assumption that the initial condition of the mean field (mass action system) lies exactly on $S_0$:
\begin{equation}\label{Cder}
\cfrac{{\rm d}}{{\rm d}t}C_{\varepsilon}(t,\varepsilon t)= -2\lambda_-^{(0)}\cdot\cfrac{k_1x(0)e^{\displaystyle \lambda_+^{(1)}\tau}}{k_1+k_2}\begin{pmatrix}\;\;\;1&-1\\-1&\;\;\;1\end{pmatrix} + \mathcal{O}(\varepsilon).
\end{equation}
The implication here is that although the system was prepared to lie exactly on $S_0$ at time $t=0$, the initial velocity of $C$ is positive and $\mathcal{O}(1)$, which means that we expect to see an initial rapid growth in variance over the fast timescale. This is confirmed numerically; see {\sc figure} \ref{FIG9}.
\begin{figure}[htb!]
    \centering
    \includegraphics[scale=0.50]{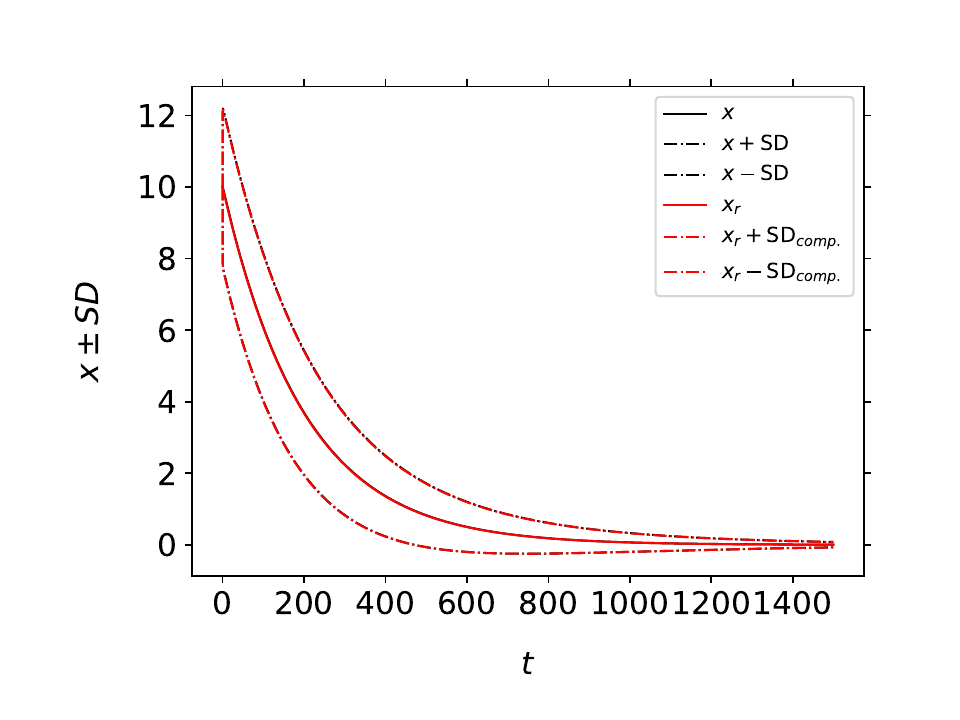}
    \includegraphics[scale=0.50]{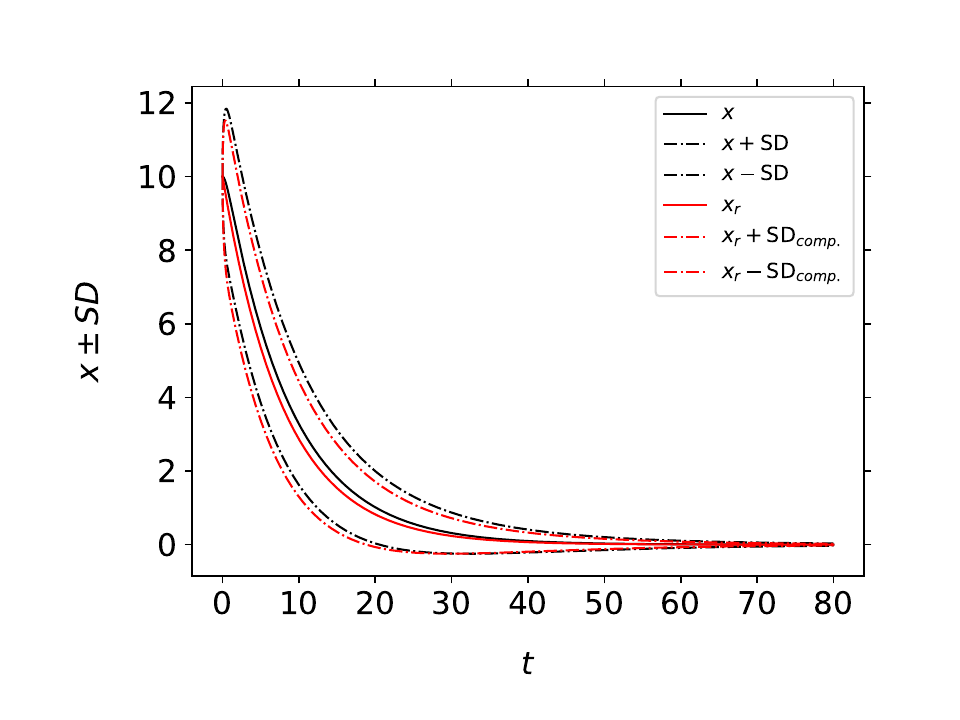}
    \caption{\textbf{The composite solution \eqref{CompCS} provides a highly accurate approximation to the variance as $k_3\to 0$.} In this figure, the solid black line is the mean of $x$ obtained by numerically solving the mass action equations; the dashed/dotted black lines are $\pm$ one standard deviation obtained from the numerical solution to \eqref{Clayer}. The solid red line is the numerical solution to the reduced equation for $x$ given by \eqref{redD} (this solution is by denoted $x_r$ in the legend). The dashed/dotted red curves are $\pm$ one standard deviation obtained from the composite solution \eqref{CompCS}. Observe that although the $C(0)=0^{2\times2}$, there is an initial transient period in which the variance rapidly increases due to the $\mathcal{O}(1)$ term in \eqref{Cder}. Compare this with {\sc figure} \eqref{FIG7}, where the initial increase in variance is slow and which creates a blunted plot near $t=0$. {\sc left panel}: In this simulation, $x(0)=10,y(0)=k_1x(0)/k_2$ and (in arbitrary units) $k_1=k_2=1.0$ and $k_3=0.25$. {\sc right panel:} In this simulation, $x(0)=10,y(0)=k_1x(0)/k_2$ and (in arbitrary units) $k_1=k_2=1.0$ and $k_3=0.01$.}
    \label{FIG9}
\end{figure}

The fact that the velocity of $C(t)$ is not $\mathcal{O}(\varepsilon)$ for $t\geq 0$ is a consequence of Proposition \ref{prop6}: Since $x$ not a slow variable, and since $S_0$ does not coincide with the $x$-axis, diffusion will occur on the fast timescale, even if the mass action equations are constrained to start on $S_0$.

The Fenichel (quasi-steady-state) reduction of the LNA for $x$ from \eqref{FRF} is (compare with \eqref{LN1})
\begin{equation*}
{\rm d}X = -\cfrac{k_3k_1}{k_1+k_2}\cdot X\;{\rm d}\tau- \sqrt{\cfrac{k_3k_1}{k_1+k_2}\cdot\textcolor{black}{\left(\cfrac{k_2}{k_1+k_2}\right)}\cdot x}\;{\rm d}W(\tau)
\end{equation*}
{\it sub-Poissonian} and different from the LNA of the heuristically reduced CME \eqref{RED5}, and it contains no indication that diffusion occurs over fast and slow timescales. Thus, it is impossible for a CME of the form
\begin{equation}
Pr(n,\tau) = \begin{pmatrix}N\\n\end{pmatrix}e^{-\displaystyle n\lambda_+^{(1)}\tau}\left(1-e^{-\displaystyle \lambda_+^{(1)}\tau}\right)^{(N-n)}
\end{equation}
to accurately approximate the variance of $n$ of the full CME since it lacks fast timescale dependence. 

\subsection*{Reductions of the LNA to the Michaelis-Menten reaction mechanism at high substrate concentration}
In this subsection we provide the details concerning the reduction of
\begin{multline}\label{ssL1}
\begin{pmatrix}
{\rm d}X_s\\{\rm d}X_c
\end{pmatrix} = \begin{pmatrix}\;\;\;k_1c & \;\;\;k_1(s+K_S)\\-k_1c & -k_1(s+K_M)
\end{pmatrix}\begin{pmatrix}
X_s\\X_c \end{pmatrix}{\rm d}t+ \begin{pmatrix}-k_1e_0\\\;\;\;k_1e_0 
\end{pmatrix}X_s\;{\rm d}\tau \\+ \begin{pmatrix}-\sqrt{k_1(\varepsilon e_0-c)s}& \sqrt{k_{-1}c}&0\\\sqrt{k_1(\varepsilon e_0-c)s}&-\sqrt{k_{-1}c}&-\sqrt{k_2c}\end{pmatrix}{\rm d}{\bf W}(t).
\end{multline}
To begin, the first-order approximation to the slow manifold is well-known
\begin{equation*}
c= \cfrac{\varepsilon e_0s}{s+K_M} = \varepsilon e_0 + \mathcal{O}(\varepsilon \varepsilon_2).
\end{equation*}
Insertion of $c=\varepsilon e_0$ into \eqref{ssL1} yields
\begin{equation}\label{ssL2}
\begin{pmatrix}
{\rm d}X_s\\{\rm d}X_c
\end{pmatrix} = \begin{pmatrix} 0&\;\;\;k_1(s+K_S)\\0& -k_1(s+K_M)
\end{pmatrix}\begin{pmatrix}
X_s\\X_c \end{pmatrix}\;{\rm d}t+  \sqrt{\varepsilon}\begin{pmatrix}0& \sqrt{k_{-1}e_0}&0\\0&-\sqrt{k_{-1}e_0}&-\sqrt{k_2e_0}\end{pmatrix}{\rm d}{\bf W}(t).
\end{equation}
Next, expand the drift matrix asymptotically
\begin{equation*}
\begin{pmatrix} 0&\;\;\;k_1(s+K_S)\\0& -k_1(s+K_M)
\end{pmatrix} = k_1s\begin{pmatrix} 0&\;\;\;1\\0& -1
\end{pmatrix} + \mathcal{O}(\varepsilon_1,\varepsilon_2),
\end{equation*}
which gives us
\begin{equation}
\begin{pmatrix}
{\rm d}X_s\\{\rm d}X_c
\end{pmatrix}  = k_1s\begin{pmatrix} 0&\;\;\;1\\0& -1
\end{pmatrix}X_c\;{\rm d}t + \mathcal{O}(\varepsilon_1,\varepsilon_2)X_c\;{\rm d}t + \begin{pmatrix}0& \sqrt{k_{-1}e_0}&0\\0&-\sqrt{k_{-1}e_0}&-\sqrt{k_2e_0}\end{pmatrix}{\rm d}{\bf W}(\tau)
\end{equation}
The projection matrix $P_0$ is straightforward to compute
\begin{equation}
P_0 = \begin{pmatrix}1&1\\0&0\end{pmatrix}:\mathbb{R}^2 \to \ker \begin{pmatrix}0 & \;\;\;1\\0&-1\end{pmatrix}.
\end{equation}
Moreover, the critical manifold coincides with the $X_s$ coordinate axis, and therefore we set $X_c=0$. The projection of the diffusion term onto the kernel of the drift matrix is
\begin{equation}
\begin{pmatrix}
{\rm d}X_s\\{\rm d}X_c
\end{pmatrix} = \begin{pmatrix}-\sqrt{k_2e_0}\\0\end{pmatrix}\;{\rm d}W(\tau).
\end{equation}
On a final note, it is important to acknowledge that the drift is only zero up to leading order. In fact, if we include higher order terms in $\varepsilon_1,\varepsilon_2$ we obtain
\begin{equation*}
{\rm d}X_s = -\cfrac{k_2e_0}{s}\cdot X_s\;{\rm d}\tau -\sqrt{k_2e_0}\;{\rm d}W(\tau),
\end{equation*}
but the term $k_2/s\sim \mathcal{O}(\varepsilon_2)$, and therefore when $s$ is sufficiently large, the drift evolves on a much slower timescale than the diffusion. Thus, to leading order, the fluctuations  satisfy a pure diffusion process. 
\newpage
\bibliographystyle{elsarticle-num-names}
\bibliography{CME.bib}

@article {HekGSPT,
    AUTHOR = {Hek, Geertje},
     TITLE = {Geometric singular perturbation theory in biological practice},
   JOURNAL = {J. Math. Biol.},
  FJOURNAL = {Journal of Mathematical Biology},
    VOLUME = {60},
      YEAR = {2010},
    NUMBER = {3},
     PAGES = {347--386},
}

@article{GrimaSecondOrder,
  title = {Linear-noise approximation and the chemical master equation agree up to second-order moments for a class of chemical systems},
  author = {Grima, Ramon},
  journal = {Phys. Rev. E},
  volume = {92},
  issue = {4},
  pages = {042124},
  numpages = {10},
  year = {2015},
}

@misc{greenbaum2019,
      title={First-order Perturbation Theory for Eigenvalues and Eigenvectors}, 
      author={Anne Greenbaum and Ren-Cang Li and Michael L. Overton},
      year={2019},
      eprint={1903.00785},
      archivePrefix={arXiv},
      primaryClass={math.NA},
      url={https://arxiv.org/abs/1903.00785}, 
}

@article{ganguly2025asymptotic,
    AUTHOR = {Ganguly, Arnab and KhudaBukhsh, Wasiur R.},
     TITLE = {Asymptotic analysis of the total quasi-steady state
              approximation for the {M}ichaelis-{M}enten enzyme kinetic
              reactions},
   JOURNAL = {J. Math. Anal. Appl.},
  FJOURNAL = {Journal of Mathematical Analysis and Applications},
    VOLUME = {561},
      YEAR = {2026},
    NUMBER = {1},
     PAGES = {Paper No. 130551, 34}
}

@article{BerglundGentzJDE,
title = {Geometric singular perturbation theory for stochastic differential equations},
journal = {Journal of Differential Equations},
volume = {191},
number = {1},
pages = {1-54},
year = {2003},
author = {Nils Berglund and Barbara Gentz}
}

@book {Eldering,
    AUTHOR = {Eldering, Jaap},
     TITLE = {Normally hyperbolic invariant manifolds},
    SERIES = {Atlantis Studies in Dynamical Systems},
    VOLUME = {2},
      NOTE = {The noncompact case},
 PUBLISHER = {Atlantis Press, Paris},
      YEAR = {2013},
     PAGES = {xii+189},
      ISBN = {978-94-6239-002-7; 978-94-6239-003-4},
   MRCLASS = {37D10},
  MRNUMBER = {3098498}
}

@article {KnoblochCM,
    AUTHOR = {Knobloch, E. and Wiesenfeld, K. A.},
     TITLE = {Bifurcations in fluctuating systems: the center-manifold
              approach},
   JOURNAL = {J. Statist. Phys.},
  FJOURNAL = {Journal of Statistical Physics},
    VOLUME = {33},
      YEAR = {1983},
    NUMBER = {3},
     PAGES = {611--637},
}

@article {Popovic,
    AUTHOR = {Popovi\'c, Nikola and Marr, Carsten and Swain, Peter S.},
     TITLE = {A geometric analysis of fast-slow models for stochastic gene
              expression},
   JOURNAL = {J. Math. Biol.},
  FJOURNAL = {Journal of Mathematical Biology},
    VOLUME = {72},
      YEAR = {2016},
    NUMBER = {1-2},
     PAGES = {87--122}
}

@article{KUEHNstoch,
title = {A mathematical framework for critical transitions: Bifurcations, fast–slow systems and stochastic dynamics},
journal = {Physica D: Nonlinear Phenomena},
volume = {240},
number = {12},
pages = {1020-1035},
year = {2011},
author = {Christian Kuehn}
}

@article {kurtz1978,
    AUTHOR = {Kurtz, Thomas G.},
     TITLE = {Strong approximation theorems for density dependent {M}arkov
              chains},
   JOURNAL = {Stochastic Process. Appl.},
  FJOURNAL = {Stochastic Processes and their Applications},
    VOLUME = {6},
      YEAR = {1977/78},
    NUMBER = {3},
     PAGES = {223--240}
}

@article{Yin2005,
title = {Limit behavior of two-time-scale diffusions revisited},
journal = {Journal of Differential Equations},
volume = {212},
number = {1},
pages = {85-113},
year = {2005},
author = {R.Z. Khasminskii and G. Yin}
}

@book {Wiggins,
    AUTHOR = {Wiggins, Stephen},
     TITLE = {Normally hyperbolic invariant manifolds in dynamical systems},
    SERIES = {Applied Mathematical Sciences},
    VOLUME = {105},
      NOTE = {With the assistance of Gy\"orgy Haller and Igor Mezi\'c},
 PUBLISHER = {Springer-Verlag, New York},
      YEAR = {1994},
     PAGES = {x+193}
}

@article{Yin2004,
author = {Khasminskii, R. Z. and Yin, G.},
title = {On Averaging Principles: An Asymptotic Expansion Approach},
journal = {SIAM Journal on Mathematical Analysis},
volume = {35},
number = {6},
pages = {1534-1560},
year = {2004}
}

@article {Kang2019,
    AUTHOR = {Kang, Hye-Won and KhudaBukhsh, Wasiur R. and Koeppl, Heinz and
              Rempa{\l}a, Grzegorz A.},
     TITLE = {Quasi-steady-state approximations derived from the stochastic
              model of enzyme kinetics},
   JOURNAL = {Bull. Math. Biol.},
  FJOURNAL = {Bulletin of Mathematical Biology. A Journal Devoted to
              Research at the Interface of the Life and Mathematical
              Sciences},
    VOLUME = {81},
      YEAR = {2019},
    NUMBER = {5},
     PAGES = {1303--1336}
}

@article{Janssen1989,
title = {The elimination of fast variables in complex chemical reactions. {I}{I}{I}. Mesoscopic level},
fjournal = {Journal of Statistical Physics},
journal = {J. Stat. Phys.},
volume = {57},
pages = {187-198},
year = {1989},
author = {Janssen, J. A. M.}
}

@article{Burrage2008,
author = {MacNamara,Shev  and Bersani,Alberto M.  and Burrage,Kevin  and Sidje,Roger B. },
title = {Stochastic chemical kinetics and the total quasi-steady-state assumption: Application to the stochastic simulation algorithm and chemical master equation},
journal = {The Journal of Chemical Physics},
volume = {129},
number = {9},
pages = {095105},
year = {2008}
}

@article{Rao2003,
author = {Rao,Christopher V.  and Arkin,Adam P. },
title = {Stochastic chemical kinetics and the quasi-steady-state assumption: Application to the Gillespie algorithm},
fjournal = {The Journal of Chemical Physics},
journal = {J. Chem. Phys.},
volume = {118},
number = {11},
pages = {4999-5010},
year = {2003}
}

@incollection{VKX,
title = {Chapter {X}. {T}HE EXPANSION OF THE {M}ASTER {E}QUATION},
booktitle = {Stochastic Processes in Physics and Chemistry ($3^{\text{rd}}$ Edition)},
publisher = {Elsevier},
address = {Amsterdam},
pages = {244--272},
year = {2007},
series = {North-Holland Personal Library},
author = {N.G. Van Kampen}
}

@article{Thomas2012,
title={The slow-scale linear noise approximation: an accurate, reduced stochastic description of biochemical networks under timescale separation conditions},
author={Thomas, Philipp and Straube, Arthur V. and Grima, Ramon},
fjournal={BMC Systems Biology},
journal={BMC Sys. Biol.},
Volume={6},
Number={1},
pages={39},
year={2012}
}

@article{Fenichel1979,
    AUTHOR = {Fenichel, Neil},
     TITLE = {Geometric singular perturbation theory for ordinary
              differential equations},
   JOURNAL = {J. Differ. Equations},
  FJOURNAL = {Journal of Differential Equations},
    VOLUME = {31},
      YEAR = {1979},
     PAGES = {53--98}
}

@article{Fenichel1971,
    AUTHOR = {Fenichel, Neil},
     TITLE = {Persistence and smoothness of invariant manifolds for flows},
   JOURNAL = {Indiana Univ. Math. J.},
  FJOURNAL = {Indiana University Mathematics Journal},
    VOLUME = {21},
      YEAR = {1971/72},
     PAGES = {193--226}
}

@article{Borghans1996,
title={Extending the quasi-steady state approximation by changing variables},
  author={Borghans, Jos{\'e} A. M. and De Boer, Rob J. and Segel, Lee A.},
  fjournal={Bulletin of Mathematical Biology},
  journal = {Bull. Math. Biol.},
  volume={58},
  pages={43--63},
  year={1996},
  publisher={Springer}
}

@article{Segel1989,
   AUTHOR = {Segel, L. A. and Slemrod, M.},
    TITLE = {The Quasi-Steady-State Assumption: {A} case study in perturbation},
  JOURNAL = {SIAM Rev.},
 FJOURNAL = {SIAM Review. A publication of the {S}ociety for {I}ndustrial and
             {A}pplied {M}athematics},
   VOLUME = {31},
     YEAR = {1989},
    PAGES = {446--477},
    CODEN = {SIREAD},
  MRCLASS = {92B05 (34E15 92C45)},
 MRNUMBER = {92f:92003},
   MRREVR = {J. C. Misra}
}

@article{Heineken1967,
   AUTHOR = {Heineken, F. G. and Tsuchiya, H. M. and Aris, R.},
    TITLE = {On the mathematical status of the pseudo-steady hypothesis of
             biochemical kinetics},
  JOURNAL = {Math. Biosci.},
 FJOURNAL = {Mathematical Biosciences},
   VOLUME = {1},
     YEAR = {1967},
    PAGES = {95--113},
    CODEN = {MABIAR}
}

@article{Segel1988,
    AUTHOR = {Segel, Lee A.},
     TITLE = {On the validity of the steady state assumption of enzyme kinetics},
   JOURNAL = {Bull. Math. Biol.},
  FJOURNAL = {Bulletin of Mathematical Biology},
    VOLUME = {50},
      YEAR = {1988},
     PAGES = {579--593}
}

@article{Schnell1997,
  title={Closed form solution for time-dependent enzyme kinetics},
  author={Schnell, Santiago and Mendoza, Claudio},
  fjournal={Journal of Theoretical Biology},
  journal={J. Theor. Biol.},
  volume={187},
  pages={207--212},
  year={1997},
  publisher={Elsevier}
}

@article{Goeke2012,
    AUTHOR = {Goeke, Alexandra and Schilli, Christian and Walcher, Sebastian
              and Zerz, Eva},
     TITLE = {Computing quasi-steady state reductions},
   JOURNAL = {J. Math. Chem.},
  FJOURNAL = {Journal of Mathematical Chemistry},
    VOLUME = {50},
      YEAR = {2012},
     PAGES = {1495--1513}
}

@article{Goeke2015,
    AUTHOR = {Goeke, Alexandra and Walcher, Sebastian and Zerz, Eva},
     TITLE = {Determining ``small parameters'' for quasi-steady state},
   JOURNAL = {J. Differ. Equations.},
  FJOURNAL = {Journal of Differential Equations},
    VOLUME = {259},
      YEAR = {2015},
     PAGES = {1149--1180}
}

@InProceedings{Goeke2013,
 author={Goeke, Alexandra
 and Walcher, Sebastian},
 editor={Johann, Andreas
 and Kruse, Hans-Peter
 and Rupp, Florian
 and Schmitz, Stephan},
 title={Quasi-Steady State: Searching for and Utilizing Small Parameters},
 booktitle={Recent Trends in Dynamical Systems},
 year={2013},
 publisher={Springer Basel},
 address={Basel},
 pages={153--178}
}

@article{Noethen2011,
    AUTHOR = {Noethen, Lena and Walcher, Sebastian},
     TITLE = {Tikhonov's theorem and quasi-steady state},
   JOURNAL = {Discrete Contin. Dyn. Syst. Ser. B},
  FJOURNAL = {Discrete and Continuous Dynamical Systems Series B},
    VOLUME = {16},
      YEAR = {2011},
     PAGES = {945--961}
}

@book {kuehn2015,
    AUTHOR = {Kuehn, Christian},
     TITLE = {Multiple time scale dynamics},
    SERIES = {Applied Mathematical Sciences},
    VOLUME = {191},
 PUBLISHER = {Springer},
      YEAR = {2015},
     PAGES = {xiv+814},
   MRCLASS = {34-02 (34-01 34Exx 37-02 37C10 37Gxx)},
  MRNUMBER = {3309627},
MRREVIEWER = {Tewfik Sari}
}

@book{Wechselberger2020,
    AUTHOR = {Wechselberger, Martin},
     TITLE = {Geometric Singular Perturbation Theory Beyond the Standard Forms},
     SERIES = {Frontiers in Applied dynamical systems: Tutorials and Reviews},
     NUMBER = {6},
 PUBLISHER = {Springer},
      YEAR = {2020}
}

@article{Kim2020,
    author = {Kim, Jae Kyoung AND Tyson, John J.},
    fjournal = {PLoS Computational Biology},
    Journal = {PLoS Comp. Biol.},
    publisher = {Public Library of Science},
    title = {Misuse of the {M}ichaelis–-{M}enten rate law for protein interaction networks and its remedy},
    year = {2020},
    month = {10},
    volume = {16},
    pages = {1-21},
    number = {10}
}

@article{Agarwal2012,
author = {Agarwal,Animesh  and Adams,Rhys  and Castellani,Gastone C.  and Shouval,Harel Z. },
title = {On the precision of quasi steady state assumptions in stochastic dynamics},
journal = {The Journal of Chemical Physics},
volume = {137},
number = {4},
pages = {044105},
year = {2012}
}

@article{Thomas2011,
author = {Thomas, Philipp and Straube, Arthur V. and Grima, Ramon },
title = {Communication: Limitations of the stochastic quasi-steady-state approximation in open biochemical reaction networks},
fjournal = {The Journal of Chemical Physics},
journal = {J. Chem. Phys.},
volume = {135},
number = {18},
pages = {181103},
year = {2011}
}

@article{Holehouse,
author = {Holehouse,James  and Sukys,Augustinas  and Grima,Ramon },
title = {Stochastic time-dependent enzyme kinetics: Closed-form solution and transient bimodality},
fjournal = {The Journal of Chemical Physics},
journal={J. Chem. Phys.},
volume = {153},
number = {16},
pages = {164113},
year = {2020}
}

@article{Herath,
author = {Herath, Narmada  and Del Vecchio,Domitilla },
title = {Reduced linear noise approximation for biochemical reaction networks with time-scale separation: The stochastic t{Q}{S}{S}{A}$^+$},
journal = {J. Chem. Phys.},
fjournal = {The Journal of Chemical Physics},
volume = {148},
number = {9},
pages = {094108},
year = {2018}
}

@article{JSEssLNA,
	Author = {Eilertsen, Justin and Srivastava, Kashvi and Schnell, Santiago},
	Journal = {Journal of Mathematical Biology},
	Number = {1},
	Pages = {3},
	Title = {Stochastic enzyme kinetics and the quasi-steady-state reductions: Application of the slow scale linear noise approximation {\`a} la Fenichel},
	Volume = {85},
	Year = {2022},
 }

@book{RobertsBook,
    AUTHOR = {Roberts, A. J.},
     TITLE = {Model emergent dynamics in complex systems},
    SERIES = {Mathematical Modeling and Computation},
    VOLUME = {20},
 PUBLISHER = {Society for Industrial and Applied Mathematics (SIAM),
              Philadelphia, PA},
      YEAR = {2015},
     PAGES = {xii+748}
}

@article{ThomasPO,
  title = {Rigorous elimination of fast stochastic variables from the linear noise approximation using projection operators},
  author = {Thomas, Philipp and Grima, Ramon and Straube, Arthur V.},
  journal = {Phys. Rev. E},
  volume = {86},
  issue = {4},
  pages = {041110},
  numpages = {9},
  year = {2012}
}

@article{Parsons2017,
	year = 2017,
	volume = {50},
	number = {41},
	pages = {415601},
	author = {Todd L Parsons and Tim Rogers},
	title = {Dimension reduction for stochastic dynamical systems forced onto a manifold by large drift: a constructive approach with examples from theoretical biology},
	fjournal = {Journal of Physics A: Mathematical and Theoretical},
        Journal = {J. Phys. A Math. Theor.}
}

@article{KIM2015,
author = {Kim, Jae Kyoung and Josić, Krešimir and Bennett, Matthew R.},
title = {The relationship between stochastic and deterministic quasi-steady state approximations},
fjournal = {BMC Systems Biology},
journal = {BMC Syst. Biol.},
volume = {9},
pages = {87},
year = {2015}
}

@article{Mastny2007,
author = {Mastny, Ethan A.  and Haseltine, Eric L.  and Rawlings, James B. },
title = {Two classes of quasi-steady-state model reductions for stochastic kinetics},
fjournal = {The Journal of Chemical Physics},
journal = {J. Chem. Phys.},
volume = {127},
number = {9},
pages = {094106},
year = {2007}
}

@article{KangKim2017,
author = {Kim, Jae Kyoung and Rempala, Grzegorz A. and Kang, Hye-Won},
title = {Reduction for Stochastic Biochemical Reaction Networks with Multiscale Conservations},
fjournal = {Multiscale Modeling \& Simulation},
journal = {Multiscale. Model. Simul.},
volume = {15},
number = {4},
pages = {1376-1403},
year = {2017}
}

@article{KIM2014,
title = {The Validity of Quasi-Steady-State Approximations in Discrete Stochastic Simulations},
fjournal = {Biophysical Journal},
journal = {Biophys. J.},
volume = {107},
number = {3},
pages = {783 - 793},
year = {2014},
author = {Jae Kyoung Kim and Krešimir Josić and Matthew R. Bennett}
}

@article{GILLESPIECME,
title = {A rigorous derivation of the chemical master equation},
fjournal = {Physica A: Statistical Mechanics and its Applications},
journal = {Physica A},
author = {Gillespie, D. T.},
volume = {188},
pages = {404--425},
year = {1992}
}

@article{kang2013,
author = {Kang, Hye-Won and Kurtz, Thomas G.},
fjournal = {Annals of Applied Probability},
journal = {Ann. Appl. Probab.},
number = {2},
pages = {529--583},
publisher = "The Institute of Mathematical Statistics",
title = {Separation of time-scales and model reduction for stochastic reaction networks},
volume = {23},
year = {2013}
}

@article{Sanft,
author = {Sanft, K.R.  and Gillespie, D. T.  and Petzold, L. R. },
title = {The legitimacy of the stochastic {M}ichaelis--{M}enten approximation},
fjournal = {IET Systems Biology},
journal = {IET Syst. Biol.},
volume = {5},
pages = {58--69},
year = {2011},
}

@article{Tyson2008,
author ={Barik, Debashis and Paul, Mark R. and Baumann, William T. and Cao, Yang and Tyson, John J.},
title={Stochastic Simulation of Enzyme-Catalyzed Reactions with Disparate Timescales},
fjournal={Biophysical Journal},
journal={Biophys. J.},
year={2008},
volume={95},
number={8},
page={3563--3574}
}

@article{TURNER200,
title = {Stochastic approaches for modelling in vivo reactions},
journal = {Computational Biology and Chemistry},
volume = {28},
number = {3},
pages = {165-178},
year = {2004},
author = {T.E. Turner and S. Schnell and K. Burrage},
}

@article{PAHLAJANI201196,
title = {Stochastic reduction method for biological chemical kinetics using time-scale separation},
fjournal = {Journal of Theoretical Biology},
journal = {J. Theo. Biol.},
volume = {272},
number = {1},
pages = {96-112},
year = {2011},
author = {Chetan D. Pahlajani and Paul J. Atzberger and Mustafa Khammash}
}

@article{Katz,
    AUTHOR = {Katzenberger, G. S.},
     TITLE = {Solutions of a stochastic differential equation forced onto a
              manifold by a large drift},
   JOURNAL = {Ann. Probab.},
  FJOURNAL = {The Annals of Probability},
    VOLUME = {19},
      YEAR = {1991},
     PAGES = {1587--1628}
}

@article{Lax2020,
title = {Singular perturbations and scaling},
journal = {Discrete and Continuous Dynamical Systems - B},
volume = {25},
number = {1},
pages = {1-29},
year = {2020},
author = {Christian Lax and Sebastian Walcher}
}

@article{GrimaBreakdown,
  title = {Noise-Induced Breakdown of the {M}ichaelis-{M}enten Equation in Steady-State Conditions},
  author = {Grima, R.},
  journal = {Phys. Rev. Lett.},
  volume = {102},
  issue = {21},
  pages = {218103},
  numpages = {4},
  year = {2009},
}

@article {Roberts1996,
    AUTHOR = {Xu, Chao and Roberts, A. J.},
     TITLE = {On the low-dimensional modelling of {S}tratonovich stochastic
              differential equations},
   JOURNAL = {Phys. A},
  FJOURNAL = {Physica A},
    VOLUME = {225},
      YEAR = {1996},
    NUMBER = {1},
     PAGES = {62--80}
}

@article {Roberts2013,
    AUTHOR = {Wang, W. and Roberts, A. J.},
     TITLE = {Slow manifold and averaging for slow-fast stochastic
              differential system},
   JOURNAL = {J. Math. Anal. Appl.},
  FJOURNAL = {Journal of Mathematical Analysis and Applications},
    VOLUME = {398},
      YEAR = {2013},
    NUMBER = {2},
     PAGES = {822--839}
}

@article {Wilhelm2000,
    AUTHOR = {Schneider, Klaus R. and Wilhelm, Thomas},
     TITLE = {Model reduction by extended quasi-steady-state approximation},
   JOURNAL = {J. Math. Biol.},
  FJOURNAL = {Journal of Mathematical Biology},
    VOLUME = {40},
      YEAR = {2000},
    NUMBER = {5},
     PAGES = {443--450}
}

@article {OthmerI,
    AUTHOR = {Lee, Chang Hyeong and Othmer, Hans G.},
     TITLE = {A multi-time-scale analysis of chemical reaction networks.
              {I}. {D}eterministic systems},
   JOURNAL = {J. Math. Biol.},
  FJOURNAL = {Journal of Mathematical Biology},
    VOLUME = {60},
      YEAR = {2010},
    NUMBER = {3},
     PAGES = {387--450}
}

@article {FirstOrder,
    AUTHOR = {Eilertsen, Justin and Schnell, Santiago and Walcher,
              Sebastian},
     TITLE = {The {M}ichaelis-{M}enten reaction at low substrate
              concentrations: pseudo-first-order kinetics and conditions for
              timescale separation},
   JOURNAL = {Bull. Math. Biol.},
  FJOURNAL = {Bulletin of Mathematical Biology. A Journal Devoted to
              Research at the Interface of the Life and Mathematical
              Sciences},
    VOLUME = {86},
      YEAR = {2024},
    NUMBER = {6},
     PAGES = {Paper No. 68, 13}
}

@article{Lapuz,
author = {Lapuz, Timothy Earl Figueroa and Wechselberger, Martin},
title = {Coordinate-Independent Model Reductions of Chemical Reaction Networks Based on Geometric Singular Perturbation Theory},
journal = {SIAM Journal on Life Sciences},
volume = {1},
number = {2},
pages = {163-201},
year = {2026}
}

@Inbook{Jones1995,
author={Jones, Christopher K. R. T.},
title={Geometric singular perturbation theory},
bookTitle={Dynamical Systems: Lectures Given at the 2nd Session of the Centro Internazionale Matematico Estivo (C.I.M.E.) held in Montecatini Terme, Italy, June 13--22, 1994},
year={1995},
publisher={Springer Berlin Heidelberg},
address={Berlin, Heidelberg},
pages={44--118}
}

@article {JEWS2026,
    AUTHOR = {Eilertsen, Justin and Stroberg, Wylie},
     TITLE = {On the reduction of stochastic chemical reaction networks},
   JOURNAL = {J. Math. Biol.},
  FJOURNAL = {Journal of Mathematical Biology},
    VOLUME = {92},
      YEAR = {2026},
    NUMBER = {1},
     PAGES = {Paper No. 1, 35}
}

@inbook{Cracium,
author = {Pantea, Casian and Gupta, Ankur and Rawlings, James and Craciun, Gheorghe},
year = {2014},
month = {10},
pages = {419-442},
title = {The QSSA in Chemical Kinetics: As Taught and as Practiced},
journal = {Natural Computing Series}
}
\end{document}